  \providecommand\BibTeX{{%
    \normalfont B\kern-0.5em{\scshape i\kern-0.25em b}\kern-0.8em\TeX}}}
\def\BibTeX{{\rm B\kern-.05em{\sc i\kern-.025em b}\kern-.08em
    T\kern-.1667em\lower.7ex\hbox{E}\kern-.125emX}}
\newcommand{\thmnamerestated}{placeholder}
\renewcommand{\pod}{Po$d$}
\newcommand{\expect}{\mathbb{E}}
\renewcommand{\Pr}{\mathbb{P}}
\newcommand{\terml}{\mathcal{T}_1}
\newcommand{\termr}{\mathcal{T}_2}
\newcommand{\probdrop}{p_{\mathrm{d}}}
\newcommand{\probcond}{p_{\mathrm{c}}}
\begin{document}

\title[Achieving Zero Asymptotic Queueing Delay for Parallel Jobs]
{Achieving Zero Asymptotic Queueing Delay for Parallel Jobs}
\author{Wentao Weng}
\email{wwt17@mails.tsinghua.edu.cn}
\affiliation{%
    \institution{Institute for Interdisciplinary Information Sciences, Tsinghua University}
    \state{Beijing}
    \postcode{100084}
    \country{China}
}
\author{Weina Wang}
\email{weinaw@cs.cmu.edu}
\affiliation{%
    \institution{Computer Science Department, Carnegie Mellon University}
    \state{PA}
    \postcode{15213-3890}
    \country{USA}
}
\begin{abstract}
Zero queueing delay is highly desirable in large-scale computing systems. Existing work has shown that it can be asymptotically achieved by using the celebrated Power-of-$d$-choices (\pod) policy with a probe overhead $d = \omega\left(\frac{\log N}{1-\lambda}\right)$, and it is impossible when $d = O\left(\frac{1}{1-\lambda}\right)$, where $N$ is the number of servers and $\lambda$ is the load of the system. However, these results are based on the model where each job is an \emph{indivisible} unit, which does not capture the parallel structure of jobs in today's predominant parallel computing paradigm.

This paper thus considers a model where each job consists of a batch of parallel tasks. Under this model, we propose a new notion of zero (asymptotic) queueing delay that requires the job delay under a policy to approach the job delay given by the max of its tasks' service times, i.e., the job delay assuming its tasks entered service right upon arrival.  This notion quantifies the effect of queueing on a \emph{job level} for jobs consisting of multiple tasks, and thus deviates from the conventional zero queueing delay for single-task jobs in the literature.

We show that zero queueing delay for parallel jobs can be achieved using the \emph{batch-filling policy} (a variant of the celebrated \pod\ policy) with a probe overhead $d = \omega\left(\frac{1}{(1-\lambda)\log k}\right)$ in the sub-Halfin-Whitt heavy-traffic regime, where $k$ is the number of tasks in each job { and $k$ properly scales with $N$ (the number of servers)}. This result demonstrates that for \emph{parallel jobs}, zero queueing delay can be achieved with a smaller probe overhead. We also establish an impossibility result: we show that zero queueing delay cannot be achieved if $d = \exp\left({o\left(\frac{\log N}{\log k}\right)}\right)$.
Simulation results are provided to demonstrate the consistency between numerical results and theoretical results under reasonable settings, and to investigate gaps in the theoretical analysis.

\end{abstract}

\maketitle

\section{Introduction}\label{sec:intro}
In view of the rise in the amount of latency-critical workloads in today's datacenters \cite{b36,b34}, load-balancing policies with ultra-low latency have attracted great attention (see, e.g., \cite{b26,b10,b11,b5,LiuYin_20}).  In particular, it is highly desirable to have a policy under which the delay due to queueing is minimal.

In a classical setting of load-balancing, the celebrated greedy policy, Join-the-Shortest-Queue (JSQ), achieves a minimal queueing delay in the sense that the queueing delay is \emph{diminishing} as the system becomes large, even in heavy-traffic regimes \cite{b16,b25,b26}.  Therefore, we say that JSQ achieves a \emph{zero (asymptotic) queueing delay}.  Specifically, consider a system with $N$ servers where jobs arrive into the system following a Poisson process.   Each server has its own queue and serves jobs in the queue in a First-Come-First-Serve manner.  Under JSQ, each incoming job will be assigned to a server with the shortest queue length.  Then the expected time (in steady state) a job spends in the queue \emph{before} entering service goes to zero as $N$ goes to infinity.

However, a drawback of JSQ is that it has a high communication overhead, which can cancel out its advantage of achieving zero queueing delay.  For assigning each job, JSQ requires the knowledge of the queue-length information of all the $N$ servers, which will be referred to as having a \emph{probe overhead} of $N$.  In a typical cluster of servers, $N$ is in the tens of thousands range, resulting in intolerable delay due to communication \cite{b36,b34}.

A load-balancing algorithm that provides tradeoffs between queueing delay and communication overhead is the Power-of-$d$-choices (\pod) policy \cite{b18,b17}.  For each incoming job, \pod\ selects $d$ queues out of $N$ queues uniformly at random, and assigns the job to a shortest queue among the $d$ selected queues.  Therefore, \pod\ has a probe overhead of $d$.  It is easy to see that when $d=N$, \pod\ coincides with JSQ, thus achieving a zero queueing delay.  However, a fundamental question is: \emph{Can zero queueing delay be achieved by \pod\ with a $d$ value smaller than $N$?  Or, what is the smallest $d$ for achieving zero queueing delay?}

This question has been recently answered in a line of research \cite{b26,b11,b5,LiuYin_20}.  In particular, the following results are the most relevant to our paper.  Suppose the job arrival rate is $N\lambda$ and job service times are exponentially distributed with rate $1$.  Then the load of the system is $\lambda$.  Consider a heavy-traffic regime with $\lambda=1-\beta N^{-\alpha}$, where $\alpha$ and $\beta$ are constants with $0<\beta\le 1$ and $0<\alpha <1$.  It has been shown that \pod\ achieves zero queueing delay when $d=\Omega\left(\frac{\log N}{1-\lambda}\right)$ for $\alpha\in(0,0.5)$ and when $d=\Omega\left(\frac{\log^2 N}{1-\lambda}\right)$ for $\alpha\in[0.5,1)$; and it does not have zero queueing delay when $d=O\left(\frac{1}{1-\lambda}\right)$.
However, although these prior results provide great insights into achieving zero queueing delay, they are all for the classical setting where each job is an indivisible unit.

In today's applications, parallel computing is becoming increasingly popular to support the rapidly growing data volume and computation demands, especially in large scale clusters that support data-parallel frameworks such as \cite{Vavilapalli_2013,Zaharia_2010}. A job with a parallel structure is no longer a single unit, but rather has multiple components that can run in parallel.
{
In particular, the vast number of data analytic and scientific computing workloads are parallel or embarrassingly parallel \cite{Ousterhout_2013,b34, Jonas_2017}. Additional application examples include data replications in distributed file systems \cite{Lakshman_2010,Decandia2007} and hyper-parameter tuning and Monte-Carlo search in machine learning \cite{Jonas_2017,Neiswanger_2013}. 
}

In this paper, inspired by this emerging paradigm of parallel computing, we revisit the fundamental question on the minimum probe overhead needed for achieving zero queueing delay, and answer it under parallelism.
To capture the parallel structure, we consider a model where each job consists of $k$ tasks that can run on different servers in parallel.  We assume that task service times are independent and exponentially distributed with rate $1$.
{
Under such a model, we focus on delay performance on a job level, i.e., we are interested in \emph{job delay}, which is the time from when a job arrives until all of its tasks are completed.  We choose this performance metric since usually a job is a meaningful unit for users.
In fact, minimizing the delay of jobs, rather than the delay of their tasks, is the design goal of many practical schedulers \cite{Gog_2016,Boutin_2014,b34,Delimitrou_2015}.
}


{
We reiterate that we consider the asymptotic regime that $N\to\infty$.  We assume that $k$, the number of tasks per job, properly scales with $N$.}

\subsubsection*{\textbf{Zero queueing delay for parallel jobs}}
The term ``zero queueing delay'' is usually used to refer to the regime where the delay due to queueing is minimal, i.e., where jobs barely wait behind each other and are thus only subject to delay due to their inherent sizes.  In the non-parallel model, it is clear that the delay due to queueing for a job is just the time a job spends waiting in the queue.  However, when a job consists of \emph{multiple} tasks, quantifying the delay due to queueing is more complicated since different tasks experience different queueing times.

In this paper, we propose the following notion of zero queueing delay for parallel jobs.  Let $X_1,X_2,\dots,X_k$ denote the service times of a job's $k$ tasks.  Then if a job does not experience any queueing, its delay is given by $T^*=\max\{X_1,X_2,\dots,X_k\}$.  This is the job delay when all the tasks of the job enter service immediately, so we call it the \emph{inherent delay}.
{
Note that here the inherent delay is \emph{not} the total size of all the tasks of a job, but rather the delay of the job when it is parallelized.
}
Let $T$ denote the delay of a job in steady state under a load-balancing policy.  Then the delay due to queueing can be characterized by the difference $\expect[T-T^*]$.  We say zero queueing delay is achieved if
\begin{equation}\label{eq:zero-queue-delay}
\frac{\expect[T-T^*]}{\expect[T^*]}\rightarrow 0\quad \text{as }N\to\infty,
\end{equation}
i.e., the queueing delay takes a diminishing fraction of the inherent delay.

Our notion of zero queueing delay recovers the conventional notion for non-parallel jobs when $k=1$.  However, it is different from the requirement that under the parallel job model, all the tasks of a job should have zero queueing delay.  Such a requirement is rather strong since all the tasks would need to be assigned to empty queues \emph{simultaneously}.  We will discuss this alternative notion in more detail in Section~\ref{sec:alter}.

\subsubsection*{\textbf{Probe overhead and batch-filling policy}}
When a job arrives into the system, a task-assigning policy samples some queues to obtain their queue length information, and then decides how to assign the $k$ tasks to the sampled servers.  If the policy samples $kd$ queues, then we say its \emph{probe overhead} \cite{b3,b34} is $d$ since $d$ is the average number of samples per task.

In this paper, we focus on a policy called \emph{batch-filling}.  It samples $kd$ queues for an incoming job and then assigns its tasks one by one to the shortest queue, where the queue length is updated after every task assignment.  Batch-filling has been shown to outperform the per-task version of \pod\ and also a policy called batch-sampling \cite{b3,b34}.

{
Note that the queueing dynamics under batch-filling with a probe overhead of $d$ is also very different from that under the policy that runs Po-$kd$ for each task, although in both polices a task gets to join the shortest queue among a set of $kd$ queues.  For this per-task version of Po-$kd$, tasks of the same job pick their own $kd$ queues independently.  Then it could happen that some task picks $kd$ lightly loaded servers while some other task lands in $kd$ highly busy servers.  While under batch-filling, all the tasks in a job experience the same set of $kd$ servers.  Therefore, the analyses of batch-filling and per-task Po-$kd$ will be very different.
}

\subsubsection*{\textbf{Challenges and our results}}

The parallel structure of jobs makes a load-balancing system more challenging to analyze in the following two aspects: 
(i) The delay of an incoming job in steady state (tagged job) depends on the system state (queue lengths) in a more intricate way since its tasks may be assigned to different queues.  (ii) The dynamics of the system state is complicated by the simultaneous arrival of a batch of tasks and the coordination in assigning tasks.
{
Due to these intricacies, existing techniques for analyzing non-parallel models do not directly carry over to parallel models.
}

We address these difficulties by first deriving a sufficient condition on the state for an incoming job to achieve zero queueing delay.  Notably, this condition involves all the servers whose queue lengths range from zero to a threshold that is in the order of $o(\log k)$.  This is in contrast to the condition for the non-parallel model, which only depends on the fraction of idle servers.  Based on this first step, we recognize that we only need to understand the system dynamics in terms of whether the steady state concentrates around the set of desirable states that satisfy the sufficient condition.  Towards this end, a key in our analysis is an interesting state-space collapse result we discover, which enables us to use the powerful framework of Stein's method \cite{b2,b14}.


Specifically, we consider a system with a job arrival rate of $N\lambda/k$.  We focus on a heavy-traffic regime where the load $\lambda=1-\beta N^{-\alpha}$ with $0<\beta \le 1$ and $0<\alpha<0.5$, i.e., the sub-Halfin-Whitt regime.  Note that the larger $\alpha$ is, the faster the load approaches $1$ as $N\to\infty$.  All the order notation and asymptotic results in this paper are with respect to the regime that $N\to\infty$.

Our main result is that zero queueing delay is \emph{achievable} when the probe overhead $d$ satisfies
\begin{equation}\label{eq:d-min}
d=\omega\left(\frac{1}{(1-\lambda)\log k}\right),
\end{equation}
where the number of tasks $k$ satisfies $k=o\left(\frac{N^{0.5-\alpha}}{\log^2 N}\right)$ and $\frac{k}{\log k} = \Omega(\log N)$.  For example, this includes $k=\log^2 N$ and $k=N^{0.1}$ with $\alpha<0.4$.
Recall that for the \emph{non-parallel} model, a lower bound result is that zero queueing \emph{cannot} be achieved when the probe overhead is $O\left(\frac{1}{1-\lambda}\right)$.  In contrast, we can see that for \emph{parallel} jobs, the probe overhead in \eqref{eq:d-min} can be orderly smaller than $\frac{1}{1-\lambda}$.

We comment that this reduction in probe overhead reflects the overall effect of parallelization on the system.
There are several factors at play that are brought by parallelization all together, making it hard to quantify their individual effects.  First, for tasks of the same job, the probe overhead quota is pooled together and their assignment is coordinated, leading to a more effective use of the state information.  Second, a job with parallel tasks can better tolerate task delays since the job delay is anyway determined by the slowest task.  Furthermore, work arrives to the system in a more bursty fashion under parallelization due to the batch effect.
{
Such an effect of parallelization has also been investigated in some recent papers \cite{b38,ShnSto_20}.  But generally, understanding parallel jobs is a much underexplored research area.
}

To complement our achievability results, we also prove an impossibility result on the minimum probe overhead needed: zero queueing delay can not be achieved if
\begin{equation}
d=e^{o\left(\frac{\log N}{\log k}\right)},
\end{equation}
where $k$ satisfies that $k = e^{o\left(\sqrt{\log N}\right)}$ and $k = \omega(1)$.  To establish this lower bound, we utilize the tail bound given by a Lyapunov function in a ``reversed'' way.

To the best of our knowledge, our paper is the first one that characterizes zero queueing delay on a job level for jobs with parallel tasks.  The very limited amount of prior work that does study parallel jobs only has fluid-level optimality and only considers a constant load. Furthermore, we develop a new technique for lower-bounding queues, which may be of separate interest itself given the scarcity of lower-bounding techniques in queueing systems in general.

\subsubsection*{\textbf{A reminder of Bachmann--Landau asymptotic notation}}
Since Bachmann--Landau asymptotic notation is heavily used in this paper, here we briefly recap the definitions for ease of reference.  For two real-valued functions $f$ and $g$ of $N$ where $g$ takes positive values, we say that $f(N) = O(g(N))$ if there exists a positive number $M$ such that $|f(N)|\le M\cdot g(N)$ for large enough $N$, or equivalently if $\lim \sup_{N \to \infty}\left|\frac{f(N)}{g(N)}\right| < \infty$.  We say that $f(N) = o(g(N))$ if $\lim_{N \to \infty} \frac{f(N)}{g(N)} = 0$; $f(N) = \Omega(g(N))$ if $\lim \inf_{N \to \infty} \frac{f(N)}{g(N)} > 0$; and $f(N) = \omega(g(N))$ if $\lim \inf_{N \to \infty} \left|\frac{f(N)}{g(N)}\right| = \infty$. In this paper, the asymptotic regime is when $N$, the number of servers, goes to infinity.

\subsubsection*{\textbf{Related work}}
Load-balancing systems for \emph{non-parallel} jobs have been extensively studied in the literature.  It is well-known that JSQ is delay-optimal under a wide range of assumptions \cite{b16,b25}.  Although getting exact-form stationary distributions is typically not feasible for most load-balancing policies, many results and approximations are known for various asymptotic regimes.  

For JSQ in heavy-traffic regimes, \citet{b12} obtain a diffusion approximation in the Halfin-Whitt regime ($\alpha = 0.5$), which has a zero queueing delay in the diffusion limit.  The convergence result in \cite{b12} is on the process level.  \citet{b24} later establish steady-state results and their results imply the convergence of the stationary distributions to the diffusion limit.  JSQ has also been studied in the nondegenerate slowdown (NDS) regime ($\alpha=1$) \cite{b37}.

The problem of achieving zero queueing delay with \pod\ has been studied in \cite{b26,b11,b5,LiuYin_20}.  \citet{b26} show through stochastic coupling that the diffusion limit of \pod\ with $d=\omega(N^{0.5}\log N)$ converges to that of JSQ in the Halfin-Whitt regime, thus resulting in a zero queueing delay.  The convergence to the diffusion limit in \cite{b26} is on the process level.   Zero queueing delay for \pod\ in steady state is first studied by \citet{b11} for the regime where $\alpha < \frac{1}{6}$, where they show that the waiting probability goes to $0$ as $N\to\infty$ when $d=\omega\left(\frac{1}{1-\lambda}\right)$.  The results are later extended to the sub-Halfin-Whitt regime $(0<\alpha<0.5)$ for both exponential and Coxian-2 service times \cite{b5,LiuYin_20} and beyond-Halfin-Whitt regime $(0.5\le\alpha<1)$ \cite{b5}, where it is shown that zero queueing delay is achieved 
when $d=\Omega\left(\frac{\log N}{1-\lambda}\right)$ for $\alpha \in (0, 0.5)$, and when $d = \Omega\left(\frac{\log^2 N}{1-\lambda}\right)$ for $\alpha \in [0.5,1)$.  The paper \cite{b11} also provides a lower bound result: the waiting probability is bounded away from $0$ when $d=O\left(\frac{1}{1-\lambda}\right)$ for $0\le\alpha<1$.

\pod\ has also been analyzed in the regime with a constant load ($\alpha=0$) as $N\to\infty$.  Mean-field analysis has been derived for a constant $d$ in \cite{b17,b18}, and \citet{b26} show $d=\omega(1)$ leads to zero queueing delay.  We remark that mean-field analysis results are also available for other policies such as Join-the-Idle-Queue (JIQ) \cite{b20,b35}, and also for delay-resource tradeoffs \cite{b10}.

To the best of our knowledge, very limited work has been done on achieving zero queueing delay for \emph{parallel jobs}, or on analyzing delay for parallel jobs in general.  Only the regime with a constant load as $N\to\infty$ has been studied.  \citet{b26} briefly touch upon this topic and show that fluid-level optimality can be achieved with probe overhead $d \geq \frac{1}{1 - \lambda - \epsilon}$ under the so-called batch-sampling policy \cite{b34}.  \citet{b3} provide limiting distributions for the stationary distributions under (batch-version) \pod, batching-sampling, and batch-filling, but have not analyzed delay of jobs.  \citet{b38} analyze job delay under a (batch-version) random-routing policy, which does not achieve zero queueing delay.  There have been no results for heavy-traffic regimes.

Finally, the techniques we use in this paper are based on Stein's method and drift-based state-space collapse.  Proposed in \cite{b32}, Stein's method has been an effective tool for bounding the distance between two distributions.  The seminal papers \cite{b2,b14,b33} build an analytical framework for Stein's method in queueing theory that consists of generator approximation, gradient bounds, and possibly state-space collapse.  The papers \cite{b2,b14} use Stein's method to study steady-state diffusion approximation, and \cite{b11,b15,b23,b24,b29,b30,b31,LiuYin_20} use Stein's method to obtain convergence rates to the mean-field limit. A similar approach has also been developed by \citet{b39}.

\begin{figure}
\begin{minipage}[b]{.45\textwidth}
\centering
    \centering
    \includegraphics[scale=0.25]{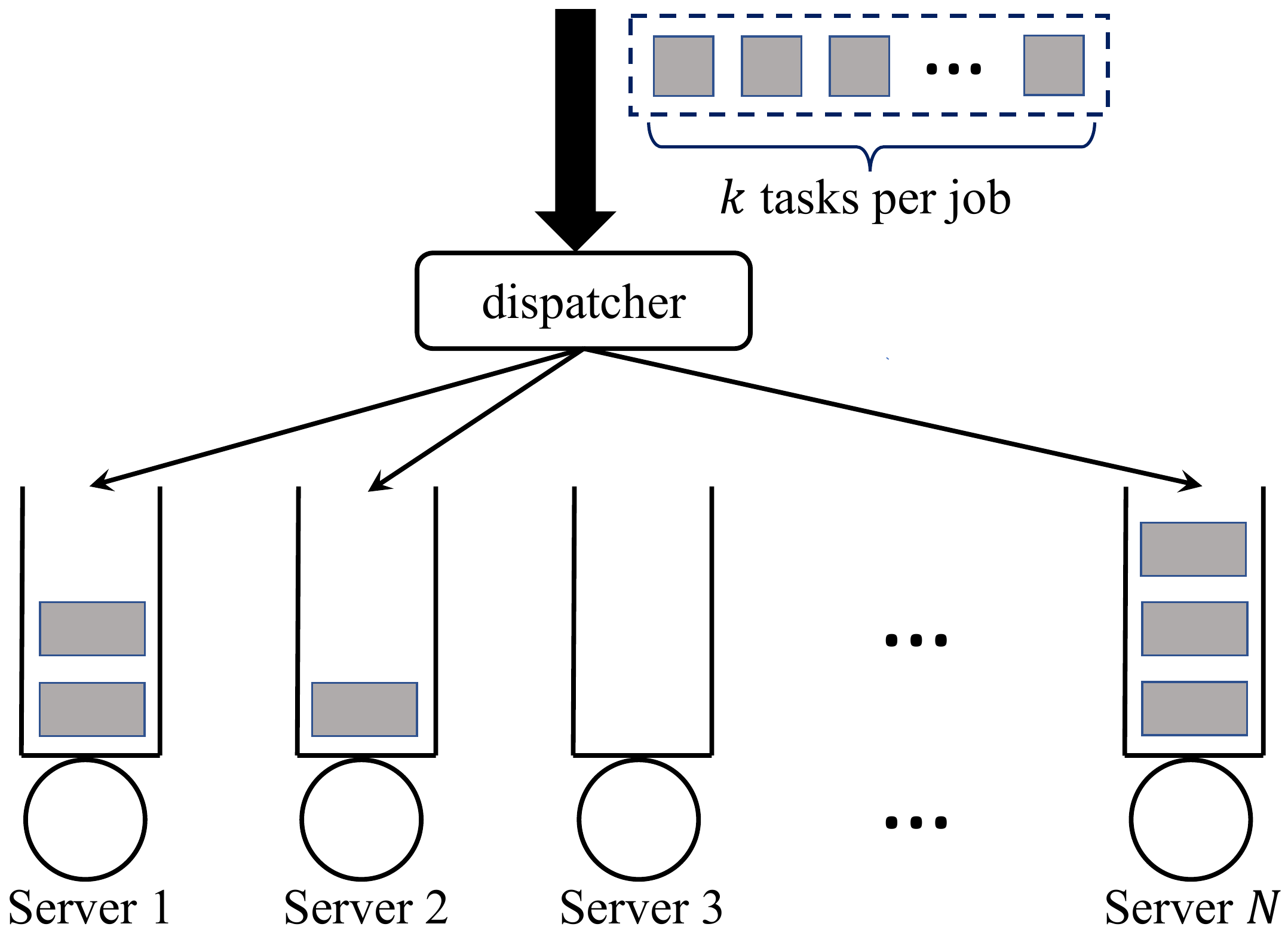}
    \caption{A $N$-server system with batch arrivals.}
    \label{fig:multiserver_model}
\end{minipage}
\hspace{0.3in}
\begin{minipage}[b]{.45\textwidth}
\centering
    \centering
    \includegraphics[scale=0.25]{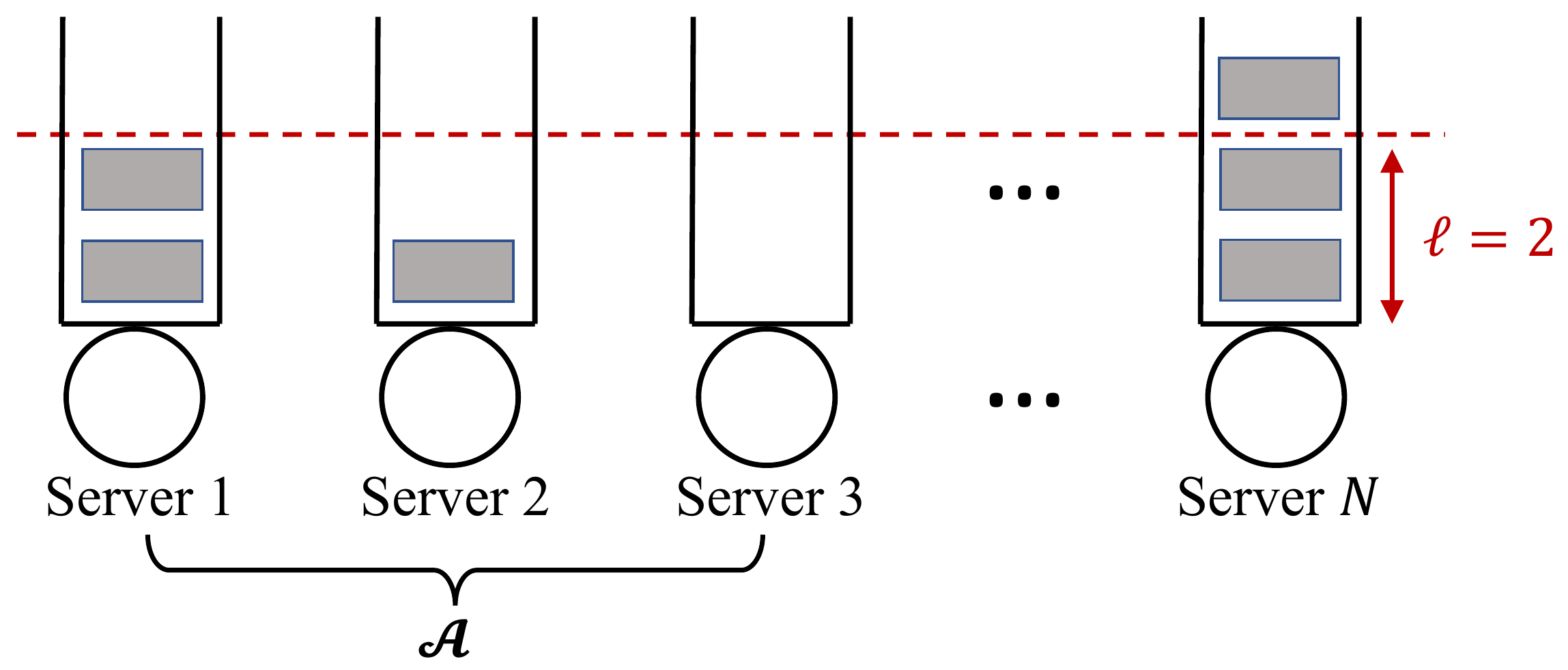}
    \caption{An example of the number of spaces below a threshold $\ell$ in a set of queues: $\ell = 2$, set of queues $\mathcal{A}=\{1,2,3\}$, and $N_{\ell}(\mathcal{A}) = 3$.}
    \label{fig:wwt-threshold}
\end{minipage}
\end{figure}

\section{Model}
We consider a system with $N$ identical servers, illustrated in Figure~\ref{fig:multiserver_model}.  Each server has its own queue and serves tasks in its queue in a First-Come-First-Serve manner.  Since each queue is associated with a server, we will refer to queues and servers interchangeably.
Jobs arrive into the system following a Poisson process.  To capture the parallel structure of jobs, we assume that each job consists of $k$ tasks that can run on different servers in parallel.
{ A job is completed when all of its tasks are completed.}  We study the large-system regime where the number of servers, $N$, becomes large, and we will let $k$ increase to infinity with $N$ to capture the trend of growing job sizes.

We denote the job arrival rate by $N\lambda/k$ and assume that the service times of tasks are independent and exponentially distributed with rate $1$.  
Then $\lambda$ is the load of the system.  We consider a heavy-traffic regime where $\lambda=1-\beta N^{-\alpha}$ with $0<\beta \le 1$ and $0<\alpha<0.5$, i.e., the so-called sub-Halfin-Whitt regime \cite{b13,LiuYin_20}.

When a job arrives into the system, we sample $kd$ queues and obtain their queue length information.  Since the average overhead is $d$ samples per task, the probe overhead is $d$.  We then assign the $k$ tasks of the job to the $kd$ selected queues using the batch-filling policy proposed in \cite{b3}.  Batch-filling assigns the $k$ tasks one by one to the shortest queue, where the queue length is updated after each task assignment.  Specifically, the task assignment process runs in $k$ rounds.  For each round, we put a task into the shortest queue among sampled queues.  We then update the queue length, and continue to the next round.

Now we give an equivalent description of batch-filling, which is useful in our analysis.  For each queue and a positive integer $\ell$, we use the \emph{number of spaces below threshold $\ell$} to refer to the quantity $\max\{\ell-\text{queue length},0\}$, i.e., the number of tasks we can put in the queue such that the queue length after receiving the tasks is no larger than $\ell$.  For a set of queues $\mathcal{A}$, we use $N_{\ell}(\mathcal{A})$ (or just $N_{\ell}$ when it is clear from the context) to denote the total number of spaces below $\ell$ in $\mathcal{A}$.  Figure~\ref{fig:wwt-threshold} gives an example of $N_{\ell}(\mathcal{A})$.
We say a task is at a \emph{queueing position} $p$ if there are $p-1$ tasks ahead of it in the queue.
With the above terminology, the batch-filling policy can be described in the following way: it finds a minimum threshold $\ell$ such that the total number of spaces below $\ell$ in the sampled queues is at least $k$.  Then it fills the $k$ tasks into these spaces from low positions to high positions.

Recall that we propose the following notion of zero queueing for parallel jobs. Let $X_1, X_2, \cdots, X_k$ be the service times of the tasks of a job. If a job does not experience any queueing, its delay is given by $T^* = \max\left\{X_1,\cdots,X_k\right\}$, which we call the \emph{inherent delay} of this job.
Then if the actual delay of the job is very \emph{close} to its inherent delay, it is as if the job almost experiences no queueing.
{We say zero queueing delay is achieved if the steady-state job delay, $T$, is larger than $T^*$ only by a diminishing fraction; i.e., if $T$ satisfies $\mathbb{E}[T-T^*]/\mathbb{E}[T^*]\to 0$ as $N\to \infty$ as in \eqref{eq:zero-queue-delay}.}
We note that as the service time of each task is exponentially distributed with mean $1$, it holds that
$$
\mathbb{E}[T^*] = H_k = \ln k + o(\ln k),
$$
where $H_k=1+\frac{1}{2}+\dots+\frac{1}{k}$ is the $k$-th harmonic number.

We make the following interesting observation, which provides a basis for our delay analysis of parallel jobs: a job can have zero queueing delay even when its tasks are assigned to non-idle servers.  In fact, we establish a necessary and sufficient condition: a job has zero queueing delay if and only if all of its tasks are at queueing positions below a threshold $h$ with $h=o(\log k)$ after assigned to servers, noting that the inherent delay is $\ln k + o(\ln k)$.  The formal proof is based on Lemma~\ref{lem:max-upper}.  This phenomenon allows us to have a zero queueing delay with low probe overhead.  But it also makes the analysis hard since it implies that there are many situations that can lead to zero queueing delay.

We assume that every queue has a finite buffer size of $b$ including the task in service. If the dispatcher routes a task to a queue with length equal to $b$, we simply discard this task and all the other tasks of the same job.  In this case, we say the job is \emph{dropped}; otherwise, we say the job is \emph{admitted}.
We remark that this assumption is not restrictive for the following two reasons:  (1) our results hold for a very large range of $b$ (see Theorem~\ref{main_theorem_1}); and (2) the probability of discarding a job is very small (see Theorem~\ref{theorem zero delay}).

To represent the state of the system, let $S_i(t)$ denote the fraction of servers that have at least $i$ jobs at time $t$, where $0\le i\le b$.  Note that it always holds $S_0(t)=1$.  Then $\bm{S}(t) = (S_0(t),S_1(t),\cdots,S_b(t))$ forms a continuous-time Markov chain (CTMC) since batch-filling is oblivious to labels of servers.  The state space is as follows:
\begin{align*}
\mathcal{S} &= \left\{\bm{s}=(s_0,s_1,s_2,\cdots,s_b) : 1 = s_0 \geq s_1 \geq s_2 \geq \cdots s_b,
\text{where } Ns_i \in \mathbb{N}, \forall 1 \leq i \leq b\right\}.
\end{align*}
It can be verified that $\left\{\bm{S}(t) \colon t \geq 0\right\}$ is irreducible and positive recurrent, thus having a unique stationary distribution.  Let $\pi_{\bm{S}}$ denote this stationary distribution, and let $\bm{S} = (S_1,\cdots,S_b)$ be a random element with distribution $\pi_{\bm{S}}$.

\section{Main Results}
Our main results provide bounds on queue lengths and delay, which lead to corresponding conditions on the probe overhead for achieving zero queueing delay.  We divide our results into \emph{achievability} and \emph{impossibility} results.  Again, all the asymptotics are with respect to the regime that the number of servers, $N$, goes to infinity.

\subsubsection*{\textbf{Achievability Results}}
In Theorem~\ref{main_theorem_1}, we give an upper bound that characterizes $\expect\left[\sum_{i=1}^b S_i\right]$, which is equal to the average expected number of tasks per server.  This upper bound underpins our analysis of job delay.

\begin{theorem} \label{main_theorem_1}
Consider a system with $N$ servers where each job consists of $k$ tasks.  Let the load be $\lambda=1-\beta N^{-\alpha}$ with $0<\beta \le 1$ and $0<\alpha <0.5$.  Under the batch-filling policy with a probe overhead of $d$ such that $d \geq \frac{8}{(1-\lambda)h}$ for some $h = o(\log k)$ and $h = \omega(1)$, it holds that
\begin{equation}
\mathbb{E}\left[\max\left\{\sum_{i=1}^b S_i - h\left(1 - \frac{1}{2}\beta N^{-\alpha}\right),0\right\}\right] \leq \frac{5}{\sqrt{N} \log N},
\end{equation}
where we assume that $k$ satisfies $k = o\left(\frac{N^{0.5 - \alpha}}{\log^2 N}\right)$ and $\frac{k}{\log k} = \Omega(\log N)$, the buffer size $b$ is given by $b = \min\left\{N^{\alpha},\frac{N^{0.5-\alpha}}{k}\right\}$, and $N$ is sufficiently large.
\end{theorem}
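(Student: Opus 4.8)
The strategy is a Lyapunov--drift argument in the style of Stein's method, applied to the CTMC $\{\bm S(t)\}$, controlling the aggregate occupancy $Y:=\sum_{i=1}^{b}S_i$ (the average number of tasks per server) through its drift, together with a state-space-collapse step that keeps the queue-length profile ``bottom-heavy.'' Throughout, let $G$ denote the generator of $\{\bm S(t)\}$ and recall that in steady state $\expect_{\pi_{\bm S}}[GV(\bm S)]=0$ for every $V$ bounded on the finite state space $\mathcal S$. Write $\Delta:=h\bigl(1-\tfrac12\beta N^{-\alpha}\bigr)=h-\tfrac12 h(1-\lambda)$, so the target is $\expect\bigl[(Y-\Delta)^+\bigr]$; the natural Lyapunov function is $V(\bm s)=\bigl(\sum_{i=1}^{b}s_i-\Delta\bigr)^{+}$, possibly composed with a convex increasing smoothing to sharpen the final constant, with a companion term on the above-threshold mass $Z:=\sum_{i>h}S_i$ used for the collapse.

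First I would compute the drift of $Y$. A job adds exactly $k$ tasks when admitted and is admitted iff its $kd$ sampled queues contain at least $k$ spaces below the buffer $b$, while each of the $NS_1$ busy servers completes a task at rate $1$; hence
\[
GY(\bm s)=\lambda\,\Pr[\text{admitted}\mid\bm s]-S_1(\bm s)\le\lambda-S_1(\bm s),
\]
with jumps of size at most $k/N$. So everything reduces to showing that the idle fraction $1-S_1$ is $O(N^{-\alpha})$ whenever $Y>\Delta$, plus controlling rare drop events. The crux is a sampling-concentration lemma: by the accounting underlying Lemma~\ref{lem:max-upper}, the total number of spaces below level $h$ over all servers equals $N\bigl(h-\sum_{i=1}^{h}s_i\bigr)$, so whenever $\sum_{i=1}^{h}s_i\le\Delta$ --- which holds in particular when $Y\le\Delta$, since $\sum_{i\le h}s_i\le\sum_{i\le b}s_i=Y$ --- this total is at least $\tfrac12 h(1-\lambda)N$; a uniform $kd$-sample then has expected space-count at least $\tfrac12 h(1-\lambda)kd\ge 4k$ by the hypothesis $d\ge 8/((1-\lambda)h)$, and a Chernoff bound for the hypergeometric distribution makes $\Pr[N_h(\text{sample})<k\mid\bm s]$ exponentially small in $k$; since $\tfrac{k}{\log k}=\Omega(\log N)$ this is $o(N^{-c})$ for every fixed $c$. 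Consequently, on such states a job is admitted with overwhelming probability and all its tasks land at queueing positions $\le h$, so no mass escapes above level $h$ and the profile fills from the bottom.

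The main obstacle --- and the genuinely new ingredient --- is that $GY(\bm s)$ depends on $S_1(\bm s)$, i.e., on the shape of the profile below $h$, not just on the scalar $Y$: a priori a state with $Y>\Delta$ could have almost all servers idle and a few nearly full, so the drift need not be negative. I would rule this out with a state-space-collapse lemma showing that batch-filling self-organizes the states carrying stationary mass into bottom-heavy profiles. Concretely: (i) from the level-$h$ balance equation $\expect[S_{h+1}]=\tfrac{\lambda}{k}\expect\bigl[(k-N_h(\text{sample}))^{+}\mathbf 1_{\text{admitted}}\bigr]$ together with the concentration lemma, the above-threshold mass $Z$ is super-polynomially small (in expectation and with overwhelming probability), so $\sum_{i\le h}s_i$ and $\sum_{i\le b}s_i$ are interchangeable up to negligible error; and (ii) since $S_i\le S_1$ for $i\ge1$, once $\sum_{i\le h}s_i$ is within $o(N^{-\alpha})$ of $h\bigl(1-\tfrac12(1-\lambda)\bigr)$ we get $S_1\ge\tfrac{1}{h}\sum_{i\le h}s_i\ge 1-\tfrac12\beta N^{-\alpha}-o(N^{-\alpha})$, i.e., $S_1-\lambda\ge\tfrac14\beta N^{-\alpha}$ on $\{Y>\Delta\}$ after discarding the negligible $Z$. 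Establishing (i) rigorously --- in particular closing the circular dependence between ``$Y$ concentrates'' and ``$Z$ is negligible,'' e.g. via a combined Lyapunov function that charges for both the overshoot of $Y$ and the tail of the profile $(S_m)_{m\ge h}$ --- is where most of the work lies, and is essentially the state-space-collapse result advertised in the introduction.

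With these pieces in place the assembly is routine. The buffer choice $b=\min\{N^{\alpha},N^{0.5-\alpha}/k\}$ is exactly large enough that even from a full-looking state the $kd$ sampled queues almost surely contain $\ge k$ spaces below $b$, so $\Pr[\text{drop}]$ is negligible; the requirement that all the exponential tails above dominate every polynomial error term is where $k=o\bigl(N^{0.5-\alpha}/\log^2 N\bigr)$ and $\tfrac{k}{\log k}=\Omega(\log N)$ are consumed, while $h=\omega(1)$, $h=o(\log k)$ keep $\Delta$ in the right window. Feeding the drift bound $GY(\bm s)\le-\tfrac14\beta N^{-\alpha}$ on $\{Y>\Delta\}$ (with drift $\le\lambda\le1$ and jumps $\le k/N$ everywhere) into a standard drift/moment lemma --- or directly into $\expect_{\pi_{\bm S}}[GV]=0$ with $V=(Y-\Delta)^{+}$, where the jump-variance rate near the boundary is $O(k/N)$ --- yields $\expect\bigl[(Y-\Delta)^{+}\bigr]=O\bigl(kN^{\alpha-1}\bigr)$, and since $kN^{\alpha-1}=o\bigl(N^{-1/2}/\log N\bigr)$ under the stated scaling, this is at most $5/(\sqrt N\log N)$ for $N$ large, as claimed.
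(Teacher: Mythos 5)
Your overall architecture is the same as the paper's: work with the aggregate occupancy $Y=\sum_{i=1}^{b}S_i$, note that its drift is $\lambda\Pr[\text{admitted}\mid\bm s]-S_1(\bm s)\le \lambda-S_1(\bm s)$ with jumps of size $k/N$, prove a state-space-collapse statement guaranteeing $S_1$ is large on the overshoot event $\{Y>\eta\}$ with $\eta=h(1-\tfrac12\beta N^{-\alpha})$, and then convert the negative drift into a bound on $\mathbb{E}[(Y-\eta)^{+}]$ via a quadratic drift/Stein argument (the paper's version is the Poisson-equation solution $g(x)=\frac{(x-\eta)^2}{2(-\delta)}\mathds{1}_{\{x\ge\eta\}}$ together with $\mathbb{E}[Gg(\sum_i S_i)]=0$ and gradient bounds; your second-order accounting $O((k+1)/N)$ against a drift gain of order $N^{-\alpha}$ gives the same $O(kN^{\alpha-1})$ order). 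So the assembly step is fine in outline.

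The genuine gap is the state-space collapse itself, which you flag but do not resolve. Your proposed route --- the level-$h$ flow balance $\mathbb{E}[S_{h+1}]=\frac{\lambda}{k}\mathbb{E}[(k-N_h)^{+}\mathbf 1_{\text{admitted}}]$ combined with the sampling-concentration bound --- is circular exactly as you acknowledge: the concentration bound on $(k-N_h)^{+}$ is only valid on states whose below-$h$ occupancy already has the required shortfall, i.e., on the complement of the very event the theorem is bounding. Moreover the conclusion you want from it, that $Z=\sum_{i>h}S_i$ is super-polynomially small, is both unavailable and unnecessary; it is forced on you because your lower bound $S_1\ge\frac1h\sum_{i\le h}S_i$ needs $Z=o(hN^{-\alpha})$. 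The paper never proves anything that strong. Its Lemma~\ref{lem:ssc} uses the specific Lyapunov function $V(\bm s)=\min\bigl\{\frac{1}{h-1}\sum_{i=h}^{b}s_i,\; b\bigl((1-\tfrac12\beta N^{-\alpha})-\frac{1}{h-1}\sum_{i=1}^{h-1}s_i\bigr)^{+}\bigr\}$, whose min structure is precisely what breaks the circularity: when $V(\bm s)>B$ both terms are large, so the below-$(h-1)$ occupancy has enough shortfall to invoke the filling Lemma~\ref{lem:filling} with $\ell=h-1$ (arrivals add no mass at or above level $h$ with probability $\ge 1-1/N$), while simultaneously $s_h$ is large enough that departures make the drift at most $-b/\sqrt N$; a Lyapunov tail bound then gives $\Pr\{V(\bm S)>B+\frac{2kb\log^2 N}{(h-1)\sqrt N}\}\le e^{-\frac12\log^2 N}$. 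On the complementary event one gets the needed lower bound on $S_1$ from \emph{either} branch of the min: in the first branch via $(h-1)S_1+\sum_{i\ge h}S_i\ge\eta$, which exploits the slack between $\eta\approx h$ and $(h-1)\cdot 1$ and therefore tolerates an above-$h$ mass of order one (roughly $bN^{-\alpha}$), far larger than the $o(hN^{-\alpha})$ your averaging step would demand; in the second branch directly from $S_1\ge\frac{1}{h-1}\sum_{i=1}^{h-1}S_i$. Without constructing this (or an equivalent) collapse argument, your proof does not close.
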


We remark that the $h = o(\log k)$ in this theorem represents the threshold position we pointed out for zero queueing delay, i.e., a job has zero queueing delay if all of its tasks are at queueing positions below $h$ after assigned to servers.

The upper bound in Theorem~\ref{main_theorem_1} enables us to analyze the probability that all the tasks of an incoming job end up in positions below $h$ under batch-filling, which further leads to the zero queueing delay result below in Theorem~\ref{theorem zero delay}.  Recall that the buffer size $b$ of each queue is finite, so a job will get dropped if at least one of its tasks is assigned to a queue with a full buffer.  We denote the probability of dropping an incoming job in steady state by $\probdrop$.

\begin{theorem}\label{theorem zero delay}
Under the assumptions of Theorem \ref{main_theorem_1}, the steady-state delay of jobs that are admitted under batch-filling satisfies that
\begin{equation*}
\expect[T\;|\;\textup{admitted}]=\ln k + o(\ln k),
\end{equation*}
with a dropping probability $\probdrop \le \frac{11}{b\sqrt{N}\log N}$ when $N$ is sufficiently large.
\end{theorem}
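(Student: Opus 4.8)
\emph{Proof plan.} The plan is to push the state concentration of Theorem~\ref{main_theorem_1} through two concentration statements about the sample of $kd$ queues seen by a tagged arriving job: that it almost never forces a drop, and that it almost always lets batch-filling place all $k$ tasks at queueing positions of order $o(\log k)$. By PASTA, the state $\bm S$ seen by a tagged arrival is distributed as $\pi_{\bm S}$; write $h' := h\bigl(1-\tfrac12\beta N^{-\alpha}\bigr)$, so $h-h' = \tfrac12 h(1-\lambda)$. Two elementary facts will be used repeatedly. First, since batch-filling fills the $k$ tasks into the spaces below the least threshold $\ell^{\star}$ with $N_{\ell^{\star}}(\mathcal A)\ge k$, a job is dropped exactly when $N_b(\mathcal A)<k$ and has all of its tasks at positions $\le\ell$ exactly when $N_{\ell}(\mathcal A)\ge k$. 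Second, the system-wide number of spaces below $\ell$ equals $N\bigl(\ell-\sum_{i=1}^{\ell}S_i\bigr)\ge N\bigl(\ell-\sum_{i=1}^{b}S_i\bigr)$, so conditionally on $\bm S$ a uniform sample of $kd$ queues has $\expect[N_{\ell}(\mathcal A)\mid\bm S]\ge kd\bigl(\ell-\sum_{i=1}^{b}S_i\bigr)$, with conditional variance at most $O(\ell)$ times this mean (each queue supplies at most $\ell$ spaces below $\ell$). Finally, Theorem~\ref{main_theorem_1} with Markov's inequality gives $\Pr\bigl[\sum_{i=1}^{b}S_i\ge h'+t\bigr]\le \tfrac{5}{t\sqrt N\log N}$ for all $t>0$.

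\emph{Dropping probability.} Let $\mathcal E=\bigl\{\sum_{i=1}^{b}S_i<\tfrac12 b\bigr\}$. Since $h'=o(\log k)=o(b)$, the tail bound above with $t=\tfrac12 b-h'\ge 0.49\,b$ yields $\Pr[\mathcal E^{c}]\le \tfrac{10.3}{b\sqrt N\log N}$ for $N$ large. On $\mathcal E$ we have $\expect[N_b(\mathcal A)\mid\bm S]\ge \tfrac12 kdb\ge 2k$ with huge slack, and a Bernstein-type inequality (which also holds for sampling without replacement) bounds $\Pr[N_b(\mathcal A)<k\mid\bm S]$ by $\exp(-\Theta(kd))$. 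Since $d\ge 8/((1-\lambda)h)$ and $k/h=\omega(\log N)$ (from $k/\log k=\Omega(\log N)$ and $h=o(\log k)$), one has $kd=\omega(N^{\alpha}\log N)$, so this term is $o\bigl(\tfrac{1}{b\sqrt N\log N}\bigr)$. Hence $\probdrop=\expect\bigl[\Pr[N_b(\mathcal A)<k\mid\bm S]\bigr]\le\Pr[\mathcal E^{c}]+o\bigl(\tfrac{1}{b\sqrt N\log N}\bigr)\le \tfrac{11}{b\sqrt N\log N}$ for $N$ large, and in particular $\Pr[\textup{admitted}]\to 1$.

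\emph{Job delay.} The lower bound is immediate: an admitted task stays in the system at least as long as its own $\mathrm{Exp}(1)$ service time, so $T\ge T^{*}$ pathwise; as the tagged job's service times are independent of $(\bm S,\mathcal A)$ and hence of the admission event, $\expect[T\mid\textup{admitted}]\ge\expect[T^{*}]=H_k=\ln k+o(\ln k)$. For the matching upper bound, condition on $(\bm S,\mathcal A)$: since distinct queues have independent memoryless service processes, $T$ is stochastically dominated by $\max_{1\le r\le k}E_r$ with $E_r$ i.i.d.\ $\mathrm{Erlang}(\ell^{\star},1)$, so $\expect[T\mid\bm S,\mathcal A]\le g(\ell^{\star})$, where $g(m):=\expect[\max_{r\le k}\mathrm{Erlang}(m,1)]$ satisfies $g(m)\le mH_k=O(m\log k)$ in general and, by Lemma~\ref{lem:max-upper}, $g(m)=\ln k+o(\ln k)$ for $m=o(\log k)$. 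Put $\mathcal G=\bigl\{\sum_{i=1}^{b}S_i<h'+\tfrac14 h(1-\lambda)\bigr\}$; then $\Pr[\mathcal G^{c}]\le\tfrac{20}{h(1-\lambda)\sqrt N\log N}$, and on $\mathcal G$ we get $\expect[N_h(\mathcal A)\mid\bm S]\ge kd\cdot\tfrac14 h(1-\lambda)\ge 2k$ by the choice of $d$, so the Bernstein estimate gives $\Pr[N_h(\mathcal A)<k\mid\bm S]\le\exp(-\Theta(k/h))=\exp(-\omega(\log N))$, i.e.\ $\ell^{\star}\le h$ with that conditional probability. Since $g$ is nondecreasing, $\ell^{\star}\le b$ for admitted jobs, and $\ell^{\star}$ is $\sigma(\bm S,\mathcal A)$-measurable, we split
\[
\expect[T\,\mathds{1}_{\textup{admitted}}]\le g(h)+g(b)\,\Pr\bigl[\mathcal G\cap\{N_h(\mathcal A)<k\}\bigr]+g(b)\,\Pr[\mathcal G^{c}].
\]
Here $g(h)=\ln k+o(\ln k)$; $g(b)=O(b\log k)$; the middle term is $O(b\log k)\exp(-\omega(\log N))\to 0$; and $g(b)\Pr[\mathcal G^{c}]=O\bigl(\tfrac{bN^{\alpha}\log k}{h\sqrt N\log N}\bigr)\to 0$ because $b=\min\{N^{\alpha},N^{0.5-\alpha}/k\}$ with $\alpha<\tfrac12$ (if $b=N^{\alpha}$ then necessarily $\alpha<\tfrac14$ and the ratio is $O(N^{2\alpha-0.5}\log k/h)$; otherwise it is $O(\log k/(kh\log N))$). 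Thus $\expect[T\,\mathds{1}_{\textup{admitted}}]\le\ln k+o(\ln k)$, and dividing by $\Pr[\textup{admitted}]=1-o(1)$ gives $\expect[T\mid\textup{admitted}]\le\ln k+o(\ln k)$, matching the lower bound.

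\emph{Anticipated main obstacle.} I expect the delicate point to be the concentration of $N_{\ell}(\mathcal A)$: because $d$ may be large, a direct Hoeffding bound over the $kd$ sampled queues (each worth up to $\ell$ spaces) gives only $\exp(-\Theta(k/(d\ell^{2})))$, which need not vanish, so one must instead exploit the variance estimate $\mathrm{Var}\le O(\ell)\cdot\mathrm{mean}$ to reach the much stronger $\exp(-\Theta(k/\ell))$ and then invoke $k/\ell=\omega(\log N)$. The other subtlety is controlling $\expect[T\,\mathds{1}_{\mathcal G^{c}}]$: the crude bound $T=O(b\log k)$ there is multiplied by $\Pr[\mathcal G^{c}]=O(1/(h(1-\lambda)\sqrt N\log N))$, which is not automatically negligible, and closing the estimate requires the exact form $b=\min\{N^{\alpha},N^{0.5-\alpha}/k\}$ together with $\alpha<1/2$.
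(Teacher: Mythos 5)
Your proposal is correct and takes essentially the same route as the paper's proof: control the state tail via Theorem~\ref{main_theorem_1} plus Markov's inequality, establish the filling and non-dropping events through a sampling-without-replacement concentration bound (your Bernstein-type estimate with rate $\exp(-\Theta(k/\ell))$ is an equivalent re-derivation of the paper's Lemma~\ref{lem:filling}, both resting on Hoeffding's reduction), apply Lemma~\ref{lem:max-upper} to the max-of-Erlang delay on the good event, absorb the rare bad events with a crude bound ($g(b)=O(b\log k)$ in place of the paper's $bk$), and finally divide by $1-\probdrop$. The remaining differences (slightly different good-event thresholds, the explicit lower bound $T\ge T^{*}$) are cosmetic and do not change the argument.
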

{Theorem~\ref{theorem zero delay} thus implies that zero queueing delay for parallel jobs can be achieved} with a probe overhead $d = \omega\left(\frac{1}{(1-\lambda)\log k}\right)$.  This breaks the lower bound of $\omega\left(\frac{1}{1-\lambda}\right)$ for achieving zero queueing delay for non-parallel jobs, i.e., single-task jobs \cite{b11}, as we discussed in Section~\ref{sec:intro}.

\subsubsection*{\textbf{Impossibility Results}}
To complement the achievability results, below we investigate when zero queueing delay cannot be achieved.  In Theorem~\ref{main_theorem_2}, we find conditions under which $\sum_{i=1}^h S_i$ is lower bounded with a constant probability.

\begin{theorem} \label{main_theorem_2}
Consider a system with $N$ servers where each job consists of $k$ tasks.  Let the load be $\lambda=1-\beta N^{-\alpha}$ with $0<\beta \le 1$ and $0<\alpha <0.5$.  Assume that buffers have unlimited sizes and $k$ satisfies that $k = e^{o\left(\sqrt{\log N}\right)}$ and $k = \omega(1)$.  Under the batch-filling policy with a probe overhead $d$ such that $d = e^{o\left(\frac{\log N}{\log k}\right)}$ and for any $h$ with $h=O(\log k)$, it holds that when $N$ is sufficiently large,
\begin{equation}
\Pr\left\{\sum_{i=1}^h S_i \geq h - \frac{1}{3d}\right\} \geq \frac{1}{4e^2}.
\end{equation}
\end{theorem}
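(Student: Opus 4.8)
Throughout, let $q_j$ denote the length of queue $j$ and write $N_h(\bm s):=\sum_{j=1}^{N}\max\{h-q_j,0\}=\sum_{i=1}^{h}N(1-s_i)$ for the total number of spaces below level $h$, so that the event in the statement is exactly $\{N_h(\bm S)\le N/(3d)\}$. The first step is to compute the drift of $N_h$ under the CTMC generator $G$. A task completion at a server whose length lies in $\{1,\dots,h\}$ creates exactly one new space below $h$, and these occur at total rate $N(s_1-s_{h+1})$; a job arrival, by the space-filling description of batch-filling, destroys the $\min\{k,M_{\bm s}\}$ lowest probed spaces lying below $h$, where $M_{\bm s}$ is the (random, hypergeometric) number of spaces below $h$ among the $kd$ probed queues, with $\mathbb{E}[M_{\bm s}\mid\bm s]=\tfrac{kd}{N}N_h(\bm s)$. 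Hence
\[
  G N_h(\bm s)\;=\;N\bigl(s_1-s_{h+1}\bigr)\;-\;\frac{N\lambda}{k}\,\mathbb{E}\bigl[\min\{k,M_{\bm s}\}\bigr].
\]
Imposing stationarity, $\mathbb{E}_{\pi}[GN_h]=0$, and using $\mathbb{E}_{\pi}[S_1]=\lambda$ (exact, since buffers are unlimited), yields
\[
  \mathbb{E}_{\pi}\bigl[\mathbb{E}[\min\{k,M_{\bm S}\}\mid\bm S]\bigr]\;=\;k\Bigl(1-\tfrac{1}{\lambda}\,\mathbb{E}_{\pi}[S_{h+1}]\Bigr),
\]
which ties the ``probing-success'' term to how much mass the profile carries above level $h$.

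The crux — and where the ``reversed tail bound'' enters — is to show that $\mathbb{E}_{\pi}[S_{h+1}]$ is close to $\lambda$ (a bound of the form $\ge(1-\eta)\lambda$ for a small enough absolute constant $\eta$ suffices): because the probe overhead is only $d=e^{o(\log N/\log k)}$ while $h=O(\log k)$, the queue-length profile is driven to levels well beyond $h$, so almost every server holds more than $h$ tasks. A Lyapunov function of the flavor behind Theorem~\ref{main_theorem_1}, together with the state-space-collapse estimate it provides, is normally used via its negative-drift region to bound the profile from \emph{above}; here I would run the same machinery in the opposite direction, combining the drift inequality with $\mathbb{E}_{\pi}[G(\cdot)]=0$ to certify that the profile is \emph{large up to level $h$} with at least constant probability — the quantitative input being that a single batch of $kd$ probes cannot reach the ``de-congesting'' region of state space when $d$ is this small relative to $h$. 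Feeding $\mathbb{E}_{\pi}[S_{h+1}]\ge(1-\eta)\lambda$ back into the identity forces $\mathbb{E}[\min\{k,M_{\bm S}\}\mid\bm S]$ to be small on average; the elementary lower bound $\mathbb{E}[\min\{k,M_{\bm s}\}\mid\bm s]\gtrsim\tfrac{k}{2}\min\{1,d\,N_h(\bm s)/N\}$ — which comes from one-sided concentration of the hypergeometric sum $M_{\bm s}$ around its conditional mean $\tfrac{kd}{N}N_h(\bm s)$, and is where $k=\omega(1)$ (ensuring $h=O(\log k)\ll\sqrt k$) and $k=e^{o(\sqrt{\log N})}$ get used — then lets a single Markov-inequality step applied to the $[0,1]$-valued quantity $\min\{1,d\,N_h(\bm S)/N\}$ conclude that $\Pr_{\pi}\{N_h(\bm S)>N/(3d)\}$ is bounded away from $1$. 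Tracking the numerical slack through the two concentration steps and the Markov step produces the explicit threshold $\tfrac{1}{3d}$ and the constant $\tfrac{1}{4e^2}$.

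The main obstacle is that $GN_h$ is \emph{not} sign-definite off the target set: in a moderately loaded state whose queues all lie below level $h$ — e.g.\ every server at length $1$ — task completions outrun the batch-filling removals and $GN_h>0$, so one cannot simply feed $N_h$ into a Foster--Lyapunov tail bound and read off the conclusion. Choosing the right region and the right test function so that a drift argument run ``in reverse'' still delivers a genuine \emph{lower} bound on a stationary probability — and doing so with constants clean enough to land at $\tfrac{1}{3d}$ and $\tfrac{1}{4e^2}$ while only assuming $h=O(\log k)$, an entire order of magnitude below the natural build-up level — is precisely the new lower-bounding technique and the delicate part of the argument.
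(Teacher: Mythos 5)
Your setup is sound as far as it goes: the flow-balance identity $\mathbb{E}_{\pi}[GN_h]=0$ for the bounded functional $N_h(\bm s)=\sum_{i=1}^h N(1-s_i)$, combined with $\mathbb{E}_\pi[S_1]=\lambda$, correctly ties $\mathbb{E}\bigl[\min\{k,M_{\bm S}\}\bigr]$ to $1-\mathbb{E}_\pi[S_{h+1}]/\lambda$, and the truncation-plus-Markov endgame on $\min\{1,dN_h(\bm S)/N\}$ would indeed convert a bound of the form $\mathbb{E}_\pi[S_{h+1}]\ge(1-\eta)\lambda$, $\eta$ a small constant, into the desired constant-probability conclusion (the one-sided concentration step for $M_{\bm S}$ also looks repairable since $k/h\to\infty$). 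But the argument has a genuine gap exactly at the step you label the crux: you never prove $\mathbb{E}_\pi[S_{h+1}]\ge(1-\eta)\lambda$. The sentence about running "the same machinery in the opposite direction" so that the drift inequality "certifies that the profile is large up to level $h$" is a restatement of the goal, not a mechanism; no Lyapunov function, no drift region, and no inequality chain is exhibited. Worse, the statement you need is strictly stronger than the theorem you are trying to prove: the theorem only asserts that $\sum_{i=1}^h S_i\ge h-\frac{1}{3d}$ with probability at least $\frac{1}{4e^2}$, whereas your route requires near-saturation one level higher, in expectation, with constant $\eta\approx 0.16$ — i.e., that at least $\sim 84\%$ of servers exceed length $h$ on average. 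So the proposal reduces the theorem to an unproven claim that is at least as hard, and the acknowledged "delicate part" is precisely what is missing.

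For contrast, the paper does not attempt any statement about level $h+1$ or about expectations being close to $\lambda$; it builds the lower bound level by level on the gaps $S_1-S_q$. The base case uses Little's law ($\mathbb{E}[S_1]=\lambda$, hence $\Pr\{S_1>1-(h-1)\beta N^{-\alpha}\}\ge 1-\frac{1}{h-1}$) as an externally obtained tail lower bound, and then applies the generalized Lyapunov tail bound (Lemma~\ref{lem:tail}, with the exceptional set $\mathcal{E}=\{s_1-s_2>c_2\}$) "in reverse": since the tail bound's right-hand side must exceed the Little's-law lower bound, the probability $\Pr\{S_1-S_2\le c_2\}$ is forced to be at least $\frac{h-2}{h}-N^{-\log N}$. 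Lemma~\ref{lower:lemma2} repeats this with $V_q(\bm s)=\sum_{i\le q}s_i$ and the arrival-drift estimate of Lemma~\ref{lower:lemma1} (if the state is nearly full up to level $q$, with probability $\ge 1-2kdx$ all $kd$ probes land on queues of length $\ge q$), yielding $\Pr\{S_1-S_h\le a_hb_h\}\ge\left(\frac{h-2}{h}\right)^{h-1}-o(1)\to e^{-2}$; a conditional Markov step on $S_1$ then produces the threshold $h-\frac{1}{3d}$ and the constant $\frac{1}{4e^2}$. If you want to salvage your flow-balance framing, you would have to supply an argument of comparable strength for the saturation of the profile at level $h+1$, which is not easier than the induction the paper carries out.
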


The lower bound on $\sum_{i=1}^h S_i$ in Theorem~\ref{main_theorem_2} guarantees that an incoming job will have a significant delay in addition to its inherent delay, and thus fails to have zero queueing delay.  This result is formally stated in Theorem~\ref{theorem lower d} below.

\begin{theorem}\label{theorem lower d}
Under the assumptions of Theorem~\ref{main_theorem_2}, the steady-state job delay, $T$, satisfies that
\begin{equation}
\mathbb{E}[T] \geq 2 \ln k
\end{equation}
when $N$ is sufficiently large.  Therefore, to achieve zero queueing delay, the probe overhead $d$ needs to be at least $e^{\Omega\left(\frac{\log N}{\log k}\right)}$.
\end{theorem}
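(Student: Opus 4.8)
\textbf{Proof proposal for Theorem~\ref{theorem lower d}.}
The plan is to turn the state-space lower bound of Theorem~\ref{main_theorem_2} into a delay lower bound by showing that, with constant probability, an arriving job is \emph{forced} to place at least one of its tasks deep in some queue. The crucial leverage is that Theorem~\ref{main_theorem_2} controls $\sum_{i=1}^{h} S_i$ not for one particular $h$ but for \emph{every} $h = O(\log k)$; this lets me take $h$ to be a large constant multiple of $\ln k$, which is exactly what is needed to absorb the unavoidable constant-factor loss incurred when passing from a constant-probability event to a statement about $\mathbb{E}[T]$.

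Concretely, I would fix $h=\lceil c\ln k\rceil$ for an absolute constant $c$ pinned down at the end; this $h$ is $O(\log k)$ and $\omega(1)$, so Theorem~\ref{main_theorem_2} applies. First translate the bound into a statement about free space: $\sum_{i=1}^{h} S_i \ge h - \tfrac{1}{3d}$ is equivalent to $\sum_{i=1}^{h}(1-S_i)\le \tfrac{1}{3d}$, i.e.\ the total number of spaces below threshold $h$ over all $N$ servers is $N_h(\text{all}) = N\sum_{i=1}^{h}(1-S_i)\le \tfrac{N}{3d}$; call this event $E$, so $\Pr(E)\ge \tfrac{1}{4e^2}$ for $N$ large. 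Now condition on the (stationary) configuration and on $E$, and sample the $kd$ queues of the tagged job uniformly without replacement: by linearity the expected total number of spaces below $h$ among the sampled queues equals $\tfrac{kd}{N}\,N_h(\text{all})\le \tfrac{k}{3}$, so by Markov's inequality it is at most $k-1$ with conditional probability at least $\tfrac{2}{3}$. On that event, batch-filling's minimal filling threshold $\ell$ must exceed $h$, and since fewer than $k$ of the available slots lie at positions $\le h$, at least one task is placed at a queueing position $p\ge h+1$. Hence, with probability at least $\tfrac{2}{3}\cdot\tfrac{1}{4e^2}=\tfrac{1}{6e^2}$, the job has a task at position $\ge h+1$.

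Finally I would lower-bound $T$ by the completion time of that deep task. By memorylessness of the exponential service times, a task at queueing position $p$ finishes exactly $\mathrm{Erlang}(p,1)$ time units after the job's arrival (the residual of the task in service plus $p-1$ fresh unit-rate services), and this is independent of the configuration and the sampling given $p$; its conditional mean is therefore $p\ge h+1$. Since $T$ dominates this task's completion time, taking expectations gives $\mathbb{E}[T]\ge (h+1)\cdot \tfrac{1}{6e^2}\ge \tfrac{c}{6e^2}\ln k$, so choosing $c\ge 12e^2$ yields $\mathbb{E}[T]\ge 2\ln k$ for all sufficiently large $N$ (the required size of $N$ depends only on the fixed constant $c$). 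The probe-overhead consequence is then immediate: since $\mathbb{E}[T^*]=H_k=\ln k+o(\ln k)$, whenever $d=e^{o(\log N/\log k)}$ we get $\mathbb{E}[T-T^*]/\mathbb{E}[T^*]\ge (2\ln k-\ln k-o(\ln k))/(\ln k+o(\ln k))\to 1$, so \eqref{eq:zero-queue-delay} fails; hence zero queueing delay forces $d=e^{\Omega(\log N/\log k)}$.

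The main obstacle I anticipate is not any individual estimate but getting the constant to be exactly $2$: the argument only naturally produces $\mathbb{E}[T]=\Omega(\log k)$, and the point to exploit is that Theorem~\ref{main_theorem_2} is uniform over all $h=O(\log k)$, so $h$ may be taken as large a constant times $\ln k$ as desired at no cost. A secondary point requiring care is bookkeeping the conditioning correctly --- the event $E$ is measurable with respect to the stationary configuration, whereas the $kd$ samples and the subsequent service completions are fresh randomness --- together with the sampling-without-replacement mean computation; a Chernoff-type bound for sampling without replacement could replace Markov's inequality to sharpen constants, but this is not needed.
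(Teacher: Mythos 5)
Your proposal is correct and follows essentially the same route as the paper: invoke Theorem~\ref{main_theorem_2} with $h$ a suitable constant multiple of $\ln k$ (the paper takes $h=12e^2\ln k$), use PASTA so the tagged arrival sees this state with probability at least $\frac{1}{4e^2}$, argue that batch-filling is then forced to place a task at position at least $h+1$, and lower-bound $\mathbb{E}[T]$ by the Erlang mean of that task times the probability, with the probe-overhead conclusion following exactly as you state. The only difference is cosmetic: where you bound the sampled free space by a first-moment/Markov argument (probability $\tfrac{2}{3}$), the paper's Lemma~\ref{lower:lemma3} uses a Chernoff-type bound via Hoeffding's sampling-without-replacement result to get probability $1-o(1)$, and your choice of constant absorbs this loss.
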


\section{Proofs for Achievability Results (Theorems~\ref{main_theorem_1} and \ref{theorem zero delay})}
{
Before we dive into the proofs of Theorems~\ref{main_theorem_1} and \ref{theorem zero delay}, we first develop more understanding of zero queueing delay on a job level through Lemmas~\ref{lem:max-upper} and \ref{lem:filling}.
Due to the space limit, the proofs of the lemmas are presented in Appendix~\ref{app:lemmas-achievability}.
Then we provide a proof sketch for Theorems~\ref{main_theorem_1} and \ref{theorem zero delay} to outline the main steps.
Detailed proofs of Theorems~\ref{main_theorem_1} and \ref{theorem zero delay} are presented in Sections~\ref{sec:proof-main_theorem_1} and \ref{sec:proof-theorem zero delay}, respectively.
Throughout this section, we assume that the assumptions in Theorem~\ref{main_theorem_1} hold.

\subsubsection*{\textbf{Zero queueing delay and queue lengths}}
Lemma~\ref{lem:max-upper} below gives an upper bound on the expected job delay given the \emph{lengths of the queues} that the tasks of a job are assigned to.
Specifically, suppose the $k$ tasks of a job are sent to $m$ queues ($m\le k$) with queue lengths $n_1,n_2,\dots,n_m$, where the queue lengths have included these newly arrived tasks.  Note that multiple tasks of the job could be sent to the same queue, but to compute the job delay, we only need to consider the last task of the job in that queue.  Let $Y_i$ with $1\le i\le m$ denote the delay of the last task of the job in queue $i$.  Then the job delay can be written as $\max\left\{Y_1,\cdots,Y_m\right\}$.  Lemma~\ref{lem:max-upper} gives an upper bound on $\mathbb{E}[\max\left\{Y_1,\cdots,Y_m\right\}]$.
\begin{lemma}\label{lem:max-upper}
Consider $m$ independent random variables $Y_1,\cdots,Y_m$ with $m\le k$, where each $Y_i$ ($1\le i\le m)$ is the sum of $n_i$ i.i.d.\ random variables that follow the exponential distribution with rate $1$.  In the asymptotic regime that $k$ goes to infinity, if $\max\left\{n_1,\cdots,n_m\right\}=o(\log k)$, then
$$
\mathbb{E}[\max\left\{Y_1,\cdots,Y_m\right\}] \leq \ln k + o(\ln k).
$$
\end{lemma}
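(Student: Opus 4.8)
The plan is to control the expected maximum of the $Y_i$'s by splitting the expectation at a well-chosen level $L = \ln k + \Theta(\text{something})$ and bounding the tail contribution above $L$. Concretely, for any $L \ge 0$ one has $\mathbb{E}[\max_i Y_i] \le L + \int_L^\infty \Pr\{\max_i Y_i > t\}\, dt \le L + \sum_{i=1}^m \int_L^\infty \Pr\{Y_i > t\}\, dt$ by the union bound. The key input is a sharp upper tail bound for $Y_i$, which is an Erlang$(n_i,1)$ random variable. Since $n_i \le n^* := \max_j n_j$ and the tail is increasing in the shape parameter, it suffices to bound $\Pr\{G_{n^*} > t\}$ where $G_{n^*} \sim \text{Erlang}(n^*,1)$. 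A standard Chernoff bound gives, for $t \ge n^*$, $\Pr\{G_{n^*} > t\} \le e^{-t}\sum_{j=0}^{n^*-1} t^j/j! \le e^{-t}\, t^{n^*-1}/(n^*-1)! \cdot C$ for a mild constant (or more simply $\Pr\{G_{n^*} > t\} \le e^{-t} (e t / n^*)^{n^*}$ type bounds). I would use whichever form makes the subsequent integral cleanest.

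The main steps, in order, are: (1) fix $n^* = \max\{n_1,\dots,n_m\} = o(\log k)$ and reduce to the single-variable Erlang tail as above; (2) choose the splitting level $L$; the natural choice is $L = \ln k + c \ln\ln k$ or $L = (1+\epsilon)\ln k$ for a slowly vanishing $\epsilon$ — since $n^* = o(\log k)$, even $L = \ln k + \omega(n^*)$ works and still equals $\ln k + o(\ln k)$; (3) bound the integral $\int_L^\infty \Pr\{G_{n^*} > t\}\, dt$. With $L \ge 2 n^*$ say, on this range the Erlang tail behaves essentially like $e^{-t}$ up to a polynomial-in-$t$ factor of degree $n^*-1$; integrating, $\int_L^\infty e^{-t} t^{n^*-1}\,dt \le e^{-L} L^{n^*-1} \cdot O(1)$ for $L$ large relative to $n^*$. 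Then the union-bound term is at most $m \cdot e^{-L} L^{n^*-1} \cdot O(1) \le k \cdot e^{-L} L^{n^*}$. (4) Verify this is $o(\ln k)$ — in fact $o(1)$ — for the chosen $L$: with $L = \ln k + g(k)$ where $g = \omega(n^* \ln\ln k)$ and $g = o(\ln k)$ (possible since $n^* = o(\log k)$ implies $n^* \ln\ln k = o(\log k)$ as long as $n^* = o(\log k / \log\log k)$; if only $n^* = o(\log k)$ is assumed one takes $g$ a larger sub-$\ln k$ function accordingly), one gets $k e^{-L} L^{n^*} = e^{g(k)\cdot(-1)} \cdot (\ln k + g)^{n^*} \to 0$. (5) Conclude $\mathbb{E}[\max_i Y_i] \le L + o(1) = \ln k + o(\ln k)$.

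The main obstacle is a bookkeeping one rather than a conceptual one: choosing the splitting level $L$ so that it simultaneously (a) makes the polynomial-times-exponential tail integral summable against the factor $m \le k$, and (b) stays within $\ln k + o(\ln k)$. This is exactly where the hypothesis $n^* = o(\log k)$ is used — it guarantees there is enough room between $\ln k$ and the next order term for the $L^{n^*}$ growth (which is $e^{n^* \ln L} = e^{o(\log k)\cdot O(\log\log k)}$) to be killed by the $e^{-(L-\ln k)}$ decay without pushing $L$ past $\ln k + o(\ln k)$. I would also need to double-check the regime $t \in [L, n^*]$ is vacuous, i.e. that $L \ge n^*$ eventually, which is immediate since $L \ge \ln k$ and $n^* = o(\log k)$; and I would handle the trivial sub-case $n^* = 0$ (all $Y_i \equiv 0$) separately, though it cannot occur here since queue lengths include the newly arrived task and are hence $\ge 1$.
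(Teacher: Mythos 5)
Your overall route --- truncate at a level $L$, write $\mathbb{E}[\max_i Y_i]\le L+\sum_i\int_L^\infty\Pr\{Y_i>t\}\,dt$, and control the Erlang tails --- is genuinely different from the paper's argument, which is a Chernoff/Jensen ``soft-max'' bound: the paper bounds $\exp\bigl(s\,\mathbb{E}[\max_i Y_i]\bigr)\le\sum_i\mathbb{E}[e^{sY_i}]\le m(1-s)^{-q}$ with $q=\max_i n_i$ and then sets $s=1-\ell(k)$ where $q=\ell(k)\ln k$, using $\ell\ln(1/\ell)\to 0$. Your plan can be completed, but not in the form you wrote it: there is a quantitative gap in steps (3)--(4). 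After correctly recording the tail $\Pr\{G_{n^*}>t\}\le C e^{-t}t^{n^*-1}/(n^*-1)!$, you discard the factorial and work with $k\,e^{-L}L^{n^*}$, which forces the requirement $g=\omega(n^*\ln\ln k)$ on the excess $g=L-\ln k$. That is incompatible with $g=o(\ln k)$ over the full hypothesis class: take $n^*=2\ln k/\ln\ln k$, which is $o(\log k)$; then $k\,e^{-L}(\ln k)^{n^*}=e^{-g+n^*\ln\ln k}\ge e^{2\ln k-g}=k^{2-o(1)}$ for any $g=o(\ln k)$, so the tail term is not even $o(\ln k)$ (it only needs to be $o(\ln k)$, not $o(1)$, but even that fails), and your parenthetical fallback of ``taking $g$ a larger sub-$\ln k$ function'' cannot repair this particular bound. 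The lemma is used in the paper with $h=o(\log k)$ and no $\log\log$ headroom, so this part of the range matters.

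The fix is to keep the factorial (equivalently, use the $\Pr\{G_{n^*}>t\}\le e^{-t}(et/n^*)^{n^*}$ form you also mention). For $L\ge 2n^*$ one gets $\sum_i\int_L^\infty\Pr\{Y_i>t\}\,dt\le 4k\,e^{-L}L^{n^*-1}/(n^*-1)!\le 4\exp\bigl(-g+n^*\ln\bigl(eL/n^*\bigr)\bigr)$ with $L=\ln k+g$; writing $n^*=\epsilon(k)\ln k$ with $\epsilon\to0$, the correction term satisfies $n^*\ln(eL/n^*)\le\ln k\cdot\epsilon\bigl(O(1)+\ln(1/\epsilon)\bigr)=o(\ln k)$ because $\epsilon\ln(1/\epsilon)\to0$ --- the same elementary fact the paper exploits through its choice $s=1-\ell(k)$. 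Choosing $g$ equal to this $o(\ln k)$ quantity plus $\ln\ln k$ gives $L=\ln k+o(\ln k)$ and a tail contribution $O(1/\ln k)$, completing the proof for all $n^*=o(\log k)$. The remaining housekeeping in your write-up (stochastic monotonicity of the Erlang tail in the shape parameter, $L\ge 2n^*$ eventually, the degenerate case) is fine.
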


The upper bound in Lemma~\ref{lem:max-upper} implies that a sufficient condition for this job to have zero queueing delay is that the lengths of the queues that its tasks are assigned to are of order $o(\log k)$.  As we pointed out earlier, this is different from the single-task job model since here zero queueing delay on a job level allows non-zero queueing delay for each of the tasks.

\subsubsection*{\textbf{Zero queueing delay and states}}
Lemma~\ref{lem:filling} below establishes a condition on the \emph{state seen by a job arrival} for all of its tasks to be assigned to queues of length $o(\log k)$ with high probability, which is a sufficient condition for the job to have zero queueing delay by Lemma~\ref{lem:max-upper}. Specifically, we consider the event that all the $k$ tasks of an incoming job are assigned to queueing positions below some threshold value $\ell$, and let this event be denoted by $\mathrm{FILL}_{\ell}$.  Lemma~\ref{lem:filling} shows that $\mathrm{FILL}_{\ell}$ happens with high probability given a proper condition on the state $\bm{s}$ for several values of interest for $\ell$.  Note that if we take $\ell = h$, which is $o(\log k)$, then $\mathrm{FILL}_{\ell}$ leads to zero queueing delay.  But Lemma~\ref{lem:filling} is more general in the sense that it allows other values for $\ell$, which is essential for other parts of the proofs including proving a state-space collapse result (Lemma~\ref{lem:ssc}) and bounding the dropping probability (Theorem~\ref{theorem zero delay}).
\begin{lemma}[Filling Probability]\label{lem:filling}
Under the assumptions of Theorem~\ref{main_theorem_1}, given that the system is in a state $\bm{s}$ such that
\begin{equation}\label{eq:cond-filling}
\sum_{i=1}^{\ell} s_i \leq \ell\left(1 - \frac{1}{4}\beta N^{-\alpha}\right),
\end{equation}
the probability of the event $\mathrm{FILL}_{\ell}$ for any $\ell\in\{h-1,h,b\}$ can be bounded as $\Pr\left\{\mathrm{FILL}_{\ell}\right\} \geq 1 - \frac{1}{N}$ when $N$ is sufficiently large.
\end{lemma}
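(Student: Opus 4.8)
The plan is to analyze the batch-filling task-assignment process directly and show that, under condition \eqref{eq:cond-filling}, the sampled $kd$ queues contain enough spaces below threshold $\ell$ to absorb all $k$ tasks with probability at least $1-\frac{1}{N}$. Recall from the equivalent description of batch-filling that $\mathrm{FILL}_\ell$ is exactly the event that $N_\ell(\mathcal{A}) \ge k$, where $\mathcal{A}$ is the random set of $kd$ sampled queues. So the whole lemma reduces to a concentration statement: $N_\ell(\mathcal{A})$ is a sum over the $kd$ sampled servers of the per-server number of spaces below $\ell$, which for a server of queue length $j$ equals $\max\{\ell - j, 0\} = \sum_{i=1}^{\ell}\mathbf{1}\{j < i\}$. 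Summing over the sample and taking expectations, $\expect[N_\ell(\mathcal{A})] = kd\sum_{i=1}^{\ell}(1 - S_i^{\mathrm{emp}})$ where $S_i^{\mathrm{emp}}$ is the empirical fraction of sampled servers with at least $i$ tasks; since we sample without replacement (or with, depending on the model) from a system in state $\bm{s}$, this is $kd\bigl(\ell - \sum_{i=1}^{\ell}s_i\bigr)$ in expectation over the sampling. Condition \eqref{eq:cond-filling} then gives $\expect[N_\ell(\mathcal{A})] \ge kd \cdot \frac{1}{4}\beta N^{-\alpha}\ell = \frac{1}{4}k\ell \cdot d(1-\lambda) \ge \frac{1}{4}k\ell\cdot\frac{8}{h}$ using the hypothesis $d \ge \frac{8}{(1-\lambda)h}$. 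For $\ell \in \{h-1, h\}$ this is at least (roughly) $2k$ or a bit less; for $\ell = b$ it is much larger. The point is that the mean comfortably exceeds $k$ for each of the three values of $\ell$ of interest, with a constant-factor slack.

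The key steps, in order: (i) write $N_\ell(\mathcal{A})$ as a sum of $kd$ (nearly) independent bounded random variables — each summand is the number of spaces below $\ell$ in one sampled queue, which lies in $[0,\ell]$ — being careful about the with/without-replacement distinction (for sampling without replacement one can either invoke a Hoeffding-type bound for sampling without replacement, or couple to the with-replacement case, or note $kd = o(N)$ so the difference is negligible); (ii) compute $\expect[N_\ell(\mathcal{A})]$ exactly as above and lower-bound it using \eqref{eq:cond-filling} and $d \ge \frac{8}{(1-\lambda)h}$, obtaining $\expect[N_\ell(\mathcal{A})] \ge 2k$ for $\ell = h$ (and analogously $\ge 2(h-1)k/h$, which is still $\ge (3/2)k$ say for $\ell = h-1$ once $h = \omega(1)$, and $\gg k$ for $\ell = b$); (iii) apply a one-sided concentration inequality (Hoeffding or Bernstein) to bound $\Pr\{N_\ell(\mathcal{A}) < k\} \le \Pr\{N_\ell(\mathcal{A}) < \expect[N_\ell(\mathcal{A})] - ck\}$ by $\exp\!\bigl(-\Theta(k^2/(kd\ell^2))\bigr) = \exp\!\bigl(-\Theta(k/(d\ell^2))\bigr)$; (iv) verify that this exponent is $\omega(\log N)$ — i.e., $k/(d\ell^2) = \omega(\log N)$ — so that the bound is $o(1/N)$. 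Using $\ell \le b$, $d(1-\lambda) \ge 8/h$ hence $d \le$ (something controlled), and the standing hypotheses $k/\log k = \Omega(\log N)$, $h = o(\log k)$, $b = \min\{N^\alpha, N^{0.5-\alpha}/k\}$, $k = o(N^{0.5-\alpha}/\log^2 N)$, one checks the required inequality; this is where all the delicate parameter bookkeeping from Theorem~\ref{main_theorem_1}'s assumptions gets used.

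The main obstacle I expect is step (iv): making the concentration exponent large enough uniformly over $\ell \in \{h-1,h,b\}$. For $\ell = h$ the summands are bounded by $h = o(\log k)$, which is tiny, so Hoeffding gives exponent $\Theta(k/(dh^2))$; plugging $d \approx 1/((1-\lambda)h) = N^\alpha/(\beta h)$ gives $\Theta(k/(N^\alpha h))$, and one needs this to be $\omega(\log N)$ — this should follow from $k = \Omega(N^\alpha \cdot \text{polylog})$-type constraints, but the interplay with $h = \omega(1)$ being allowed to grow must be handled (a larger $d$ than the minimum only helps the mean but hurts the variance, so one uses $d$ near its lower bound in the concentration step and monotonicity to handle larger $d$ — or rather, one notes $N_\ell$ is stochastically increasing in $d$, so it suffices to prove the bound for the smallest admissible $d$). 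For $\ell = b$ the summands can be as large as $b = N^{0.5-\alpha}/k$, so the variance proxy $kd b^2$ could be large; but there the mean is also enormous ($\approx kdb$, a factor $b$ above $k$), so the relative deviation needed is a constant, giving exponent $\Theta(kdb^2 \cdot (1)^2 / (kd b^2)) $ — no wait, one gets $\Theta((kdb)^2/(kdb^2)) = \Theta(kd)$, which is huge. So $\ell = b$ is actually the easy case and $\ell \in \{h-1, h\}$ is the binding one. I would organize the proof to treat $\ell = h$ as the representative case and remark that $\ell = h-1$ and $\ell = b$ are analogous or easier.
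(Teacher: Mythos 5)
Your reduction of $\mathrm{FILL}_\ell$ to the event $N_\ell(\mathcal{A})\ge k$, the computation of the mean $kd\bigl(\ell-\sum_{i=1}^\ell s_i\bigr)\ge \frac{1}{4}kd\beta N^{-\alpha}\ell\ge 2k\ell/h$, and the handling of sampling without replacement via Hoeffding's 1963 convex-ordering result all match the paper's argument. The gap is in your concentration step (iii)--(iv). The exponent you write down, $\Theta\bigl(k/(d\ell^2)\biggr)$, is the range-based Hoeffding exponent (summands in $[0,\ell]$), and it is \emph{not} $\omega(\log N)$ under the standing assumptions: at the minimal admissible $d=\Theta(N^{\alpha}/h)$ and $\ell=h$ it equals $\Theta\bigl(k/(N^{\alpha}h)\bigr)$, which tends to $0$ (not to $\infty$) in the explicitly allowed regime $k=\log^2 N$, $\alpha$ a constant in $(0,0.5)$. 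Your proposed rescue, ``this should follow from $k=\Omega(N^{\alpha}\cdot\mathrm{polylog})$-type constraints,'' invokes a hypothesis the theorem does not make; the assumptions only give $k/\log k=\Omega(\log N)$ and $k=o(N^{0.5-\alpha}/\log^2 N)$, so $k$ may be far smaller than $N^{\alpha}$. The same miscalculation affects your $\ell=b$ case: the sample mean there is $\Theta(kdbN^{-\alpha})$, not $\Theta(kdb)$, so the range-based exponent is again $\Theta(k\beta N^{-\alpha}/h)$ rather than $\Theta(kd)$, and it too vanishes for small $k$. Reducing to the minimal $d$ by stochastic monotonicity does not help, since the bound already fails there.

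What makes the lemma true is that the summands, although bounded by $\ell$, are almost always zero: in the extremal state permitted by \eqref{eq:cond-filling} ($s_1=\cdots=s_\ell=1-\frac{1}{4}\beta N^{-\alpha}$), each sampled queue contributes $\ell$ spaces with probability only $\frac{1}{4}\beta N^{-\alpha}$ and $0$ otherwise. A bound that sees only the range throws this away; you need a variance-aware or MGF-based bound. The paper applies a Chernoff argument directly to $\mathbb{E}[e^{-tN_\ell}]$ (after the convex-ordering step), shows the bound is maximized at the extremal state, and with $d\ge 8N^{\alpha}/(\beta h)$ and $t=\frac{\ln(2\ell)-\ln h}{\ell}$ obtains an exponent of the form $\frac{k}{h}\cdot(\text{negative constant})$, which is $\omega(\log N)$ precisely because $k/h\ge k/o(\log k)=\omega(\log N)$ --- note the exponent is $\Theta(k/h)$, independent of $N^{\alpha}$. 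Your Bernstein alternative could be made to work if you use the correct variance proxy $\Theta(kd\ell^2 N^{-\alpha})=\Theta(k\ell^2/h)$ from the extremal state (which again yields $\Theta(k/\ell)$ for $\ell\in\{h-1,h\}$), but as written, with the Hoeffding exponent, the proof does not close.
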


Here we provide an intuitive explanation for the condition \eqref{eq:cond-filling} when $\ell = h$. When a job arrives and sees state $\bm{s}$, if we choose one queue uniformly at random from all the queues, then the probability for the chosen queue to have a length of $i$ is $s_i-s_{i+1}$.  So the expected number of spaces below position $h$ in the chosen queue is $\sum_{i=0}^{h} (h-i)(s_i-s_{i+1})=h-\sum_{i=1}^{h} s_i$.  
The batch-filling policy samples $kd$ queues.  Thus the total expected number of spaces below position $h$ in the $kd$ sampled queues is $kd\left(h-\sum_{i=1}^{h} s_i\right)$.  To fit all the $k$ tasks of the incoming job to positions below $h$, we need $k\le kd\left(h-\sum_{i=1}^h s_i\right)$, which becomes the following condition when $d \geq\frac{8}{(1-\lambda)h}=\frac{8N^{\alpha}}{\beta h}$ as required in Theorem~\ref{main_theorem_1}:
\begin{equation*}
\sum_{i=1}^h s_i \leq h\left(1 - \frac{1}{8}\beta N^{-\alpha}\right).
\end{equation*}
We strengthen this requirement to the condition $\sum_{i=1}^h s_i \leq h\left(1 - \frac{1}{4}\beta N^{-\alpha}\right)$ to obtain a high-probability guarantee using concentration bounds based on Hoeffding's results on sampling without replacement \cite[Theorem~4]{Hoe_63}.

\subsubsection*{\textbf{Proof sketch for Theorems~\ref{main_theorem_1} and \ref{theorem zero delay}}}
We start by setting the goal to be proving the zero queueing delay result in Theorem~\ref{theorem zero delay}, and we will see how Theorem~\ref{main_theorem_1} emerges as an essential characterization of the system that is needed for Theorem~\ref{theorem zero delay}.

Considering the condition in Lemma~\ref{lem:filling} on the system state, we upper bound the steady-state job delay $T$ in the following way:
\begin{align}
\mathbb{E}[T] &\leq \mathbb{E}\left[T \;\middle|\; \sum_{i=1}^h S_i \leq h\left(1 - \frac{1}{4}\beta N^{-\alpha}\right)\right]\label{eq:T-upper-1}\\ 
&\mspace{21mu}+ \mathbb{E}\left[T \;\middle|\; \sum_{i=1}^h S_i > h\left(1 - \frac{1}{4}\beta N^{-\alpha}\right)\right]\cdot\Pr\left\{\sum_{i=1}^h S_i > h\left(1 - \frac{1}{4}\beta N^{-\alpha}\right)\right\},\label{eq:T-upper-2}
\end{align}
where we have used the fact that $\Pr\mspace{-3mu}\left\{\mspace{-3mu}\sum_{i=1}^h S_i \le h\left(1 - \frac{1}{4}\beta N^{-\alpha}\right)\right\}\le 1$.  We can easily bound the first summand \eqref{eq:T-upper-1} using Lemma~\ref{lem:filling} since this is the case where all the tasks of an incoming jobs are sent to queues with lengths no larger than $h$, which satisfies $h=o(\log k)$ and thus results in zero queueing delay.

We now focus on bounding the second summand \eqref{eq:T-upper-2}, for which it suffices to show that the probability $\Pr\left\{\sum_{i=1}^h S_i > h\left(1 - \frac{1}{4}\beta N^{-\alpha}\right)\right\}$ is small enough.  By the Markov inequality,
\begin{align*}
\Pr\left\{\sum_{i=1}^h S_i > h\left(1 - \frac{1}{4}\beta N^{-\alpha}\right)\right\}
\le\frac{\mathbb{E}\left[\max\left\{\sum_{i=1}^b S_i - h\left(1 - \frac{1}{2}\beta N^{-\alpha}\right), 0\right\}\right]}{\frac{1}{4}\beta N^{-\alpha}}.
\end{align*}
It then boils down to bounding $\mathbb{E}\left[\max\left\{\sum_{i=1}^b S_i - h\left(1 - \frac{1}{2}\beta N^{-\alpha}\right), 0\right\}\right]$, which is what Theorem~\ref{main_theorem_1} achieves.

To prove Theorem~\ref{main_theorem_1}, we follow the general framework of Stein's method (see, e.g., \cite{b14,LiuYin_20}).  The main idea is to couple our Markov chain $\left\{\bm{S}(t) \colon t \geq 0\right\}$ with an auxiliary process that is easier to analyze, and bound their difference through generator approximation.  In particular, we compare the dynamics of $\sum_{i=1}^b S_i(t)$ with a continuous function $x(t)$ given by the following simple fluid model as our auxiliary process:
\begin{equation*}
\dot{x}(t) =(-\delta)\mathds{1}_{\{x>0\}},
\end{equation*}
where $\delta$ is a properly chosen parameter that reflects the drift of $\sum_{i=1}^b S_i(t)$.
We reiterate that a key in our analysis is a novel state-space collapse result (Lemma~\ref{lem:ssc}) that we establish, which characterizes how balanced the queues are from a job's point of view.

Combining the arguments above for bounding \eqref{eq:T-upper-1} and \eqref{eq:T-upper-2}, we can conclude that the steady-state job delay $\expect[T]$ achieves zero queueing delay.
}

\subsection{Proof of Theorem \ref{main_theorem_1}}\label{sec:proof-main_theorem_1}
\begin{proof}
As explained in the proof sketch, we compare our system with the following fluid model:
\begin{equation}\label{eq:fluid}
\dot{x}(t) =(-\delta)\mathds{1}_{\{x>0\}},
\end{equation}
where $x(t)$ is continuous and $\delta=\frac{(k+1)\log N}{\sqrt{N}}$.  
{When viewed as a continuous-time Markov chain, this fluid model (with a possibly random initial state) can be described by its generator \cite{EthKur_86}, denoted as $\overline{G}$ and given by}
\begin{equation*}
\overline{G}g(x)=g'(x)\cdot(-\delta)\mathds{1}_{\{x>0\}}
\end{equation*}
for any differentiable function $g$. Recall that we will compare the dynamics of $\sum_{i=1}^b S_i(t)$ in our load-balancing system with $x(t)$.

\begin{sloppypar}
The quantity of interest in Theorem~\ref{main_theorem_1} is $\expect\left[\max\left\{\sum_{i=1}^b S_i - \eta, 0\right\}\right]$, where we have used the notation $\eta=h\left(1 - \frac{1}{2}\beta N^{-\alpha}\right)$ for conciseness.  Recall that $\bm{S}$ follows the stationary distribution of $\left\{\bm{S}(t) \colon t \geq 0\right\}$.  To couple $\left\{\bm{S}(t) \colon t \geq 0\right\}$ with the fluid model, we solve for a function $g$ such that
\begin{equation}\label{eq:Poisson-equation}
\begin{split}
\overline{G}g(x)&=\max\left\{x-\eta, 0\right\},\\
g(0)&=0.
\end{split}
\end{equation}
It is not hard to see that the solution is
\begin{equation}
g(x) = \frac{(x - \eta)^2}{2(-\delta)}\mathds{1}_{\left\{x \geq \eta\right\}}.
\end{equation}
\end{sloppypar}

\begin{sloppypar}
Now we utilize this function $g$ to bound $\expect\left[\max\left\{\sum_{i=1}^b S_i -\eta, 0\right\}\right]$ through generator approximation.  Let $G$ be the generator of $\left\{\bm{S}(t)\colon t\ge 0\right\}$.  Then
\begin{equation*}
Gg\left(\sum_{i=1}^b s_i\right) = \sum_{\bm{s}' \in \mathcal{S}} r_{\bm{s} \to \bm{s}'}\left(g\left(\sum_{i=1}^b s'_i\right) - g\left(\sum_{i=1}^b s_i\right)\right),
\end{equation*}
where $r_{\bm{s} \to \bm{s}'}$ is the transition rate from state $\bm{s}$ to $\bm{s}'$. Since $g\left(\sum_{i=1}^b s_i\right)$ is bounded on $\mathcal{S}$, it holds that 
\begin{equation}
\mathbb{E}\left[Gg\left(\sum_{i=1}^b S_i\right)\right] = 0.
\end{equation}
Combining this with the equations in \eqref{eq:Poisson-equation} gives,
\begin{align}
\mathbb{E}\left[\max\left\{\sum_{i=1}^b S_i - \eta,0\right\}\right]
&= \mathbb{E}\left[\overline{G}g\left(\sum_{i=1}^b S_i\right)\right]\nonumber\\
&= \mathbb{E}\left[\overline{G}g\left(\sum_{i=1}^b S_i\right)- Gg\left(\sum_{i=1}^b S_i\right)\right]\nonumber\\
&= \mathbb{E}\left[g'\left(\sum_{i=1}^b S_i\right)(-\delta) - Gg\left(\sum_{i=1}^b S_i\right)\right].\label{eq:generator-approx}
\end{align}
This is what is referred to as a \emph{generator approximation} since we are approximating the generator $G$ with $\overline{G}$.
\end{sloppypar}

Next we take a closer look at the term $Gg\left(\sum_{i=1}^b S_i\right)$ and derive an upper bound for \eqref{eq:generator-approx}.
Let $P_A(\bm{s})$ be the probability that a job arrival is admitted into the system given that the system is at state $s$, i.e., the probability that all the tasks of the job are routed to positions below $b$.  Then
\begin{align*}
Gg\left(\sum_{i=1}^b s_i\right) &= \frac{N\lambda}{k} P_A(\bm{s})\left(g\left(\sum_{i=1}^b s_i + \frac{k}{N}\right) - g\left(\sum_{i=1}^b s_i\right)\right)+ Ns_1\left(g\left(\sum_{i=1}^b s_i - \frac{1}{N}\right) - g\left(\sum_{i=1}^b s_i\right)\right),
\end{align*}
where first term is the drift due to a job arrival and the second term is due to a task departure.
To derive an upper bound on \eqref{eq:generator-approx}, we divide the discussion into the three cases below.  Recall that $g(x)=\frac{(x-\eta)^2}{2(-\delta)}\mathds{1}_{\{x\ge \eta\}}$ and $g'(x)=\frac{x-\eta}{-\delta}\mathds{1}_{\{x\ge \eta\}}$.

\noindent\textbf{Case 1}: $\sum_{i=1}^b S_i < \eta - \frac{k}{N}$. In this case, clearly $g'\left(\sum_{i=1}^b S_i\right) = 0$ and $Gg\left(\sum_{i=1}^b S_i\right) = 0$.

\noindent\textbf{Case 2}: $\sum_{i=1}^b S_i \in [\eta - \frac{k}{N}, \eta + \frac{1}{N})$. By the mean value theorem,
\begin{align}
g'\left(\sum_{i=1}^b S_i\right)(-\delta) - Gg\left(\sum_{i=1}^b S_i\right)
&= g'\left(\sum_{i=1}^b S_i\right)(-\delta) - \left(\frac{N\lambda}{k}P_A(\bm{S})\frac{k}{N}g'(\xi) + NS_1\frac{-1}{N}g'(\tilde{\xi})\right)\nonumber\\
&\le g'\left(\sum_{i=1}^b S_i\right)(-\delta) - \lambda g'(\xi) + S_1g'(\tilde{\xi}),\label{eq:case-2}
\end{align}
where $\xi \in \left(\sum_{i=1}^b S_i, \sum_{i = 1}^b S_i + \frac{k}{N}\right)$, $\tilde{\xi} \in \left(\sum_{i=1}^b S_i - \frac{1}{N}, \sum_{i = 1}^b S_i\right)$, and \eqref{eq:case-2} is true since $P_A(\bm{S})\le 1$ and $g'(x)\le 0$ for all $x$.

\noindent\textbf{Case 3}: $\sum_{i=1}^b S_i \geq \eta + \frac{1}{N}$. Since $g'(x)$ is continuous for all $x$, by the second order Taylor expansion in the Lagrange form,
\begin{align}
&\mspace{23mu}g'\left(\sum_{i=1}^b S_i\right)(-\delta) - Gg\left(\sum_{i=1}^b S_i\right)\nonumber\\
&= g'\left(\sum_{i=1}^b S_i\right)(-\delta) -\frac{N\lambda}{k}P_A(\bm{S})\left( \frac{k}{N}g'\left(\sum_{i=1}^b S_i\right) + \frac{k^2}{2N^2}g''(\zeta)\right)
-NS_1\left(\frac{-1}{N}g'\left(\sum_{i=1}^b S_i\right) + \frac{1}{2N^2}g''(\tilde{\zeta})\right)\nonumber\\
&\le g'\left(\sum_{i=1}^b S_i\right)\left(-\delta - \lambda + S_1\right) - \frac{1}{2N}\left(\lambda k g''(\zeta) + S_1g''(\tilde{\zeta})\right),
\end{align}
where $\zeta \in \left(\sum_{i=1}^b S_i, \sum_{i = 1}^b S_i + \frac{k}{N}\right)$, $\tilde{\zeta} \in \left(\sum_{i=1}^b S_i - \frac{1}{N}, \sum_{i = 1}^b S_i\right)$.

Combining these three cases yields
\begin{align}
&\mspace{23mu}\mathbb{E}\left[g'\left(\sum_{i=1}^b S_i\right)(-\delta) - Gg\left(\sum_{i=1}^b S_i\right)\right]\nonumber\\
&\le\mathbb{E}\left[\left(g'\left(\sum_{i=1}^b S_i\right)(-\delta)-\lambda g'(\xi) + S_1g'(\tilde{\xi})\right)\mathds{1}_{\left\{\sum_{i=1}^b S_i \in [\eta - \frac{k}{N}, \eta + \frac{1}{N})\right\}}\right]\label{termC}\\
&\mspace{21mu}-\frac{1}{2N}\mathbb{E}\left[(\lambda k g''(\zeta) + S_1g''(\tilde{\zeta}))\mathds{1}_{\left\{\sum_{i=1}^b S_i \geq \eta + \frac{1}{N}\right\}}\right] \label{termB}\\
&\mspace{21mu}+\mathbb{E}\left[g'\left(\sum_{i=1}^b S_i\right)(-\delta - \lambda + S_1) \mathds{1}_{\left\{\sum_{i=1}^b S_i \geq \eta + \frac{1}{N}\right\}}\right].\label{termA}
\end{align}

The first two terms \eqref{termC} and \eqref{termB} are easy to bound once we notice that for any $x \in \left[\eta - \frac{k + 1}{N},\eta + \frac{k + 1}{N}\right]$, $|g'(x)| \leq \frac{|x - \eta|}{\delta} \leq \frac{1}{\sqrt{N}\log N}$, and for any $x \in (\eta, +\infty)$, $|g''(x)| = \frac{1}{\delta} = \frac{\sqrt{N}}{(k + 1)\log N}$.
Then when $N$ is sufficiently large,
\begin{align}
|\eqref{termC}| &\leq \frac{1}{\sqrt{N}\log N}\left(\frac{(k+1)\log N}{\sqrt{N}}+1+1\right)\le\frac{3}{\sqrt{N}\log N}\nonumber,
\end{align}
and
\begin{align}
|(\ref{termB})| &\leq \frac{1}{2N}\frac{\sqrt{N}}{(k+1)\log N}\left(\lambda k + 1\right) \leq \frac{1}{\sqrt{N}\log N}.\nonumber
\end{align}

The \emph{key} in this proof is to bound the term \eqref{termA}, for which we utilize the state-space collapse result we establish in Lemma~\ref{lem:ssc} below.  The proof of Lemma~\ref{lem:ssc} is given in Appendix~\ref{app:lemma:ssc}.
\begin{lemma}[State-Space Collapse]\label{lem:ssc}
Under the assumption of Theorem \ref{main_theorem_1}, consider the following Lyapunov function:
$$
V(\bm{s}) = \min\left\{\frac{1}{h - 1}\sum_{i=h}^b s_i, b\left(\left(1 - \frac{1}{2}\beta N^{-\alpha}\right) - \frac{1}{h-1}\sum_{i=1}^{h - 1} s_i\right)^+\right\},
$$
where the superscript $^+$ denotes the function $x^+=\max\{x,0\}$.
Let $B = \frac{b - h + 1}{h - 1}\left(\beta N^{-\alpha} + \frac{\log N}{\sqrt{N}}\right)$. Then for any state $\bm{s}$ such that $V(\bm{s}) > B$, its Lyapunov drift can be upper bounded as follows
$$
\Delta V(\bm{s}) = GV(\bm{s}) \leq -\frac{b}{\sqrt{N}}.
$$
Consequently, when $N$ is sufficiently large,
\begin{align*}
\Pr\left\{V(\bm{S}) > B + \frac{2kb\log^2 N}{(h-1)\sqrt{N}}\right\}\le e^{-\frac{1}{2}\log^2 N}.
\end{align*}
\end{lemma}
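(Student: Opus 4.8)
The plan is to prove Lemma~\ref{lem:ssc} in two stages: first the pointwise drift bound $\Delta V(\bm s)\le -b/\sqrt N$ whenever $V(\bm s)>B$, and then the stated tail bound via an exponential‑Lyapunov (moment‑generating‑function) argument. Write $V_1(\bm s)=\frac{1}{h-1}\sum_{i=h}^b s_i$ and $V_2(\bm s)=b\big((1-\frac12\beta N^{-\alpha})-\frac{1}{h-1}\sum_{i=1}^{h-1}s_i\big)^+$, so $V=\min\{V_1,V_2\}$. The structural fact I would exploit is that $GV(\bm s)\le GV_j(\bm s)$ whenever $V_j$ is the minimizing branch at $\bm s$ (since $V(\bm s')\le V_j(\bm s')$ while $V(\bm s)=V_j(\bm s)$), so it suffices to bound $GV_1$ on $\{V_1\le V_2\}$ and $GV_2$ on $\{V_2\le V_1\}$. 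On either branch, $V(\bm s)>B$ forces both $V_1(\bm s)>B$ and $V_2(\bm s)>B$, which gives two facts used throughout: (i) from $V_1>B$ and $s_h\ge\frac{1}{b-h+1}\sum_{i=h}^b s_i$ we get $s_h>\beta N^{-\alpha}+\frac{\log N}{\sqrt N}$; (ii) from $V_2>B>0$ the $(\cdot)^+$ is strictly positive, so $\frac{1}{h-1}\sum_{i=1}^{h-1}s_i<1-\frac12\beta N^{-\alpha}<1-\frac14\beta N^{-\alpha}$, which is exactly condition~\eqref{eq:cond-filling} for $\ell=h-1$, so by Lemma~\ref{lem:filling} the event $\mathrm{FILL}_{h-1}$ occurs with probability at least $1-1/N$; one also checks that a single jump changes $\sum_{i=1}^{h-1}s_i$ by at most $k/N\ll B/b$, so $V_2$ stays strictly positive and is locally linear.

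For the drift on $\{V=V_1\}$: the negative drift comes from departures from queues of length $\ge h$ (total rate $Ns_h$, each decreasing $V_1$ by $\frac{1}{(h-1)N}$), contributing $-s_h/(h-1)$; arrivals contribute at most $\frac{\lambda}{(h-1)N}$, since on $\mathrm{FILL}_{h-1}$ no task lands in a queue of length $\ge h-1$, so $\sum_{i=h}^b s_i$ is unchanged, and the complement has probability at most $1/N$. Using (i) together with $b=\min\{N^\alpha,N^{0.5-\alpha}/k\}$, $h=o(\log k)$, and $\frac{k}{\log k}=\Omega(\log N)$, one gets $GV_1(\bm s)\le -b/\sqrt N$ (splitting on whether $b=N^\alpha$, which forces $\alpha<1/4$ and then $\beta N^{1/2-2\alpha}$ beats $h-1$, or $b=N^{0.5-\alpha}/k$ and then $\beta k$ beats $h-1$). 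For the drift on $\{V=V_2\}$: departures from queues of length in $[1,h-1]$ raise $V_2$ by a total of $\frac{b(s_1-s_h)}{h-1}$, whereas on $\mathrm{FILL}_{h-1}$ all $k$ tasks go to positions $\le h-1$ and hence raise $\sum_{i=1}^{h-1}s_i$ by exactly $k/N$, lowering $V_2$ by $\frac{bk}{(h-1)N}$; since arrivals can only decrease $V_2$, their net contribution is at most $-\lambda(1-\frac1N)\frac{b}{h-1}$. The total is at most $\frac{b}{h-1}\big(s_1-s_h-\lambda(1-\tfrac1N)\big)\le\frac{b}{h-1}\big(\beta N^{-\alpha}-s_h+\tfrac\lambda N\big)$ using $s_1\le1$ and $1-\lambda=\beta N^{-\alpha}$, and by (i) this is at most $-\frac{b\log N}{2(h-1)\sqrt N}\le -b/\sqrt N$ for large $N$ since $h-1\le\frac12\log N$.

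For the tail bound I would run the textbook exponential‑Lyapunov estimate. The jumps of $V$ are bounded by $\nu_{\max}=\frac{bk}{(h-1)N}$ and the exit rate from any state by $2N$, so a second‑order expansion of $e^{\theta V}$ gives $Ge^{\theta V}(\bm s)\le e^{\theta V(\bm s)}\big(\theta GV(\bm s)+O(\theta^2 e^{\theta\nu_{\max}}\tfrac{b^2k}{(h-1)^2N})\big)$; choosing $\theta$ of order $\frac{(h-1)^2\sqrt N}{bk}$ (so $\theta\nu_{\max}=O((h-1)/\sqrt N)\ll1$ and the second term is absorbed into half of $\theta\cdot(b/\sqrt N)$) yields $Ge^{\theta V}(\bm s)\le-\epsilon e^{\theta V(\bm s)}+C'\mathbf 1_{\{V(\bm s)\le B\}}$ with $\epsilon=\Theta(\theta b/\sqrt N)$. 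Since the chain has finite state space, $\mathbb E[Ge^{\theta V}(\bm S)]=0$, hence $\mathbb E[e^{\theta V(\bm S)}]\le C'/\epsilon=O(\sqrt N/(h-1))$, and Markov's inequality gives $\Pr\{V(\bm S)>B+t\}\le O(\sqrt N)\,e^{-\theta t}$. Plugging in $t=\frac{2kb\log^2 N}{(h-1)\sqrt N}$ makes $\theta t=\Theta((h-1)\log^2 N)$, which dominates $\frac12\log N$ (the logarithm of the prefactor) because $h=\omega(1)$, so the bound is at most $e^{-\frac12\log^2 N}$ for $N$ large.

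The step I expect to be the main obstacle is the drift computation on the $V_2$ branch: one must track precisely how $\mathrm{FILL}_{h-1}$ (furnished by Lemma~\ref{lem:filling}) controls the per‑jump increment of $\sum_{i=1}^{h-1}s_i$, use that arrivals can never increase $V_2$ so that the rare complement of $\mathrm{FILL}_{h-1}$ costs nothing, and observe that the residual positive drift from low‑queue departures is cancelled exactly by the $\frac{\log N}{\sqrt N}$ slack in $s_h$ inherited from $V_1>B$ — it is this coupling between the two branches that makes the $\min$ Lyapunov function go through. Beyond that, the remaining work is routine but delicate: verifying the asymptotic inequalities among $h,b,k,N,\alpha,\beta$ in each sub‑case and carefully bookkeeping the transition rates in $GV_1$ and $GV_2$.
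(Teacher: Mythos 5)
Your proposal is correct, and for the heart of the lemma---the pointwise drift bound---it is essentially the paper's own argument: you split according to which branch of the minimum is active, use $V(\bm{s})>B$ to extract both $s_h>\beta N^{-\alpha}+\frac{\log N}{\sqrt{N}}$ (since the first branch $\terml$ exceeds $B$ and $s_h\ge\frac{1}{b-h+1}\sum_{i=h}^b s_i$) and condition \eqref{eq:cond-filling} with $\ell=h-1$ (since the second branch $\termr$ exceeds $B>0$), invoke Lemma~\ref{lem:filling} to get $\Pr\{\mathrm{FILL}_{h-1}\}\ge 1-\frac1N$, and your two branch computations (departures at total rate $Ns_h$ contributing $-\frac{s_h}{h-1}$ on the first branch; $\frac{b}{h-1}(s_1-s_h)-\lambda\left(1-\frac1N\right)\frac{b}{h-1}$ on the second, closed by the $\frac{\log N}{\sqrt N}$ slack in $s_h$ borrowed from the other branch) are exactly those in Appendix~\ref{app:lemma:ssc}; your explicit checks that arrivals never increase the second branch and that $B\ge\frac{kb}{(h-1)N}$ so a $\mathrm{FILL}_{h-1}$ arrival lowers it by the full $\frac{kb}{(h-1)N}$ correspond to the paper's parenthetical uses of $\termr>B$, and your case split on $b=\min\{N^{\alpha},N^{0.5-\alpha}/k\}$ is a slightly longer route to the paper's one-line chain $\frac{\beta N^{-\alpha}}{h-1}\ge\frac{N^{-\alpha}}{k}\ge\frac{b}{\sqrt N}$. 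Where you genuinely differ is the conversion of drift into the stated tail bound: the paper simply applies its off-the-shelf Lyapunov tail bound (Lemma~\ref{lem:tail}) with $\nu_{\mathrm{max}}\le\frac{kb}{N(h-1)}$, $f_{\mathrm{max}}\le\frac{b}{h-1}$, $\mathcal{E}=\mathcal{S}$ and $j=\sqrt{N}\log^2 N$, whereas you rederive the tail self-containedly via an exponential-Lyapunov argument (second-order expansion of $e^{\theta V}$ under the generator, rate-weighted quadratic variation of order $\frac{b^2k}{(h-1)^2N}$, $\theta\asymp\frac{(h-1)^2\sqrt N}{bk}$ so that $\theta\nu_{\mathrm{max}}=O\!\left(\frac{h-1}{\sqrt N}\right)=o(1)$, stationarity to kill $\expect[Ge^{\theta V}(\bm{S})]$, then Markov). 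Both yield an exponent of order $(h-1)\log^2 N$, which dominates $\frac12\log^2 N$ because $h=\omega(1)$; the paper's route is shorter since Lemma~\ref{lem:tail} is quoted without proof, while yours avoids that lemma entirely at the cost of the extra bookkeeping you sketch. One small slip to fix in your write-up: the bound on $\expect[e^{\theta V(\bm{S})}]$ should carry a factor $e^{\theta B}$ (it is $O\!\left(e^{\theta B}\sqrt{N}/(h-1)\right)$, not $O(\sqrt N/(h-1))$), which is harmless because it cancels against $e^{-\theta(B+t)}$ in the Markov step, leaving the claimed $O(\sqrt N)e^{-\theta t}$ and hence $e^{-\frac12\log^2 N}$ for large $N$.
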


With Lemma~\ref{lem:ssc}, we partition the probability space based on the value of $V(S)$ for bounding \eqref{termA}.  Note that $g'\left(\sum_{i=1}^b S_i\right)(-\delta - \lambda + S_1) \mathds{1}_{\left\{\sum_{i=1}^b S_i \geq \eta + \frac{1}{N}\right\}}$ is always no larger than $\frac{2b}{\delta}$ for large enough $N$.  Then \eqref{termA} can be upper bounded as:
\begin{align}
\eqref{termA}&\le \mathbb{E}\left[g'\left(\sum_{i=1}^b S_i\right)(-\delta - \lambda + S_1)
\cdot\mathds{1}_{\left\{\sum_{i=1}^b S_i \geq \eta + \frac{1}{N}\right\}}\;\middle|\; V(\bm{S}) \le B + \frac{2kb\log^2 N}{(h-1)\sqrt{N}}\right]\nonumber\\
&\mspace{21mu}+\frac{2b}{\delta}\Pr\left\{V(\bm{S}) > B + \frac{2kb\log^2 N}{(h-1)\sqrt{N}}\right\}.\label{eq:upper-termA}
\end{align}
Now we focus on the case where we are given the condition that $V(\bm{S}) \le B + \frac{2kb\log^2 N}{(h-1)\sqrt{N}}$.  Our goal is to show that $S_1$ is large enough such that $\delta+\lambda-S_1<0$.  Intuitively, this condition on $V(\bm{S})$ implies that we either have a small $\sum_{i=h}^b S_i$, which leads to a large $S_1$ when combined with the condition $\sum_{i=1}^b S_i \geq \eta + \frac{1}{N}$ in the indicator, or a large $\sum_{i=1}^{h - 1} S_i$, which directly gives a large $S_1$ since $S_1\geq \cdots\ge S_{h - 1}$.

\begin{sloppypar}
If $\frac{1}{h - 1}\sum_{i=h}^b S_i \le  b\left(\left(1 - \frac{1}{2}\beta N^{-\alpha}\right) - \frac{1}{h-1}\sum_{i=1}^{h - 1} S_i\right)^+$ in $V(\bm{S})$, the condition $V(\bm{S}) \le B + \frac{2kb\log^2 N}{(h-1)\sqrt{N}}$ implies that
\begin{equation}\label{eq:left-small}
\frac{1}{h - 1}\sum_{i=h}^{b} S_i \leq \frac{b - h + 1}{h - 1}\left(\beta N^{-\alpha} + \frac{\log N}{\sqrt{N}}\right) + \frac{2kb\log^2 N}{(h-1)\sqrt{N}}.
\end{equation}
Recall that $b = \min\left\{N^{\alpha},\frac{N^{0.5-\alpha}}{k}\right\}$ and $h = o(\log k)$.  Note that the indicator function in \eqref{eq:upper-termA} makes it sufficient to consider the case where $\sum_{i=1}^b S_i \geq \eta + \frac{1}{N}$, which implies $(h-1)S_1+\sum_{i=h}^b S_i \geq \eta$.  Combining this with \eqref{eq:left-small} gives
\begin{align*}
S_1&\ge\frac{\eta}{h-1}-\frac{b - h + 1}{h - 1}\left(\beta N^{-\alpha} + \frac{\log N}{\sqrt{N}}\right) - \frac{2kb\log^2 N}{(h-1)\sqrt{N}}\\
&\ge 1 + (1-\beta)\frac{1}{h-1}-\frac{1}{2}\beta N^{-\alpha} + o\left(\frac{1}{h}\right)
\end{align*}
when $N$ is sufficiently large.  Note that $\delta=o\left(\frac{1}{h}\right)$ and $\lambda=1-\beta N^{-\alpha}$.  Therefore, $\lambda + \delta - S_1 < 0$ when $N$ is sufficiently large.
\end{sloppypar}

\begin{sloppypar}
If $\frac{1}{h - 1}\sum_{i=h}^b S_i >  b\left(\left(1 - \frac{1}{2}\beta N^{-\alpha}\right) - \frac{1}{h-1}\sum_{i=1}^{h - 1} S_i\right)^+$ in $V(\bm{S})$, the condition $V(\bm{S}) \le B + \frac{2kb\log^2 N}{(h-1)\sqrt{N}}$ implies that
\begin{align*}
b&\left(1 - \frac{1}{2}\beta N^{-\alpha}-\frac{1}{h-1}\sum_{i=1}^{h-1} S_i\right) \le B + \frac{2kb\log^2 N}{(h-1)\sqrt{N}}.
\end{align*}
Then
\begin{align*}
S_1 &\ge \frac{1}{h-1}\sum_{i=1}^{h-1} S_i\\
&\ge 1 - \frac{1}{2}\beta N^{-\alpha} - \frac{1}{b}\left(B + \frac{2kb\log^2 N}{(h-1)\sqrt{N}}\right) \\
&\geq 1 - \frac{1}{2}\beta N^{-\alpha} + o(N^{-\alpha}).
\end{align*}
As a result, again we have $\lambda + \delta - S_1 \leq -\frac{1}{2}\beta N^{-\alpha} + o(N^{-\alpha}) < 0$ when $N$ is sufficiently large.
\end{sloppypar}

Inserting these bounds back to \eqref{eq:upper-termA} gives that when $N$ is sufficiently large,
\begin{align*}
\eqref{termA} &\le 0 + \frac{2b}{\delta}\Pr\left\{V(\bm{S}) > B + \frac{2kb\log^2 N}{(h-1)\sqrt{N}}\right\}\\
&\le \frac{2b}{\delta}e^{-\frac{1}{2}\log^2 N}\\
&\leq \frac{1}{\sqrt{N}\log N}.
\end{align*}

Combining the bounds for \eqref{termC}, \eqref{termB} and \eqref{termA}, we have
$$
\mathbb{E}\left[\max\left\{\sum_{i=1}^b S_i - h\left(1-\frac{1}{2}\beta N^{-\alpha}\right), 0\right\}\right] \le \frac{5}{\sqrt{N}\log N},
$$
which completes the proof of Theorem \ref{main_theorem_1}.
\end{proof}

\subsection{Proof of Theorem \ref{theorem zero delay}}\label{sec:proof-theorem zero delay}
\begin{proof}
We first bound the dropping probability $\probdrop$ using Lemma~\ref{lem:filling} with the threshold value $\ell=b$.  Note that an incoming job does not get dropped if and only if all its $k$ tasks are routed to queueing positions below threshold $b$, which is the complement of the event $\textrm{FILL}_b$ in Lemma~\ref{lem:filling}. Thus,
\begin{align*}
\probdrop&=1-\Pr\{\textrm{FILL}_b\}\\
&=1-\Pr\left\{\textrm{FILL}_b\;\middle|\; \sum_{i=1}^b S_i \le b\left(1 - \frac{1}{4}\beta N^{-\alpha}\right)\right\}
\cdot\Pr\left\{\sum_{i=1}^b S_i \le b\left(1 - \frac{1}{4}\beta N^{-\alpha}\right)\right\}\\
&\mspace{23mu}-\Pr\left\{\textrm{FILL}_b\;\middle|\; \sum_{i=1}^b S_i > b\left(1 - \frac{1}{4}\beta N^{-\alpha}\right)\right\}
\cdot\Pr\left\{\sum_{i=1}^b S_i > b\left(1 - \frac{1}{4}\beta N^{-\alpha}\right)\right\}.
\end{align*}
We can easily have that $\Pr\left\{\textrm{FILL}_b\;\middle|\; \sum_{i=1}^b S_i \le b\left(1 - \frac{1}{4}\beta N^{-\alpha}\right)\right\}\le \frac{1}{N}$ using Lemma~\ref{lem:filling}.

Now we bound $\Pr\left\{\sum_{i=1}^b S_i > b\left(1 - \frac{1}{4}\beta N^{-\alpha}\right)\right\}$ using Theorem~\ref{main_theorem_1}.  Note that
\begin{align*}
&\mspace{23mu}\Pr\left\{\sum_{i=1}^b S_i > b\left(1 - \frac{1}{4}\beta N^{-\alpha}\right)\right\}\\
&\le \Pr\left\{\max\left\{\sum_{i=1}^b S_i - h\left(1 - \frac{1}{2}\beta N^{-\alpha}\right),0\right\}> b - \frac{b}{4}\beta N^{-\alpha}-h\right\}\\
&\le \Pr\left\{\max\left\{\sum_{i=1}^b S_i - h\left(1 - \frac{1}{2}\beta N^{-\alpha}\right),0\right\}> \frac{b}{2}\right\},
\end{align*}
where we have used the fact that $\frac{b}{4}\beta N^{-\alpha}+h\le \frac{b}{2}$ when $N$ is sufficiently large due to our assumptions on $b$ and $h$.  Then by Markov's inequality,
\begin{align*}
\Pr\left\{\sum_{i=1}^b S_i > b\left(1 - \frac{1}{4}\beta N^{-\alpha}\right)\right\}
&\le\frac{\expect\left[\max\left\{\sum_{i=1}^b S_i - h\left(1 - \frac{1}{2}\beta N^{-\alpha}\right),0\right\}\right]}{\frac{b}{2}}\\
&\le \frac{10}{b\sqrt{N}\log N}.
\end{align*}

Combining the arguments above yields
\begin{align*}
\probdrop\ge 1-\frac{1}{N}-\frac{10}{b\sqrt{N}\log N}\ge 1-\frac{11}{b\sqrt{N}\log N}
\end{align*}
when $N$ is sufficiently large.

Next we bound the expected job delay given that a job is admitted, i.e., $\expect[T\;|\;\textup{admitted}]$.  We define the delay of a job that is dropped to be zero since it leaves the system immediately after arrival.  Then $\expect[T]=\expect[T\;|\;\textup{admitted}]\cdot(1-\probdrop)+\expect[T\;|\;\textrm{dropped}]\cdot\probdrop$, and thus $\expect[T\;|\;\textup{admitted}]=\frac{\expect[T]}{1-\probdrop}$.  So we can focus on bounding $\expect[T]$, following the outline given in the proof sketch.

We bound $\expect[T]$ in the following way
\begin{align}
\mathbb{E}[T] &\leq \mathbb{E}\left[T \;\middle|\; \sum_{i=1}^h S_i \leq h\left(1 - \frac{1}{4}\beta N^{-\alpha}\right)\right]\label{eq:term-fill}\\ 
&\mspace{21mu}+ \mathbb{E}\left[T \;\middle|\; \sum_{i=1}^h S_i > h\left(1 - \frac{1}{4}\beta N^{-\alpha}\right)\right]
\cdot\Pr\left\{\sum_{i=1}^h S_i > h\left(1 - \frac{1}{4}\beta N^{-\alpha}\right)\right\}.\label{eq:term-not-fill}
\end{align}

For the first term \eqref{eq:term-fill} in this upper bound, as described in the proof sketch, we will rely on the fact that with high probability, all the $k$ tasks are assigned to queueing positions below $h$.  Specifically,
\begin{align*}
&\mspace{22mu}\mathbb{E}\left[T \;\middle|\; \sum_{i=1}^h S_i \leq h\left(1 - \frac{1}{4}\beta N^{-\alpha}\right)\right]\\
&=\mathbb{E}\left[T \;\middle|\; \sum_{i=1}^h S_i \leq h\left(1 - \frac{1}{4}\beta N^{-\alpha}\right),\textrm{FILL}_h\right]
\cdot\Pr\left\{\textrm{FILL}_h\;\middle|\;\sum_{i=1}^h S_i \leq h\left(1 - \frac{1}{4}\beta N^{-\alpha}\right)\right\}\\
&\mspace{23mu}+\mathbb{E}\left[T \;\middle|\; \sum_{i=1}^h S_i \leq h\left(1 - \frac{1}{4}\beta N^{-\alpha}\right),\overline{\textrm{FILL}_h}\right]
\cdot\Pr\left\{\overline{\textrm{FILL}_h}\;\middle|\;\sum_{i=1}^h S_i \leq h\left(1 - \frac{1}{4}\beta N^{-\alpha}\right)\right\},
\end{align*}
where $\overline{\textrm{FILL}_h}$ is the complement of $\textrm{FILL}_h$.

Suppose $\textrm{FILL}_h$ is true.  Suppose that the $k$ tasks of the incoming job land in $m$ distinct queues with $m\le k$.  We call the tasks with the highest positions in these $m$ queues tasks $1,2,\dots,m$, and let $n_1,n_2,\dots,n_m$ denote these positions.  Then the delay of task $i$ can be written as $Y_i=\sum_{j=1}^{n_i}X_{i,j}$, where $X_{i,j}$ is the service time of the task at position $j$ in the same queue as task $i$.  Clearly $X_{i,j}$'s are i.i.d.\ with an exponential distribution of rate $1$.  We know that $n_i\le h, i=1,2,\dots,m$ given $\textrm{FILL}_h$.  Then by Lemma~\ref{lem:max-upper},
$$
\mathbb{E}[\max\left\{Y_1,\cdots,Y_m\right\}] \leq \ln k + o(\ln k).
$$

\begin{sloppypar}
When $\overline{\textrm{FILL}_h}$ is true, $\mathbb{E}\left[T \;\middle|\; \sum_{i=1}^h S_i \leq h\left(1 - \frac{1}{4}\beta N^{-\alpha}\right),\overline{\textrm{FILL}_h}\right]\le bk$ since the highest position for a task is $b$ and the maximum is upper bounded by the sum.  Further, $\Pr\left\{\overline{\textrm{FILL}_h}\;\middle|\;\sum_{i=1}^h S_i \leq h\left(1 - \frac{1}{4}\beta N^{-\alpha}\right)\right\}\le \frac{1}{N}$ by Lemma~\ref{lem:filling}.
\end{sloppypar}

Combining the arguments above, we have the following bound for term \eqref{eq:term-fill}:
\begin{align*}
\mathbb{E}\left[T \;\middle|\; \sum_{i=1}^h S_i \leq h\left(1 - \frac{1}{4}\beta N^{-\alpha}\right)\right]\le \ln k+o(\ln k)+\frac{bk}{N}.
\end{align*}

Now we go back to the term \eqref{eq:term-not-fill}.  Again, it is easy to see that $\mathbb{E}\left[T \;\middle|\; \sum_{i=1}^h S_i > h\left(1 - \frac{1}{4}\beta N^{-\alpha}\right)\right]\le bk$.  Utilizing Theorem~\ref{main_theorem_1}, we have
\begin{align*}
\Pr\left\{\sum_{i=1}^h S_i > h\left(1 - \frac{1}{4}\beta N^{-\alpha}\right)\right\}
&\le \Pr\left\{\max\left\{\sum_{i=1}^b S_i-h\left(1-\frac{1}{2}\beta N^{-\alpha}\right),0\right\} > \frac{1}{4}h\beta N^{-\alpha}\right\}\\
&\le\frac{\expect\left[\max\left\{\sum_{i=1}^b S_i-h\left(1-\frac{1}{2}\beta N^{-\alpha}\right),0\right\}\right]}{\frac{1}{4}h\beta N^{-\alpha}}\\
&\le \frac{20}{h\beta N^{\frac{1}{2}-\alpha}\log N}.
\end{align*}

With the bounds above on \eqref{eq:term-fill} and \eqref{eq:term-not-fill}, we have
\begin{equation*}
\expect[T]\le \ln k+o(\ln k)+\frac{bk}{N}+\frac{20bk}{h\beta N^{\frac{1}{2}-\alpha}\log N}.
\end{equation*}
Consequently,
\begin{align*}
\expect[T\;|\; \text{admitted}]&=\frac{\expect[T]}{1-\probdrop}\\
&\le \frac{\ln k+o(\ln k)+\frac{bk}{N}+\frac{20bk}{h\beta N^{\frac{1}{2}-\alpha}\log N}}{{1-\probdrop}}\\
&\le \ln k + o(\ln k),
\end{align*}
which completes the proof.
\end{proof}

\section{Proofs for Impossibility Results (Theorems~\ref{main_theorem_2} and \ref{theorem lower d})}\label{sec:impossibility}
In this section, we prove the impossibility results in Theorems~\ref{main_theorem_2} and \ref{theorem lower d}.  Throughout this section, we assume that the assumptions in Theorem~\ref{main_theorem_2} hold true. Due to the space limit, the lemmas needed and their proofs are presented in Appendix~\ref{app:lemmas-impossibility}.

{
\subsubsection*{\textbf{Proof sketch}}
We focus on proving the lower bound in Theorem~\ref{main_theorem_2} since the non-zero queueing delay result in Theorem~\ref{theorem lower d} follows from that rather straightforwardly.

Our proof of Theorem~\ref{main_theorem_2} uses a novel lower bounding technique we develop.  We derive the lower bound on $\Pr\left\{\sum_{i=1}^h S_i \geq h - \frac{1}{3d}\right\}$ by lower-bounding $\Pr\left\{S_1 - S_h \leq c_h\right\}$ for a properly chosen $c_h$, for which our proof proceeds in an inductive fashion.
\begin{itemize}[leftmargin=1em,topsep=0.5em]
    \item We first lower bound $\Pr\left\{S_1 - S_2 \leq c_2\right\}$ utilizing a tail bound for $S_1$, which can be easily obtained from Little's law.  This step uses Lyapunov-based tail bounds in a ``reverse'' way in the following sense.  Typically, one can analyze the terms in the Lyapunov drift to obtain a tail bound.  But here, we utilize a tail bound obtained through other ways to bound a term (the probability in Lemma~\ref{lem:tail}) in the Lyapunov drift.
    \item We then lower bound $\Pr\left\{S_1 - S_3 \leq c_3\right\}$ based on the lower bound on $\Pr\left\{S_1 - S_2 \leq c_2\right\}$ in the previous step following a similar argument.  We continue this procedure inductively until we get the desired lower bound on $\Pr\left\{S_1 - S_h \leq c_h\right\}$.
\end{itemize}
}

\subsection{Proof Of Theorem \ref{main_theorem_2}}\label{sec:proof-main_theorem_2}
\begin{proof}
As outlined in the proof sketch, we first lower-bound $\Pr\left\{S_1 - S_h \leq c_h\right\}$ using arguments in an inductive fashion.  We start by lower-bounding $\Pr\left\{S_1 - S_2 \leq c_2\right\}$ for a properly chosen $c_2$.  This base case relies on the fact that $\mathbb{E}[S_1] = 1 - \beta N^{-\alpha}$, which can be easily proven using Little's law.

To simplify notation, let $u = 2kd$. Consider the Lyapunov function 
$
V_1(\bm{s}) = s_1.
$
Let $h=O(\log k)$ and $B_1 = 1 - h\beta N^{-\alpha}$. For some state $\bm{s}$ such that $V_1(\bm{s}) > B_1$, it holds that
$$
\begin{aligned}
\Delta V_1(\bm{s}) &=  \sum_{\bm{s}':\bm{s} \to \bm{s}' \text{~due to an arrival}} r_{\bm{s} \to \bm{s}'}\left(V_{1}(\bm{s}') - V_1(\bm{s})\right)
+\sum_{\bm{s}':\bm{s} \to \bm{s}' \text{~due to a departure}} r_{\bm{s} \to \bm{s}'}\left(V_{1}(\bm{s}') - V_1(\bm{s})\right) \\
&\stackrel{\text{(a)}}{\le} uh\beta N^{-\alpha}-N(s_1-s_2)\frac{1}{N} \\
&= uh\beta N^{-\alpha}-(s_1-s_2),
\end{aligned}
$$
where (a) is due to Lemma~\ref{lower:lemma1}.

Consider the set of states $\mathcal{E}_1 = \left\{\bm{s} \in \mathcal{S} | s_1 - s_2 > uh^2\beta N^{-\alpha}\right\}$.  Let  
$p_2=\Pr\left\{\bm{S} \not \in \mathcal{E}_1\right\}$, which is equal to $\Pr\left\{S_1 - S_2 \le uh^2\beta N^{-\alpha}\right\}$.
We now use the tail bound in Lemma \ref{lem:tail}. Assume that we follow the notation in the lemma. Consider the following two cases:
\begin{itemize}
\item $\bm{s} \not \in \mathcal{E}_1$, $\Delta V_1(\bm{s}) \leq uh\beta N^{-\alpha} \eqqcolon \delta$.
\item $\bm{s} \in \mathcal{E}_1$. Let $\gamma = -\Delta V_1(\bm{s})$. It holds $\gamma \geq uh\beta N^{-\alpha}(h - 1)$.
\end{itemize}
Following the definition in \ref{lem:tail}, it is easy to verify that $\nu_{\mathrm{max}} \leq \frac{k}{N}$ and $f_{\mathrm{max}} \leq 1$ for $V_1(\bm{s})$. Let $j_1 = \left(1 + \frac{N^{\alpha}}{\beta uh(h-1)}\right)\log^2 N$. By Lemma \ref{lem:tail}, it holds that 
$$
\begin{aligned}
\Pr\left\{V_1(\bm{S}) > B_1 + 2\nu_{\mathrm{max}}j_1\right\}& \leq \left(\frac{f_{\mathrm{max}}}{f_{\mathrm{max}} + \gamma}\right)^{j_1} + \left(\frac{\delta}{\gamma} + 1\right)\Pr\left\{\bm{S} \not \in \mathcal{E}_1\right\} \\
&\leq \left(\frac{f_{\mathrm{max}}}{f_{\mathrm{max}} + \gamma}\right)^{j_1} + \frac{h}{h-1}p_2.
\end{aligned}
$$
Note that when $N$ is large enough, $\left(\frac{f_{\mathrm{max}}}{f_{\mathrm{max}} + \gamma}\right)^{j_1} \leq \left(1 + uh\beta N^{-\alpha}(h-1)\right)^{-\left(1 + N^{\alpha}\frac{1}{\beta uh(h-1)}\right)\log^2 N}\leq e^{-\log^2 N}.$
As a result,
$$
\Pr\left\{V_1(\bm{S}) > B_1 + 2\nu_{\mathrm{max}}j_1\right\} \leq N^{-\log N} + \frac{h}{h-1}p_2.
$$
Since $0 < \alpha < 0.5$ and $k = e^{o\left(\sqrt{\log N}\right)}$, we have
$
B_1 + 2\nu_{\mathrm{max}}j_1=1 - h\beta N^{-\alpha} + 2 \frac{k}{N}\left(1 + \frac{N^{\alpha}}{\beta uh(h-1)}\right)\log^2 N<1 - (h - 1)\beta N^{-\alpha}
$
when $N$ is large enough.
It then follows that
$$
\begin{aligned}
\Pr\left\{V_1(\bm{S}) > 1 - (h - 1)\beta N^{-\alpha}\right\} &\leq \Pr\left\{V_1(\bm{S}) > B_1 + 2\nu_{\mathrm{max}}j_1\right\} 
\leq N^{-\log N} + \frac{h}{h-1}p_2.
\end{aligned}
$$

We now combine the bound above with the following bound given by Lemma \ref{lower:lemma0}: 
$$
\Pr\left\{V_1(\bm{S}) > 1 - (h - 1)\beta N^{-\alpha}\right\}
\geq 1 - \frac{1}{h-1}.
$$
Therefore, 
$
\frac{h}{h-1}p_2+N^{-\log N}\geq \frac{h-2}{h-1},
$
and thus
$$
\Pr\left\{S_1 - S_2 \leq uh^2\beta N^{-\alpha}\right\} = p_2 \geq \frac{h-2}{h}-N^{-\log N}.
$$
Let $b_q = u^{q-1}h^q\beta N^{-\alpha}$ for an integer $q > 0$. Define a sequence $a_q$, such that $a_1 = 0,a_2 = 1$ and $a_q = (q-2)a_{q-1} + 2$ for $q > 2$.  With this notation, the lower bound above on $p_2$ can be rewritten as
$
\Pr\left\{S_1 - S_2 \leq a_2b_2\right\} \geq \frac{h-2}{h}-N^{-\log N}.
$
We can use Lemma \ref{lower:lemma2} inductively to show that for all $q$ with $2 \leq q \leq h$,
$$
\Pr\left\{S_1 - S_{q} \leq a_qb_q\right\} \geq \left(\frac{h-2}{h}\right)^{q-1}-(q-1)N^{-\log N}.
$$

Let us condition on $S_1 - S_h \leq a_hb_h$. For ease of notation, let $\probcond = \left(\frac{h-2}{h}\right)^{h-1}-(h-1)N^{-\log N}$, which is a lower bound on the probability of the condition. Note that $$
\begin{aligned}
\expect[S_1] &\leq \expect\left[S_1 \;\middle| \;S_1 - S_h \le a_hb_h\right]\cdot\Pr\left\{S_1 - S_h \leq a_hb_h\right\} 
+ 1 \cdot \Pr\left\{S_1 - S_h > a_hb_h\right\}.
\end{aligned}
$$ 
Thus
$$
\begin{aligned}
\expect\left[S_1 \;\middle|\; S_1 - S_h \leq a_hb_h\right] &\geq \frac{1 - \beta N^{-\alpha} - \left(1 - \Pr\left\{S_1 - S_h \leq a_hb_h\right\}\right)}{\Pr\left\{S_1 - S_h \leq a_hb_h\right\}} \\
&\geq 1 - \frac{\beta}{\probcond}N^{-\alpha}.
\end{aligned}
$$
We can also see that
\begin{equation}\label{eq:lower_eq1}
\begin{aligned}
&\mspace{21mu}\Pr\left\{\sum_{i=1}^h S_i \geq h - \frac{1}{3d}\right\} \\
&\geq  \Pr\left\{\sum_{i=1}^h S_i \geq h - \frac{1}{3d} \; \middle | \; S_1 - S_h \leq a_hb_h\right\}\Pr\left\{S_1 - S_h \leq a_hb_h\right\}\\
&\geq  \probcond \Pr\left\{hS_1 - h(S_1 - S_h) \geq h - \frac{1}{3d} \; \middle | \; S_1 - S_h \leq a_hb_h\right\} \\
&\geq  \probcond \Pr\left\{S_1 \geq 1 - \frac{1}{3dh} + a_hb_h \; \middle | \; S_1 - S_h \leq a_hb_h\right\}.
\end{aligned}
\end{equation}
Utilizing the Markov inequality gives
$$
\begin{aligned}
(\ref{eq:lower_eq1}) &\geq \probcond \left(1 - \frac{3dh - 3dh\expect\left[S_1 \;\middle | \; S_1 - S_h \leq a_hb_h\right]}{1 - 3dh a_hb_h}\right) \\
&\geq \probcond \left(1 - \frac{\beta}{\probcond}\frac{3dh}{1 - 3dh a_hb_h}N^{-\alpha}\right).
\end{aligned}
$$
Recall that $a_q = (q - 2)a_{q-1} + 2$ for $q > 2$ and $a_2 = 1$. We have $a_h \leq 2h^h$, and thus $a_hb_h \leq 2\beta u^hh^{2h}N^{-\alpha}$. As $d = e^{o(\log N / \log k)}, k = e^{o(\sqrt{\log N})}, h = O(\log k)$, we have $\ln(a_hb_h) = -\Omega(\log N)$. Furthermore, since $\ln(3dh) = o(\log N / \log k) + O(\log k), \alpha > 0$, it holds 
$$
1 - \frac{\beta}{\probcond}\frac{3dh}{1 - 3dh a_hb_h}N^{-\alpha} \geq \frac{1}{2}
$$
if $N$ is sufficiently large. Note that $\probcond$ is equal to $ \left(\frac{h-2}{h}\right)^{h-1}-(h-1)N^{-\log N}$ which converges to $\frac{1}{e^2}$. We could conclude that when $N$ goes to infinity, we have
$$
\Pr\left\{\sum_{i=1}^h S_i \geq h - \frac{1}{3d}\right\} \geq \frac{1}{4e^2}.
$$
\end{proof}
\subsection{Proof Of Theorem \ref{theorem lower d}}\label{sec:proof-theorem_lower_d}
\begin{proof}
Let $h = 12e^2\ln k$. Then $h = O(\log k)$. Suppose that we have an incoming job. By Theorem \ref{main_theorem_2} and the PASTA property of a Poisson arrival process, with probability at least $\frac{1}{4e^2}$, this job will see a state $\bm{s}$ such that 
$
\sum_{i=1}^h s_i \geq h - \frac{1}{3d}.
$
By Lemma \ref{lower:lemma3}, the dispatcher will route at least one task of this job into a queue of length at least $h + 1$ with probability $1 - o(1)$. Let $T$ be the delay of the job. Then it holds for a large enough $N$, 
$$
\mathbb{E}[T] \geq 3\ln k(1 - o(1)) \geq 2\ln k,
$$
which completes the proof. 
\end{proof}

\section{Discussion on an alternative notion of zero queueing delay}\label{sec:alter}
{
In this section, we consider an alternative notion of zero queueing delay that may be of interest and may provide more understanding into the dynamics of systems with parallel jobs.  We will refer to this alternative notion as \emph{zero waiting} to differentiate it from the zero queueing delay we consider in the main part of the paper.  We say that zero waiting is achieved if in steady state, all the tasks of an incoming job enter service immediately upon arrival without waiting in queues with high probability as $N\to\infty$.  It is easy to see that zero waiting is a much stronger requirement than zero queueing delay.  Indeed, we show in Theorem~\ref{thm:lower-zero-waiting} below, the minimum probe overhead needed for achieving zero waiting is larger than $\frac{1}{2(1-\lambda)}$, which is in the same order as the value in the impossibility results for non-parallel jobs.  The proof of Theorem~\ref{thm:lower-zero-waiting} is straightforward and given in Appendix~\ref{sec:proof-thm-zero-waiting}.

Note that although this notion of zero waiting for parallel jobs seems to resemble the zero queueing delay for non-parallel jobs, the two systems have fundamentally different dynamics and thus it is hard to directly compare these two notion.  For parallel jobs, a batch of tasks arrive together and zero waiting requires all of them to be assigned to idle servers \emph{simultaneously}.  In contrast, for non-parallel jobs, there is no concept of batches.  The single-task jobs arrive one by one and zero queueing delay requires a job to be assigned to an idle server when it arrives.

\begin{theorem}\label{thm:lower-zero-waiting}
Consider a system with $N$ servers where each job consists of $k$ tasks. Let the load be $\lambda = 1 - \beta N^{-\alpha}$ with $0 < \beta \le 1$ and $\alpha \geq 0$. Assume that the buffers have unlimited sizes.  Under the batch-filling policy with a probe overhead $d$ such that $1 \leq d \leq \frac{1}{2(1-\lambda)}$, the probability in steady state that all the tasks of an incoming job are assigned to idle servers is smaller than or equal to $0.5$.
\end{theorem}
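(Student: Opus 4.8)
The plan is to reduce the ``zero waiting'' event to a one-line counting statement about how many of the $kd$ sampled queues are empty, and then to control that count with a first-moment (Markov) bound. First I would invoke the PASTA property so that an incoming Poisson job sees a state $\bm S$ distributed according to the stationary distribution $\pi_{\bm S}$; note that for $\lambda<1$ and unlimited buffers the chain $\{\bm S(t)\colon t\ge 0\}$ is positive recurrent, so $\pi_{\bm S}$ exists. Let $M$ denote the number of idle servers (queues of length $0$) among the $kd$ queues sampled for this job. The heart of the argument is the implication: if all $k$ tasks of the job enter service immediately upon arrival, then $M \ge k$.

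To see this implication, note that under batch-filling a task can enter service immediately only if it is routed to a server that is empty at the moment of its assignment, and that server must lie in the sampled set. Moreover, two tasks of the same job cannot both enter service immediately if routed to the \emph{same} queue, since batch-filling updates queue lengths after each round and the second such task would sit at queueing position $2$. Hence, on the event that all $k$ tasks enter service immediately, these tasks occupy $k$ \emph{distinct} sampled queues, each of which was empty just before the job's arrival (a sampled queue changes during this assignment only by receiving tasks of this very job). Therefore at least $k$ of the $kd$ sampled queues are empty in the state $\bm S$, i.e., $M \ge k$, so the ``zero waiting'' event is contained in $\{M\ge k\}$.

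It remains to bound $\Pr\{M \ge k\}$. Conditioned on $\bm S$, the sampled queues form a uniform sample without replacement from the $N$ servers, of which $N(1-S_1)$ are idle; hence, by linearity of expectation, $\mathbb{E}[M \mid \bm S] \le kd(1-S_1)$. Taking expectations and using $\mathbb{E}[S_1] = \lambda$ --- which follows from rate conservation (equivalently Little's law): with unlimited buffers and $\lambda<1$ there are no losses, so the steady-state task-completion rate $N\,\mathbb{E}[S_1]$ must equal the task-arrival rate $N\lambda$ --- gives $\mathbb{E}[M] \le kd(1-\lambda)$. The hypothesis $d \le \frac{1}{2(1-\lambda)}$ then yields $\mathbb{E}[M] \le k/2$, and Markov's inequality gives $\Pr\{M \ge k\} \le \mathbb{E}[M]/k \le 1/2$. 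Combined with the containment of the previous paragraph, the steady-state probability that all tasks of an incoming job are assigned to idle servers is at most $0.5$, as claimed.

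All the calculations here are routine; the only points requiring a little care are the combinatorial reduction in the second paragraph (arguing that ``all tasks to idle servers'' genuinely forces $k$ distinct empty sampled queues, rather than several tasks stacked in one queue) and the justification that $\mathbb{E}[S_1] = \lambda$, which rests on positive recurrence of the unbuffered chain for $\lambda<1$. Neither presents a real obstacle, so I expect this to be among the shortest proofs in the paper.
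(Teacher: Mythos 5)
Your proposal is correct and follows essentially the same route as the paper's own proof: identify the zero-waiting event with (or inside) the event that at least $k$ of the $kd$ sampled queues are idle, use $\mathbb{E}[S_1]=\lambda$ from work conservation to get $\mathbb{E}[\text{number of idle sampled queues}]=kd(1-\lambda)\le k/2$, and conclude by Markov's inequality. The only difference is cosmetic — you argue the containment (distinct empty sampled queues) and the identity $\mathbb{E}[S_1]=\lambda$ in slightly more detail than the paper does.
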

}

\section{Simulation Results}\label{sec:simulation}

{
In this section, we perform two sets of simulations to demonstrate our theoretical results and explore settings beyond those in our theoretical analysis.  The first set illustrates the scaling behavior of the system as $N$ grows under various probe ratios, and investigates the gap between our achievability results and impossibility results.  The second set of simulations experiment on more general service time distributions beyond the exponential distribution and correlation among task service times.


\subsection{Scaling Behavior with Various Probe Ratios}\label{subsec:scaling}
This set of simulations use the setting of our theoretical results with $\lambda=1-N^{-0.3}$ ($\alpha=0.3$ and $\beta=1$). We let $k$, the number of tasks per job, scale with $N$ as $k = \lfloor \ln^2 N \rfloor$.  The values for $N$ and the corresponding $k$ used in the simulations are given in Table~\ref{table:scale-system}. These values are reasonable in practice considering that datacenters nowadays typically have tens of thousands of nodes (each with multiple cores) per cluster and a job may consist of hundreds of tasks \cite{amvrosiadis_2018}.

\begin{table}[!hbp]
\begin{tabular}{r r r r r r r r r r r r r}
\toprule
$N$ & 32   & 64   & 128  & 256   & 512   & 1024  & 2048 & 4096 & 8192 & 16384 & 32768 & 65536 \\ 
$k$ & 12   & 17   & 23   & 30    & 38    & 48   & 58   & 69   & 81   & 94    & 108   & 122   \\
\bottomrule
\end{tabular}
\caption{Scaling parameters}
\label{table:scale-system}
\vspace{-0.2in}
\end{table}

\begin{figure}
\begin{minipage}[t]{.45\textwidth}
\centering
\scalebox{0.32}{\input{picture/diff-d.pgf}}
\vspace{-0.2in}
\caption{Queueing delays under different probe ratios: $d_1$ is sufficient for convergence to zero queueing delay; $d_1>d_4>d_3>d_2$.}
\label{fig:different-d}
\end{minipage}
\hspace{0.3in}
\begin{minipage}[t]{.45\textwidth}
    \centering
    \scalebox{0.32}{\input{picture/diff-dist.pgf}}
    \vspace{-0.2in}
    \caption{Queueing delays under different service time distributions.}
    \label{fig:different-dist}
\end{minipage}
\vspace{-0.1in}
\end{figure}

We explore four scaling settings of the probe ratio.  The first setting uses a probe ratio of $d_1 = \frac{4}{(1-\lambda)h}=\frac{4N^{\alpha}}{h}$ with $h = \lceil\frac{\log k}{\log \log k} \rceil$, which satisfies the conditions in Theorems~\ref{main_theorem_1} and \ref{theorem zero delay} to achieve zero queueing delay.  The second setting uses a probe ratio of $d_2 = \exp{(0.5\log N / \log k)}$, which is slightly larger than the value in Theorems~\ref{main_theorem_2} and \ref{theorem lower d} that guarantees non-zero queueing delay.  The other two settings use probe ratio values $d_3$ and $d_4$ that interpolate between $d_1$ and $d_2$ to investigate the threshold under which the system transits from zero queueing delay to non-zero queueing delay.  Note that $d_1>d_4>d_3>d_2$ for all values of $N$ in the simulations. More details of the settings can be found in Appendix~\ref{sec:experiment}.

Figure~\ref{fig:different-d} shows the simulation results for the queueing delay $\frac{\expect[T-T^*]}{\expect[T^*]}$, 
{ where the results are averaged over ten independent runs.  Please refer to Appendix~\ref{sec:exactval} for the exact values and standard deviations.  Since the standard deviations are very small ($\sim 10^{-4}$), the error bars are not visible in the plots.
}
The curve for $d_1$ demonstrates the trend for the queueing delay to converge to zero as predicted by the theoretical results.  It does not exactly reach zero but becomes reasonably close. Under the probe ratios $d_2$ and $d_3$, the queueing delay clearly deviates from zero.  Under the probe ratio $d_4$, the queueing delay flattens out after some initial drop as $N$ becomes large.  Therefore, it is plausible that the transition from zero queueing delay to non-zero queueing delay happens at a probe ratio value near $d_4$.  Since $d_4$ is much closer to $d_1$ than to $d_2$, we expect our impossibility results to be not tight.  Pinning down the exact threshold for the transition (or proving the nonexistence of such a threshold) is of great theoretical interest and we leave it for future research.

{
To further investigate how $k$, the number of tasks per job, affects the scaling behavior under different probe ratios, we examine another setting where $k = \lfloor \sqrt{N} \rfloor$.  This scaling of $k$ is beyond our theoretical framework, but the queueing delays exhibit similar trends as those in the setting where $k = \lfloor \ln^2 N \rfloor$.  Details of the simulation results are given in Appendix~\ref{sec:newk}.}

{
\subsection{More General Settings for Task Service Times}
This set of simulations explore distributions beyond the exponential distribution for task service times and correlation among their service times.  Figure~\ref{fig:different-dist} shows the results for four settings: (1)~\emph{i.i.d.\ exponential distribution} with rate $1$ (denoted as $\text{Exp}(1)$).  This is the baseline distribution that is assumed for our theoretical analysis. (2)~\emph{i.i.d.\ bounded Pareto} in range $[1,1000]$ with a shape constant $1.5$. (3)~\emph{i.i.d.\ hyper-exponential} that follows $\text{Exp}(1)$ with probability $0.99$ and $\text{Exp}(0.01)$ with probability $0.01$.  We re-scale the arrival rates so all the systems have the same load $\lambda=1-N^{-0.3}$.
(4)~\emph{S\&X model} for correlated task service times, which is a model proposed in \cite{GarHarSch_16} and has been extensively studied since then.  In the S\&X model, the service time of the each task in a job can be written as $S\cdot X$, where every task in the same job shares the same $X$, but different tasks have their own $S$'s that are independent among tasks. Here we assume $S$ and $X$ are both exponentially distributed with rate $1$.} The probe overhead is chosen to be the $d_1$ in Section~\ref{subsec:scaling} such that zero queueing delay is provably achievable under the exponential distribution. 

{
We observe that empirically, the queueing delay has a trend that approaches zero under all the four settings, despite of the larger coefficients of variation for the bounded Pareto and hyper-exponential distributions and the correlation among task service times in the S\&X model.  These simulation results suggest that our theoretical results have some robustness with respect to service time distributions and correlations.  We comment that there is little existing work on zero queueing delay for general service time distributions with the exception of \cite{b5}, which studies the Coxian-2 distribution for non-parallel jobs.  Generalizing our analysis to general service time distributions with possible correlations is a research direction that deserves much further effort, as it is for many problems in queueing systems.}
}

\section{Conclusions}
We studied queueing delay in a system where jobs consist of parallel tasks.  We first proposed a notion of zero queueing delay in a relative sense for such parallel jobs. We then derived conditions on the probe overhead for achieving zero queueing delay and for guaranteeing non-zero queueing delay.  One interesting implication of the results is that under parallelization, the probe overhead needed for achieving zero queueing delay is lower than that in a system with non-parallel (single-task) jobs under the same load.  Through simulations, we demonstrated that the numerical results are consistent with the theoretical results under reasonable settings, and investigated several questions that are hard to answer analytically.

\paragraph{Acknowledgment:} The work of Wentao Weng was conducted during a visit to the Computer Science Deparment, CMU in 2019.

\bibliographystyle{ACM-Reference-Format}
\bibliography{references,refs-weina}

\appendix

\section{Proofs of Lemmas~\ref{lem:max-upper}, \ref{lem:filling} and \ref{lem:ssc}}\label{app:lemmas-achievability}

\subsection{Proof of Lemma~\ref{lem:max-upper}}\label{app:lem:max-upper}
\renewcommand{\thmnamerestated}{Lemma~\ref{lem:max-upper} [Restated]}
\newtheorem*{thmrestatedmax}{\thmnamerestated}{\bfseries}{\itshape}
\begin{thmrestatedmax}
Consider $m$ independent random variables $Y_1,\cdots,Y_m$ with $m\le k$, where each $Y_i$ ($1\le i\le m)$ is the sum of $n_i$ i.i.d.\ random variables that follow the exponential distribution with rate $1$.  In the asymptotic regime that $k$ goes to infinity, if $\max\left\{n_1,\cdots,n_m\right\}=o(\log k)$, then
$$
\mathbb{E}[\max\left\{Y_1,\cdots,Y_m\right\}] \leq \ln k + o(\ln k).
$$
\end{thmrestatedmax}

\begin{proof}
The general proof idea is folklore, but here we derive the exact bounds for our purpose.
Let $M_{X}(s)$ be the moment generating function of a random variable $X$. By assumption, $Y_i = \sum_{j=1}^{n_i} X_{i,j}$, and $X_{i,j}, 1 \leq i \leq m, 1 \leq j \leq n_i$ are all independent and exponentially distributed with mean $1$. Therefore, for any $1 \leq i \leq m, 1 \leq j \leq n_i$ and any $s$ with $0<s<1$,
$$
\begin{aligned}
M_{X_{i,j}}(s) &= \mathbb{E}\left[e^{sX_{i,j}}\right] = \frac{1}{1 - s}, \\
M_{Y_i}(s) &= \mathbb{E}\left[e^{sY_i}\right]=\left(\frac{1}{1 - s}\right)^{n_i}.
\end{aligned}
$$

Let $q = \max\left\{n_1,\cdots,n_m\right\}$. It holds that for any $s$ with $0<s<1$, 
\begin{align}
\exp\left(s\mathbb{E}\left[\max_{j=1}^m Y_j\right]\right) &\leq \mathbb{E}\left[\exp(s\max_{j=1}^m Y_j)\right]\label{eq:jensen}\\
&= \mathbb{E}\left[\max_{j=1}^m \exp(sY_j)\right] \\
&\leq \sum_{j=1}^m \mathbb{E}\left[\exp(sY_j)\right]\label{eq:max-sum}\\
&\leq m\left(\frac{1}{1-s}\right)^q,
\end{align}
where \eqref{eq:jensen} is due to Jensen's inequality and \eqref{eq:max-sum} is true since the maximum is upper bounded by the sum. As a result,
\begin{align}
\mathbb{E}\left[\max_{j=1}^m Y_j\right] &\leq \frac{\ln m}{s} + q \cdot \frac{-\ln(1-s)}{s}\\
& \leq \frac{\ln k}{s} + q \cdot \frac{-\ln(1-s)}{s},\label{eq:upper}
\end{align}
where we have used the assumption that $m\le k$.
Since we assume that $q = o(\log k)$, we can write $q$ as $q = \ell(k)\ln k$ where $\ell(k) \to 0^+$ as $k \to \infty$. Let $s = 1 - \ell(k)$ in \eqref{eq:upper}, then
\begin{align}
\mathbb{E}\left[\max_{j=1}^m Y_j\right] &\leq \frac{\ln k}{1 - \ell(k)}\left(1 - \ell(k)\ln\ell(k)\right)\\
&=(\ln k)\left(1+\frac{\ell(k)}{1-\ell(k)}\right)\left(1 - \ell(k)\ln\ell(k)\right).
\end{align}
Note that $\lim_{k \to \infty} \ell(k)\ln\ell(k) = 0$.  Then as $k \to \infty$,
$$
\mathbb{E}\left[\max_{j=1}^m Y_j\right] \leq (\ln k)(1 + o(1)),
$$
which completes the proof. 
\end{proof}

\subsection{Proof of Lemma~\ref{lem:filling} (Filling Probability)}\label{app:lem:filling}
\renewcommand{\thmnamerestated}{Lemma~\ref{lem:filling} (Filling Probability) [Restated]}
\newtheorem*{thmrestatedfilling}{\thmnamerestated}{\bfseries}{\itshape}
\begin{thmrestatedfilling}
Under the assumptions of Theorem~\ref{main_theorem_1}, given that the system is in a state $\bm{s}$ such that
\begin{equation}
\sum_{i=1}^{\ell} s_i \leq \ell\left(1 - \frac{1}{4}\beta N^{-\alpha}\right),
\end{equation}
the probability of the event $\mathrm{FILL}_{\ell}$ for any $\ell\in\{h-1,h,b\}$ can be bounded as $\Pr\left\{\mathrm{FILL}_{\ell}\right\} \geq 1 - \frac{1}{N}$ when $N$ is sufficiently large.
\end{thmrestatedfilling}

\begin{proof}
Assume that a job arrival sees a state $\bm{S}=\bm{s}$ that satisfies 
\begin{equation*}
\sum_{i=1}^{\ell} s_i \leq \ell\left(1 - \frac{1}{4}\beta N^{-\alpha}\right).
\end{equation*}
We focus on the the number of spaces below the threshold $\ell$ in the sampled queues, denoted by $N_{\ell}$. Then $N_{\ell}$ is the maximum number of tasks that can be put into these queues such that all of these tasks are at queueing positions below $\ell$. Therefore, 
$$
\Pr\left\{\mathrm{FILL}_{\ell}\right\} = \Pr\left\{N_{\ell} \geq k\right\} \geq 1 - \Pr\left\{N_{\ell} \leq k\right\}.
$$

\begin{sloppypar}
Now we bound $\Pr\left\{N_{\ell} \leq k\right\}$.  We can think of the sampling process of batch-filling as sampling $kd$ queues one by one without replacement.  Let $X_1,X_2,\cdots,X_{kd}$ be the numbers of spaces below $\ell$ in the $1$st, $2$nd, \dots, $kd$th sampled queues, respectively.  Then $N_{\ell}=X_1+\dots+X_{kd}$.  It is not hard to see that for each of the sampled queue and each integer $x$ with $1 \leq x \leq \ell$,  
\begin{equation*}
\Pr\{X_i=x\}=s_{\ell-x}-s_{\ell-x+1},
\end{equation*}
and $\Pr\{X_i=0\}=s_{\ell}$.
\end{sloppypar}

\begin{sloppypar}
Note that since we sample without replacement, $X_1,X_2,\dots,X_{kd}$ are not independent.  But we can still derive concentration bounds using a result of Hoeffding \cite[Theorem~4]{Hoe_63}.  By this result, we have $\expect\left[f\left(\sum_{i=1}^{kd}X_i\right)\right]\le \expect\left[f\left(\sum_{i=1}^{kd}Y_i\right)\right]$ for any continuous and convex function $f(\cdot)$, where $Y_1,Y_2,\dots,Y_{kd}$ are i.i.d.\ and follow the same distribution as $X_1$. We take the function $f(\cdot)$ to be $f(x)=e^{-tx}$ with $t>0$.  Then
\begin{align*}
&\mspace{23mu}\Pr\left\{N_{\ell} \leq k\right\}\\
&= \Pr\left\{e^{-tN_{\ell}} \geq e^{-tk}\right\} \\
&\leq e^{tk}\prod_{i=1}^{kd} \mathbb{E}\left[e^{-tY_i}\right] \\
&= e^{tk}\prod_{i=1}^{kd}\left(1 - \sum_{j=1}^{\ell}\left(s_{\ell -j}-s_{\ell-j+1}\right)\left(1 - e^{-tj}\right)\right).
\end{align*}
Since $1-x\le e^{-x}$ for each $x\ge 0$, this can be further bounded as
\begin{align}
&\mspace{23mu}\Pr\left\{N_{\ell} \leq k\right\}\nonumber\\
&\le \exp\left(tk - kd\sum_{j=1}^{\ell} \left(s_{\ell-j}-s_{\ell-j+1}\right)\left(1-e^{-tj}\right)\right)\nonumber\\
&\le \exp\left(tk + kd\sum_{j=1}^{\ell} \left(s_{j - 1} - s_j\right)\left(e^{-t(\ell - j + 1)} - 1\right)\right).\label{lemma2:1}
\end{align}
Rearranging the terms in the sum in \eqref{lemma2:1}, we get
\begin{align}
&\mspace{23mu}\sum_{j=1}^{\ell} \left(s_{j - 1} - s_j\right)\left(e^{-t(\ell - j + 1)} - 1\right)\nonumber\\
&= \left(e^{-t\ell} - 1\right) + \left(e^{t}-1\right)\sum_{j=1}^{\ell} s_je^{-t(\ell-j+1)}.\label{lemma2:2}
\end{align}
Since $1 \geq s_1 \geq \cdots s_{\ell}$ and we have assumed that $\sum_{j = 1}^{\ell} s_j \leq \ell\left(1 - \frac{1}{4}\beta N^{-\alpha})\right)$, \eqref{lemma2:2} is maximized when 
$$
s_1 = s_2 = \cdots = s_{\ell} = 1 - \frac{1}{4}\beta N^{-\alpha}. 
$$
Therefore, the upper bound becomes
$$
\Pr\left\{N_{\ell}\le k\right\} \leq \exp\left(tk + kd\left(e^{-t\ell} - 1\right)\frac{1}{4}\beta N^{-\alpha}\right).
$$

Now we apply the condition that $d \geq \frac{8N^{\alpha}}{\beta h}$ and let $t = \frac{\ln(2\ell) - \ln h}{\ell}$.  Then
\begin{align*}
&\mspace{23mu}\Pr\left\{N_{\ell} \leq k\right\}\\
&\le \exp\left(tk + \frac{2k}{h}\left(e^{-t\ell} - 1\right)\right)\\
&= \exp\left(\frac{k}{h}\left(\frac{h}{\ell}\left(\ln(2\ell) - \ln h\right)+\frac{h}{\ell}-2\right)\right).
\end{align*}
Recall the we have assumed that $\frac{k}{h} = \omega(\log N)$ and $h=\omega(1)$.  Then it can be verified that with a sufficiently large $N$, $\frac{h}{\ell}\left(\ln(2\ell) - \ln h\right)+\frac{h}{\ell}+2N^{-0.5}-2$ is smaller than a negative constant for all $\ell\in\{h-1,h,b\}$.  Thus
$$
\Pr\left\{N_{\ell} \leq k\right\} \leq \exp(-\omega(\log N)) \leq \frac{1}{N}.
$$
As a result,
$$
\Pr\left\{\mathrm{FILL}_{\ell}\right\}\ge 1-\Pr\{N_{\ell}\le k\} \geq 1 - \frac{1}{N},
$$
which completes the proof. 
\end{sloppypar}
\end{proof}

\subsection{Proof of Lemma~\ref{lem:ssc}}\label{app:lemma:ssc}
Our proof of Lemma~\ref{lem:ssc} relies on Lemma~\ref{lem:tail} below. 
Lemma~\ref{lem:tail} slightly generalizes the well-known Lyapunov-based tail bounds (see, e.g., \cite{b4}, \cite{b5} and \cite{b6}) in that it allows different drift bounds depending on whether a state $\bm{s}$ is in a set $\mathcal{E}$ or not.  In our proof of Lemma~\ref{lem:ssc}, we only need to let $\mathcal{E}$ be the whole state space.  But this generalization will be needed in the proof of impossibility results in Section~\ref{sec:impossibility}.  We omit the proof of Lemma~\ref{lem:tail} since it only needs minor modification to the arguments used in proving the well-known existing bounds.

\begin{lemma}\label{lem:tail}
Consider a continuous time Markov chain $\left\{\bm{S}(t):t\ge 0\right\}$ with a finite state space $\mathcal{S}$ and a unique stationary distribution $\pi$.  For a Lyapunov function $V: \mathcal{S} \rightarrow [0,+\infty)$, define the drift of $V$ at a state $\bm{s} \in \mathcal{S}$ as 
$$
\Delta V(\bm{s}) = \sum_{\bm{s}' \in \mathcal{S}, \bm{s} \not = \bm{s}'} r_{\bm{s} \to \bm{s}'}(V(\bm{s}') - V(\bm{s})),
$$
where $r_{\bm{s} \to \bm{s}'}$ is the transition rate from state $\bm{s}$ to $\bm{s}'$.  Suppose that
$$
\begin{aligned}
\nu_{\mathrm{max}} &:= \sup_{\bm{s},\bm{s}' \in \mathcal{S}:r_{\bm{s} \to \bm{s}'} > 0} |V(\bm{s}) - V(\bm{s}')| < \infty \\
f_{\mathrm{max}} &:= \max\left\{0,\sup_{\bm{s} \in \mathcal{S}} \sum_{\bm{s}':V(\bm{s}') > V(\bm{s})} r_{\bm{s} \to \bm{s}'}\left(V(\bm{s}') - V(\bm{s})\right)\right\} < \infty.
\end{aligned}
$$
Then if there is a set $\mathcal{E}$ with $B > 0,\gamma > 0, \delta \geq 0$ such that 
\begin{itemize}
\item $\Delta V(\bm{s}) \leq -\gamma$ when $V(\bm{s}) \geq B$ and $\bm{s} \in \mathcal{E}$,
\item $\Delta V(\bm{s}) \leq \delta$ when $V(\bm{s}) \geq B$ and $\bm{s} \not \in \mathcal{E}$,
\end{itemize}
it holds that for all $j \in \mathbb{N}$,
$$
\Pr\left\{V(\bm{s}) \geq B + 2 \nu_{\mathrm{max}}j\right\} \leq \left(\frac{f_{\mathrm{max}}}{f_{\mathrm{max}} + \gamma}\right)^j + \left(\frac{\delta}{\gamma} + 1\right)\Pr\left\{s \not \in \mathcal{E}\right\}.
$$
\end{lemma}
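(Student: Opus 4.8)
\textbf{Proof proposal for Lemma~\ref{lem:tail}.}
The plan is to adapt the standard argument for Lyapunov-based tail bounds (as in \cite{b4,b5,b6}) to the two-regime setting, where the drift is controlled by $-\gamma$ on $\mathcal{E}$ and merely by $\delta$ off $\mathcal{E}$.  First I would fix $j\in\mathbb{N}$ and introduce the cutoff level $L_j=B+2\nu_{\mathrm{max}}j$, and work with the ``clipped'' test function obtained by composing $V$ with a suitable piecewise-linear, nondecreasing, convex-type profile; concretely, one considers functions of the form $e^{\theta(V(\bm s)-B)^+}$ or a geometric ladder $\sum_{i\ge 1}\mathds{1}_{\{V(\bm s)\ge B+2\nu_{\mathrm{max}} i\}}r^i$ for an appropriate ratio $r\in(0,1)$.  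The geometric-ladder version is cleanest here: set $W(\bm s)=\sum_{i\ge 1} r^{i}\,\mathds{1}_{\{V(\bm s)\ge B+2\nu_{\mathrm{max}} i\}}$ with $r=f_{\mathrm{max}}/(f_{\mathrm{max}}+\gamma)$.  Since $V$ is bounded on the finite state space $\mathcal{S}$, $W$ is bounded, so $\mathbb{E}_\pi[GW(\bm S)]=0$ where $G$ is the generator.

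Next I would estimate $GW(\bm s)$ levelwise.  For a state $\bm s$ with $V(\bm s)\ge B$, the key observations are: (i) a transition $\bm s\to\bm s'$ changes $V$ by at most $\nu_{\mathrm{max}}$, so it can move $W$ across at most one ladder rung in either direction, which is what makes the telescoping work and why the rungs are spaced $2\nu_{\mathrm{max}}$ apart; (ii) upward movement of $W$ is paid for by the quantity $f_{\mathrm{max}}$ (the total upward drift of $V$), while downward movement is driven by the net drift $\Delta V(\bm s)$.  When $\bm s\in\mathcal{E}$ and $V(\bm s)\ge B$ we have $\Delta V(\bm s)\le -\gamma$, and the choice of $r$ makes the local drift of $W$ at such states nonpositive; when $\bm s\notin\mathcal{E}$ we only get $\Delta V(\bm s)\le\delta$, so the local drift of $W$ is bounded by a term proportional to $(\delta+\gamma)$ times the indicator $\mathds{1}_{\{\bm s\notin\mathcal{E}\}}$ (up to the ladder normalization).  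For states with $V(\bm s)<B$, $GW$ is bounded by a quantity supported on $\{V(\bm s)\ge B-2\nu_{\mathrm{max}}\}$ that contributes only to the first rung; this boundary term is what ultimately produces the leading $r^{j}=(f_{\mathrm{max}}/(f_{\mathrm{max}}+\gamma))^{j}$ after we isolate the $j$-th rung.  Taking expectations under $\pi$, using $\mathbb{E}_\pi[GW]=0$, and rearranging gives a bound of the form
\begin{equation*}
\Pr\left\{V(\bm S)\ge B+2\nu_{\mathrm{max}}j\right\}\le r^{j}+\Bigl(\tfrac{\delta}{\gamma}+1\Bigr)\Pr\{\bm S\notin\mathcal{E}\},
\end{equation*}
which is exactly the claim; the $\delta/\gamma+1$ factor arises from summing the geometric series $\sum_{i\ge 0}r^{i}$ against the off-$\mathcal{E}$ error term and bounding $(\delta+\gamma)/(\gamma(1-r))=(\delta+\gamma)/f_{\mathrm{max}}\cdot\ldots$, collapsing to $\delta/\gamma+1$ after substituting $r$.

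The main obstacle is the careful bookkeeping in step (ii): one must verify that with the ladder spacing $2\nu_{\mathrm{max}}$ and the specific ratio $r=f_{\mathrm{max}}/(f_{\mathrm{max}}+\gamma)$, the ``upward'' contribution to $GW(\bm s)$ from transitions that increase $V$ is dominated by the ``downward'' contribution whenever $\bm s\in\mathcal{E}$ and $V(\bm s)\ge B$, so that the only surviving positive terms in $\mathbb{E}_\pi[GW]$ are the boundary term (giving $r^{j}$) and the off-$\mathcal{E}$ term (giving the second summand).  This is precisely the point where the classical single-regime proof needs modification, but it is a routine — if slightly delicate — generator computation; I would not expect any genuinely new difficulty beyond tracking the two cases separately.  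Because the statement explicitly says this lemma ``only needs minor modification to the arguments used in proving the well-known existing bounds,'' I would present the proof at the level of the above sketch, citing \cite{b4,b5,b6} for the baseline argument and spelling out only the bifurcation into $\bm s\in\mathcal{E}$ versus $\bm s\notin\mathcal{E}$.
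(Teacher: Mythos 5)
You should first note that the paper does not actually spell out a proof of Lemma~\ref{lem:tail}: it explicitly omits it, saying it needs only minor modification of the standard Lyapunov tail bounds in \cite{b4,b5,b6}. Those standard arguments proceed \emph{level by level} with clipped piecewise-linear test functions and produce a recursion between consecutive tail probabilities, which is in fact the shape your final constant bookkeeping ($\sum_m r^m \le (f_{\mathrm{max}}+\gamma)/\gamma$, giving $\delta/\gamma+1$) implicitly relies on. Your high-level plan — run the known argument and split states into $\bm{s}\in\mathcal{E}$ versus $\bm{s}\notin\mathcal{E}$, expecting an extra additive error — is the right one and matches what the paper intends.

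However, the specific mechanism you commit to fails. For $W(\bm{s})=\sum_{i\ge 1} r^{i}\,\mathds{1}_{\{V(\bm{s})\ge B+2\nu_{\mathrm{max}} i\}}$, the identity $\mathbb{E}_\pi[GW(\bm{S})]=0$ is vacuous: it equals $\sum_i r^i\,\mathbb{E}_\pi[G\mathds{1}_{A_i}(\bm{S})]$ with $A_i=\{V\ge B+2\nu_{\mathrm{max}}i\}$, and each term $\mathbb{E}_\pi[G\mathds{1}_{A_i}(\bm{S})]=0$ is precisely the stationary flux balance across the cut $A_i$ (expected up-crossing rate equals expected down-crossing rate), so the weighted sum reads $0=0$ and no tail inequality can be extracted from it. The deeper reason is that indicator rungs cannot register the hypotheses: $\Delta V(\bm{s})\le-\gamma$, $\Delta V(\bm{s})\le\delta$ and $f_{\mathrm{max}}$ are statements about expected \emph{changes in $V$}, not about level-crossing rates; e.g.\ a state just above a rung can have drift $\le-\gamma$ carried entirely by many tiny downward jumps that never cross the rung, so the crossing flux cannot be lower-bounded by anything involving $\gamma$. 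The fix is the classical one the paper points to: for each level $i$ use $g_i(\bm{s})=\min\{(V(\bm{s})-B-2\nu_{\mathrm{max}}(i-1))^{+},\,2\nu_{\mathrm{max}}\}$ (or a Hajek-type exponential), for which the magnitude of $V$-changes enters linearly; $\mathbb{E}_\pi[Gg_i(\bm{S})]=0$ together with the two drift cases yields, writing $P_i=\Pr\{V(\bm{S})\ge B+2\nu_{\mathrm{max}}i\}$, a recursion essentially of the form $P_i\le r\,P_{i-1}+\frac{\gamma+\delta}{f_{\mathrm{max}}+\gamma}\Pr\{\bm{S}\notin\mathcal{E}\}$ with $r=f_{\mathrm{max}}/(f_{\mathrm{max}}+\gamma)$, and unrolling $j$ times with $P_0\le 1$ gives exactly $r^{j}+(\delta/\gamma+1)\Pr\{\bm{S}\notin\mathcal{E}\}$. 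Note in particular that the factor $r^{j}$ comes from iterating this recursion, not from a boundary term of a single generator identity as your sketch suggests.
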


Now we are ready to present to proof of Lemma~\ref{lem:ssc}.
\renewcommand{\thmnamerestated}{Lemma~\ref{lem:ssc} (State-Space Collapse) [Restated]}
\newtheorem*{thmrestatedssc}{\thmnamerestated}{\bfseries}{\itshape}
\begin{thmrestatedssc}
Under the assumption of Theorem \ref{main_theorem_1}, consider the following Lyapunov function:
$$
V(\bm{s}) = \min\left\{\frac{1}{h - 1}\sum_{i=h}^b s_i, b\left(\left(1 - \frac{1}{2}\beta N^{-\alpha}\right) - \frac{1}{h-1}\sum_{i=1}^{h - 1} s_i\right)^+\right\},
$$
where the superscript $^+$ denotes the function $x^+=\max\{x,0\}$.
Let $B = \frac{b - h + 1}{h - 1}\left(\beta N^{-\alpha} + \frac{\log N}{\sqrt{N}}\right)$. Then for any state $\bm{s}$ such that $V(\bm{s}) > B$, its Lyapunov drift can be upper bounded as follows
$$
\Delta V(\bm{s}) = GV(\bm{s}) \leq -\frac{b}{\sqrt{N}}.
$$
Consequently, when $N$ is sufficiently large,
\begin{align*}
\Pr\left\{V(\bm{S}) > B + \frac{2kb\log^2 N}{(h-1)\sqrt{N}}\right\}\le e^{-\frac{1}{2}\log^2 N}.
\end{align*}
\end{thmrestatedssc}

\begin{proof}
Consider the Lyapunov function in the lemma, i.e.,
$$
V(\bm{s}) = \min\left\{\frac{1}{h - 1}\sum_{i=h}^b s_i, b\left(\left(1 - \frac{1}{2}\beta N^{-\alpha}\right) - \frac{1}{h-1}\sum_{i=1}^{h - 1} s_i\right)^+\right\}.
$$

We will refer to the first term and second term in the minimum as $\terml$ and $\termr$, respectively.
Let $B = \frac{b - h + 1}{h - 1}\left(\beta N^{-\alpha} + \frac{\log N}{\sqrt{N}}\right)$ and suppose $V(\bm{s}) > B$.  Recall that the drift of $V$ is given by
\begin{equation*}
\Delta V(\bm{s}) = GV(\bm{s}) = \sum_{\bm{s}' \in \mathcal{S}, \bm{s} \not = \bm{s}'} r_{\bm{s} \to \bm{s}'}(V(\bm{s}') - V(\bm{s})),
\end{equation*}
where $r_{\bm{s} \to \bm{s}'}$ is the transition rate from state $\bm{s}$ to $\bm{s}'$. Let $e_i = \left(0,\cdots,0,\frac{1}{N},0,\cdots,0\right)$ be a vector of length $b$ whose $i$th entry is $\frac{1}{N}$ and all the other entries are zero.  
We divide the discussion into two cases.

\noindent\textbf{Case 1:} $\terml\le \termr$.  In this case $V(\bm{s})=\terml$.  When the state transition is due to a task departure from a queue of length $i$, which has a rate of $N\left(s_{i}-s_{i+1}\right)$, then
\begin{equation*}
V(\bm{s} - e_i) =
\begin{cases}
V(\bm{s}), &\text{ if }1\le i<h,\\
 V(\bm{s}) - \frac{1}{N(h - 1)}, &\text{ if } h\le i\le b.
\end{cases}
\end{equation*}
Now consider the state transition due to a job arrival.  Let $a_i$ be the queueing position that task $i$ is assigned to.  Then the next state can be written as 
$$
\bm{s} + e_{a_1} + \cdots + e_{a_k}.
$$
Note that when the event $\mathrm{FILL}_{h-1}$ happens, the dispatcher puts all $k$ tasks to positions below threshold $h - 1$. Then under $\mathrm{FILL}_{h-1}$, $s_i$ does not change for $i\ge h$, which implies that
$$
V(\bm{s} + e_{a_1} + \cdots + e_{a_k}) = V(\bm{s}).
$$
We can show that $\Pr\left\{\mathrm{FILL}_{h - 1}\right\} \geq 1 - \frac{1}{N}$ using Lemma~\ref{lem:filling} since $\termr\ge \terml > B >0$.  Otherwise, i.e., when $\mathrm{FILL}_{h-1}$ is not true, it is easy to see that
\begin{equation*}
V(\bm{s} + e_{a_1} + \cdots + e_{a_k})\le V(\bm{s})+\frac{k}{N(h-1)}.
\end{equation*}

Therefore,
\begin{align*}
\Delta V(\bm{s}) &\leq \sum_{i=1}^b N(s_{i} - s_{i + 1})\left(V(\bm{s} - e_i) - V(\bm{s})\right)+\frac{N\lambda}{k}\frac{1}{N}\frac{k}{N(h - 1)}\\
&= \frac{1}{N(h-1)}-\frac{s_h}{h - 1}\\
&\le \frac{1}{N(h-1)}-\frac{1}{h - 1}\frac{1}{b-h+1}\sum_{i=h}^bs_i.
\end{align*}
By the assumption that $\terml>B$, we have
\begin{align}
\frac{1}{b-h+1}\sum_{i=h}^bs_i &\ge \frac{h-1}{b-h+1}B=\beta N^{-\alpha}+\frac{\log N}{\sqrt{N}}.\nonumber
\end{align}
Inserting this back to the upper bound on $\Delta V(\bm{s})$ gives
\begin{align*}
\Delta V(\bm{s})&\le -\frac{1}{h-1}\left(-\frac{1}{N}+\beta N^{-\alpha}+\frac{\log N}{\sqrt{N}}\right).
\end{align*}
Since $\frac{\beta N^{-\alpha}}{h-1}\ge \frac{N^{-\alpha}}{k}\ge \frac{b}{\sqrt{N}}$ and $\frac{\log N}{\sqrt{N}}\ge \frac{1}{N}$ when $N$ is sufficiently large, this upper bound becomes
\begin{equation*}
\Delta V(\bm{s})\le -\frac{b}{\sqrt{N}}.
\end{equation*}

\noindent\textbf{Case 2:} $\terml > \termr$. In this case $V(\bm{s})=\termr$. Similarly, a task departs from a queue of length $i$ at a rate of $N(s_i-s_{i+1})$.  The change in $V(\bm{s})$ can be bounded as
$$
V(\bm{s} - e_i)-V(\bm{s})\le
\begin{cases}
\frac{b}{N(h-1)}, & \text{ if }1\le i < h,\\
0, \text{ if } h\le i \le b.
\end{cases}
$$
When a job arrives, under the event $\mathrm{FILL}_{h - 1}$,
\begin{equation*}
V(\bm{s} + e_{a_1} + \cdots + e_{a_k})= V(\bm{s})-\frac{kb}{N(h-1)},
\end{equation*}
where we have used the fact that $\termr > B$.  Again, $\Pr\left\{\mathrm{FILL}_{h - 1}\right\} \geq 1 - \frac{1}{N}$ by Lemma~\ref{lem:filling}.  Otherwise, i.e., when $\mathrm{FILL}_{h - 1}$ is not true, $V(\bm{s} + e_{a_1} + \cdots + e_{a_k})\le V(\bm{s})$.

Therefore,
\begin{align}
\Delta V(\bm{s}) &\leq 
\sum_{i=1}^b N(s_{i} - s_{i + 1})\left(V(\bm{s} - e_i) - V(\bm{s})\right)
+\frac{N\lambda}{k}\left(1-\frac{1}{N}\right)\left(-\frac{kb}{N(h-1)}\right)\nonumber\\
&\leq \frac{b}{h-1}\left(s_1-s_h\right)-\frac{b}{h-1}\left(1-\frac{1}{N}\right)\left(1 - \beta N^{-\alpha}\right)\nonumber\\
&\leq \frac{b}{h-1}\left(1 - \left(\beta N^{-\alpha} + \frac{\log N}{\sqrt{N}}\right)-\left(1 - \frac{1}{N}\right)\left(1 - \beta N^{-\alpha}\right)\right),\label{eq:sbounds}\\
&=\frac{b}{h-1}\left(-\frac{\log N}{\sqrt{N}}+ \frac{1}{N}\left(1-\beta N^{-\alpha}\right)\right)\nonumber\\
&\leq -\frac{b}{h-1}\frac{\log N -\frac{1}{\sqrt{N}}}{\sqrt{N}},\nonumber
\end{align}
where \eqref{eq:sbounds} is due to the fact that $s_1\le 1$ and the fact that $s_h\ge \beta N^{-\alpha} + \frac{\log N}{\sqrt{N}}$ following similar arguments as those in Case 1 noting that $\terml > \termr > B$.  When $N$ is sufficiently large, this upper bound becomes
\begin{equation*}
\Delta V(\bm{s})\le -\frac{b}{\sqrt{N}},
\end{equation*}
which completes the proof of the drift bound in Lemma~\ref{lem:ssc}.

For this Lyapunov function $V$, under the notation in Lemma~\ref{lem:tail}, we have that $\nu_{\mathrm{max}} \leq \frac{kb}{N(h-1)}$ and $f_{\mathrm{max}} \leq \frac{b}{h-1}$.  Let $\mathcal{E} = \mathcal{S}$ and $j = \sqrt{N}\log^2 N$.  Then by Lemma~\ref{lem:tail}, the drift bound implies that
\begin{align*}
&\mspace{21mu}\Pr\left\{V(\bm{S}) > B + \frac{2kb\log^2 N}{(h-1)\sqrt{N}}\right\}\\
&=\Pr\left\{V(\bm{S}) > B + \frac{2kb}{(h-1)N}j\right\}\\
&\leq \left(1 + \frac{h-1}{\sqrt{N}}\right)^{-j} \\
&\leq \left(\left(1 + \frac{1}{\sqrt{N}}\right)^{\sqrt{N} + 1}\right)^{-\frac{1}{\sqrt{N}+1}\sqrt{N}\log^2 N} \\
&\leq e^{-\frac{1}{2}\log^2 N},
\end{align*}
where the last inequality holds when $N$ is sufficiently large.  This completes the proof.
\end{proof}

\section{Lemmas needed for impossibility results}\label{app:lemmas-impossibility}

\subsection{Lemma~\ref{lower:lemma0}}
\begin{lemma}\label{lower:lemma0}
Assume that the system is stable. Then for any $x > 0$, 
$$
\Pr\left\{S_1 < 1 - x\right\} \leq \frac{\beta N^{-\alpha}}{x}.
$$
\end{lemma}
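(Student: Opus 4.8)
The plan is to prove Lemma~\ref{lower:lemma0} via a simple application of Little's law (or a direct flow-balance argument) to pin down $\expect[S_1]$, followed by Markov's inequality applied to the complementary quantity $1 - S_1 \geq 0$.

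First I would establish that $\expect[S_1] = 1 - \beta N^{-\alpha}$. Recall that $S_1$ is the fraction of servers that have at least one task, i.e., the fraction of busy servers, so $N S_1$ is the number of busy servers. In steady state, the rate at which work (tasks) enters service equals the rate at which tasks depart. Since buffers have unlimited sizes in the setting of Theorem~\ref{main_theorem_2}, no job is ever dropped, so tasks enter the system at rate $\frac{N\lambda}{k} \cdot k = N\lambda$. Each busy server completes tasks at rate $1$ (exponential service with rate $1$), so the total departure rate in state $\bm{s}$ is $N s_1$, and in steady state its expectation is $\expect[N S_1]$. Equating arrival and departure rates gives $\expect[N S_1] = N\lambda$, hence $\expect[S_1] = \lambda = 1 - \beta N^{-\alpha}$. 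Alternatively, this is exactly Little's law applied to the set of servers currently in service: the mean number in service equals the arrival rate of tasks times the mean service time, i.e., $N\lambda \cdot 1$.

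Next I would apply Markov's inequality. Since $0 \le S_1 \le 1$ always (it is a fraction), the random variable $1 - S_1$ is nonnegative, and $\expect[1 - S_1] = 1 - \expect[S_1] = \beta N^{-\alpha}$. For any $x > 0$,
$$
\Pr\left\{S_1 < 1 - x\right\} = \Pr\left\{1 - S_1 > x\right\} \le \frac{\expect[1 - S_1]}{x} = \frac{\beta N^{-\alpha}}{x},
$$
which is the desired bound. (One can use the non-strict version $\Pr\{1-S_1 \geq x\} \leq \expect[1-S_1]/x$ to cover the $\geq$ case as well; for the strict inequality stated, monotonicity of probability gives it immediately.)

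The only real subtlety — and the one step I would be careful to justify cleanly — is the computation $\expect[S_1] = \lambda$, which relies on the unlimited-buffer assumption so that the task arrival rate into the system is exactly $N\lambda$ with no losses, and on the fact that positive recurrence (stability, which is assumed in the lemma) guarantees that the steady-state rate-balance identity holds. Everything else is routine. I do not expect any genuine obstacle here; the lemma is essentially a packaging of the elementary observation that the load $\lambda$ equals the steady-state fraction of busy servers, combined with a first-moment tail bound.
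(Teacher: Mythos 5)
Your proposal is correct and follows essentially the same route as the paper: establish $\expect[S_1]=\lambda=1-\beta N^{-\alpha}$ by work conservation (Little's law applied to servers in service, using the unlimited-buffer assumption so no tasks are lost), then apply Markov's inequality to $1-S_1$. The extra care you take in justifying the rate-balance identity is fine but the paper treats it the same way, just more tersely.
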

\begin{proof}
By work conservation law, it holds that
$
\mathbb{E}[S_1] = \lambda = 1 - \beta N^{-\alpha}.
$
Then $\mathbb{E}[1-S_1]=\beta N^{-\alpha}.$
Therefore, by the Markov inequality, for any $x>0$,
$$
\Pr\left\{S_1 < 1-x\right\}=\Pr\left\{1-S_1 > x\right\}\leq \frac{\beta N^{-\alpha}}{x}.
$$
\end{proof}

\subsection{Lemma~\ref{lower:lemma1}}
\begin{lemma}\label{lower:lemma1}
Let $\ell$ be a threshold such that $1 \leq \ell \leq h$ with $h=O(\log k)$. Suppose that an incoming job sees a state 
$\bm{s}$ such that $\sum_{i=1}^{\ell} s_i \geq \ell - x$,
where $x = \Omega(hN^{-\alpha})$ and $x = e^{-\Omega(\log N)}$. Consider a Lyapunov function 
$
V_{\ell}(\bm{s}) = s_1 + s_2 + \cdots + s_{\ell}.
$
It holds that when $N$ is sufficiently large,
$$
\sum_{\bm{s}':\bm{s} \to \bm{s}' \text{~due to an arrival}} r_{\bm{s} \to \bm{s}'}\left(V_{\ell}(\bm{s}') - V_{\ell}(\bm{s})\right) \leq 2kdx,
$$
where $r_{\bm{s} \to \bm{s}'}$ is the transition rate, and $\bm{s} \to \bm{s}' \text{~due to an arrival}$ means that $\bm{s}$ will move to state $\bm{s}'$ on the Markov chain only if there is an incoming job.
\end{lemma}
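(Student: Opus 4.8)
\textbf{Proof proposal for Lemma~\ref{lower:lemma1}.}
The plan is to bound the arrival-driven part of the drift of $V_{\ell}(\bm{s}) = s_1 + \cdots + s_\ell$ by splitting on whether the arriving job triggers the event $\mathrm{FILL}_\ell$ (all $k$ tasks land in positions below threshold $\ell$) or not. First I would observe that under batch-filling, a job arrival moves the state from $\bm{s}$ to $\bm{s} + e_{a_1} + \cdots + e_{a_k}$ where $a_1,\dots,a_k$ are the queueing positions the tasks are routed to, and each $e_{a_j}$ increases $V_\ell$ by exactly $\frac{1}{N}$ if $a_j \le \ell$ and by $0$ otherwise. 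Hence the increment in $V_\ell$ from an arrival is $\frac{1}{N}$ times the number of tasks assigned to positions at or below $\ell$, which is always between $0$ and $\frac{k}{N}$, and equals $\frac{k}{N}$ precisely on $\mathrm{FILL}_\ell$. Since the job arrival rate is $\frac{N\lambda}{k} \le \frac{N}{k}$, the arrival contribution to the drift is at most $\frac{N}{k} \cdot \frac{k}{N} \cdot \Pr\{\text{all tasks at positions} \le \ell\} = \Pr\{\text{all tasks at positions} \le \ell\}$, but more usefully I want to express the contribution as $\frac{N\lambda}{k}\,\expect[\frac{1}{N}\cdot(\text{\#tasks below }\ell)] \le \frac{1}{k}\expect[\text{\#tasks at positions} \le \ell]$.

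The key step is then to upper bound the expected number of tasks (among the $k$ incoming ones) routed to positions $\le \ell$. The condition $\sum_{i=1}^{\ell} s_i \ge \ell - x$ means the total number of spaces below $\ell$ across all $N$ queues is $N\ell - N\sum_{i=1}^\ell s_i \le Nx$, so such spaces are scarce. When batch-filling samples $kd$ queues without replacement, the expected number of below-$\ell$ spaces among the sampled queues is $\frac{kd}{N}(N\ell - N\sum_{i=1}^\ell s_i) \le kdx$. Since a task gets routed to a position $\le \ell$ only if there is still a below-$\ell$ space available in the sampled set, the number of tasks landing at positions $\le \ell$ is at most the number of below-$\ell$ spaces in the sampled queues; in expectation this is at most $kdx$. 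Therefore the arrival contribution to the drift is at most $\frac{1}{k}\cdot kdx = dx$, which is even stronger than the claimed $2kdx$; to match the statement's form one can simply use the cruder bound that the number of tasks routed below $\ell$ is trivially at most $k$ and also at most the sampled below-$\ell$ spaces, and note $kdx \ge dx$, or alternatively account for the full generator term $\frac{N\lambda}{k}\cdot\frac{1}{N}\cdot(\text{increment count})$ giving $dx$ and absorbing constants into the $2kdx$ slack (the factor $k$ and the $2$ provide ample room).

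The main obstacle I anticipate is being careful about the without-replacement sampling: the number of below-$\ell$ spaces \emph{in the sampled queues} is a hypergeometric-type quantity, and while its expectation is exactly $\frac{kd}{N}$ times the population total (linearity of expectation holds even without replacement), one must make sure the bound ``\#tasks below $\ell \le$ \#below-$\ell$ spaces in sampled queues'' is literally correct — this is true because batch-filling fills spaces from low positions up and never creates a position $\le \ell$ beyond an existing vacancy, so each task routed to a position $\le \ell$ consumes a distinct pre-existing below-$\ell$ space. The hypotheses $x = \Omega(hN^{-\alpha})$ and $x = e^{-\Omega(\log N)}$ are presumably used elsewhere (or to ensure the bound is meaningful and that $\mathrm{FILL}_\ell$-type estimates from Lemma~\ref{lem:filling} are applicable in the surrounding argument); for this lemma itself the combinatorial expectation bound suffices, and the ``$N$ sufficiently large'' clause only needs $\frac{N\lambda}{k} \le \frac{N}{k}$ together with the generous constant $2$ and factor $k$ in $2kdx$. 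I would finish by assembling: arrival drift $= \frac{N\lambda}{k}\cdot\expect_{\text{routing}}\!\left[V_\ell(\bm{s}+e_{a_1}+\cdots+e_{a_k}) - V_\ell(\bm{s})\right] \le \frac{N\lambda}{k}\cdot\frac{1}{N}\cdot kdx = \lambda dx \le dx \le 2kdx$.
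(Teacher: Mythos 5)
Your proof is correct, and it takes a genuinely different route from the paper's. The paper conditions on the event that \emph{every} one of the $kd$ sampled queues has length at least $\ell$: since $s_\ell \ge 1-x$, at most $Nx$ queues are shorter than $\ell$, and a product bound plus Bernoulli's inequality gives that this event fails with probability at most $kd\left(x+\frac{kd}{N}\right) \le 2kdx$ — this is exactly where the hypotheses $x=\Omega(hN^{-\alpha})$, the growth conditions on $k,d$, and the ``$N$ sufficiently large'' clause are used, to absorb $\frac{kd}{N}$ into $x$. On the good event the arrival changes $V_\ell$ not at all, and on the bad event the jump is crudely bounded by $\frac{k}{N}$; multiplying rate $\frac{N\lambda}{k}$, probability $2kdx$, and jump $\frac{k}{N}$ yields the stated $2kdx$. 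You instead bound the \emph{expected} number of below-$\ell$ vacancies in the sample by linearity of expectation over the without-replacement sample ($\frac{kd}{N}\cdot N\bigl(\ell-\sum_{i=1}^\ell s_i\bigr)\le kdx$), observe that each task landing at a position $\le \ell$ consumes a distinct pre-existing vacancy so the increment of $V_\ell$ is $\frac{1}{N}$ times a count bounded by the sampled vacancies, and get drift at most $\lambda dx \le dx \le 2kdx$. Your first-moment argument is more elementary, avoids the asymptotic hypotheses on $x$, $k$, $d$ entirely (they are needed only in the paper's probability estimate, not in yours), and is stronger by a factor of roughly $2k$; the paper's event-based bound buys nothing extra here beyond matching the constant it later uses, so either proof serves the downstream Lyapunov argument equally well.
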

\begin{proof}
Suppose that an arrival sees a state $\bm{s}$. Given $\sum_{i=1}^{\ell} s_i \geq \ell - x$, we have $s_{\ell} \geq 1 - x$ since $s_i \leq 1$ for all $1 \leq i \leq \ell$. Without loss of generality, we can think of the batch-filling policy as sampling the $kd$ queues one by one.  During the sampling, we always choose at most $kd$ servers of length at least $\ell$. The probability that all $kd$ sampled servers have length at least $\ell$ is thus larger or equal to
$$
\left(\frac{N(1-x)-kd}{N}\right)^{kd} = \left(1 - \left(x + \frac{kd}{N}\right)\right)^{kd}.
$$
Recall that by the assumptions in Theorem~\ref{main_theorem_2}, we have $x = e^{-\Omega(\log N)}, kd = o(N^{1 - \alpha})$, and thus $x + \frac{kd}{N} > -1$ when $N$ is sufficiently large. Furthermore, applying Bernoulli's Inequality and the assumption that $x = \Omega(hN^{-\alpha})$, it holds
$$
\left(1 - \left(x + \frac{kd}{N}\right)\right)^{kd} \geq 1 - kd\left(x + \frac{kd}{N}\right) \geq 1-2xkd
$$
for a large $N$. 
Note that if we put all tasks of this arrival into servers of length at least $\ell$, we will not affect the value of $V_l(\bm{s})$. As a result,
$$
\begin{aligned}
&\sum_{\bm{s}':\bm{s} \to \bm{s}' \text{~due to an arrival}} r_{\bm{s} \to \bm{s}'}\left(V_{\ell}(\bm{s}') - V_l(\bm{s})\right) \\
\leq& \left(1-2kdx\right)\cdot 0 \cdot \frac{\lambda}{k} + 2kdx \cdot k\frac{\lambda}{k} \\
\leq& 2kdx,
\end{aligned}
$$
which completes the proof. 
\end{proof}

\subsection{Lemma~\ref{lower:lemma2}}
Lemma~\ref{lower:lemma2} is a key in establishing the inductive proof.  This lemma relates $S_q$ to $S_{q-1}$ for $3 \leq i \leq h$.

\begin{lemma}\label{lower:lemma2}
Define $u = 2kd$ and $b_q = u^{q-1}h^q\beta N^{-\alpha}$ for $q \in \mathbb{N}$. Define a sequence $a_q$, such that $a_1 = 0,a_2 = 1$ and $a_q = (q-2)a_{q-1} + 2$ for $q > 2$.  For any $q$ with $3 \leq q \leq h$, if
$$
\Pr\left\{S_1 - S_{q - 1} \leq a_{q-1}b_{q-1}\right\} \geq \left(\frac{h-2}{h}\right)^{q-2}-(q-2)N^{-\log N},
$$
then
$$
\Pr\left\{S_1 - S_{q} \leq a_qb_q\right\} \geq \left(\frac{h-2}{h}\right)^{q-1}-(q-1)N^{-\log N}.
$$
\end{lemma}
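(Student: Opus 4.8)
The plan is to run the same ``reverse Lyapunov'' argument that establishes the base case $q=2$ in the proof of Theorem~\ref{main_theorem_2}, but now with the Lyapunov function $V_{q-1}(\bm{s}) = s_1 + s_2 + \cdots + s_{q-1}$, whose maximal value is $q-1$, and with the inductive hypothesis on $S_1 - S_{q-1}$ combined with Lemma~\ref{lower:lemma0} playing the role that the bound on $S_1$ alone played before. Write $u = 2kd$, and, anticipating the bookkeeping, set $x_{q-1} := a_q b_{q-1}$ and $B_{q-1} := (q-1) - x_{q-1}$. The recursions $a_q = (q-2)a_{q-1}+2$ and $b_q = uh\,b_{q-1}$ give the two identities $x_{q-1} = (q-2)a_{q-1}b_{q-1} + 2b_{q-1}$ and $a_q b_q = uh\,x_{q-1}$, which drive everything.

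First I would establish a drift bound for $V_{q-1}$. For any state $\bm{s}$ with $V_{q-1}(\bm{s}) \geq B_{q-1}$, i.e.\ $\sum_{i=1}^{q-1}s_i \geq (q-1) - x_{q-1}$, Lemma~\ref{lower:lemma1} (applied with $\ell = q-1 \le h$ and $x = x_{q-1}$, whose hypotheses $x_{q-1}=\Omega(hN^{-\alpha})$ and $x_{q-1}=e^{-\Omega(\log N)}$ follow from $a_q \le 2h^h$, $u = 2kd$, and the assumed growth of $d,k,h$) bounds the arrival part of the drift by $ux_{q-1}$, while the departure part telescopes to $-(s_1-s_q)$ since a departure from a length-$i$ queue lowers $V_{q-1}$ by $1/N$ precisely when $1\le i\le q-1$. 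Hence $\Delta V_{q-1}(\bm{s}) \leq ux_{q-1} - (s_1-s_q)$ on $\{V_{q-1}\ge B_{q-1}\}$. I would then take $\mathcal{E}_{q-1} = \{s_1 - s_q > a_q b_q\}$; using $a_q b_q = uh\,x_{q-1}$, the drift is $\le -ux_{q-1}(h-1) =: -\gamma$ on $\mathcal{E}_{q-1}$ and $\le ux_{q-1} =: \delta$ off it, so $\delta/\gamma = 1/(h-1)$. With $\nu_{\max}\le k/N$ and $f_{\max}\le 1$ for $V_{q-1}$ (only arrivals increase it, by at most $k/N$ at rate at most $N\lambda/k$), Lemma~\ref{lem:tail} with $j_{q-1} = \lceil \log^2 N/\ln(1+\gamma)\rceil$ yields
\[
\Pr\!\left\{V_{q-1}(\bm{S}) \geq B_{q-1} + 2\nu_{\max}j_{q-1}\right\} \leq N^{-\log N} + \tfrac{h}{h-1}\,\Pr\!\left\{S_1 - S_q \leq a_q b_q\right\}.
\]

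Next I would lower-bound the left-hand side. Put $c := (h-1)\beta N^{-\alpha}/p_{q-1}$ with $p_{q-1} := (\tfrac{h-2}{h})^{q-2} - (q-2)N^{-\log N}$; since $q-2 \le h-2$ and $h=\omega(1)$ one has $p_{q-1}\ge \tfrac1{2e^2}$ and $c=O(hN^{-\alpha})<1$ for large $N$. On the event $A_{q-1} = \{S_1 - S_{q-1}\le a_{q-1}b_{q-1}\}$, monotonicity $S_1\ge\cdots\ge S_{q-1}$ gives $V_{q-1}(\bm{S}) \geq (q-1)S_1 - (q-2)a_{q-1}b_{q-1}$, so on $A_{q-1}\cap\{S_1 \ge 1-c\}$ we get $V_{q-1}(\bm{S}) \ge (q-1)(1-c) - (q-2)a_{q-1}b_{q-1}$; by the identity $x_{q-1} = (q-2)a_{q-1}b_{q-1} + 2b_{q-1}$ this is $\ge B_{q-1} + 2\nu_{\max}j_{q-1}$ as soon as $2b_{q-1} \ge (q-1)c + 2\nu_{\max}j_{q-1}$, which holds for large $N$ because $(q-1)c = O(h^2N^{-\alpha})$, $2\nu_{\max}j_{q-1} = o(b_{q-1})$ (using $\alpha<\tfrac12$ and the subpolynomial growth of $k$), and $2b_{q-1}\ge 2uh^2\beta N^{-\alpha}$ with $u=2kd\to\infty$. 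Thus the good event lies inside $\{V_{q-1}(\bm{S}) \ge B_{q-1}+2\nu_{\max}j_{q-1}\}$, and by the inductive hypothesis and Lemma~\ref{lower:lemma0} ($\Pr\{S_1 < 1-c\}\le \beta N^{-\alpha}/c = p_{q-1}/(h-1)$) its probability is at least $p_{q-1} - p_{q-1}/(h-1) = \tfrac{h-2}{h-1}p_{q-1}$. Substituting and solving for $\Pr\{S_1 - S_q \le a_q b_q\}$ gives at least $\tfrac{h-1}{h}\big(\tfrac{h-2}{h-1}p_{q-1} - N^{-\log N}\big) = \tfrac{h-2}{h}p_{q-1} - \tfrac{h-1}{h}N^{-\log N}$, and a short computation using $q\ge3$ rewrites this as $(\tfrac{h-2}{h})^{q-1} - [\tfrac{h-2}{h}(q-2) + \tfrac{h-1}{h}]N^{-\log N} \ge (\tfrac{h-2}{h})^{q-1} - (q-1)N^{-\log N}$, which is exactly the claimed bound.

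The main obstacle I anticipate is not any single inequality but the calibration of the constants $x_{q-1}$, $B_{q-1}$, $c$, and $j_{q-1}$: the definition $a_q = (q-2)a_{q-1}+2$ is forced precisely so that the ``$+2b_{q-1}$'' slack inside $x_{q-1}$ simultaneously absorbs the tail-bound error $2\nu_{\max}j_{q-1}$ and the $(q-1)c$ loss from the deviation of $S_1$, and one must verify these are genuinely lower order using $d = e^{o(\log N/\log k)}$, $k = e^{o(\sqrt{\log N})}$, $h = O(\log k)$, $\alpha<1/2$ — in particular that $u^{q-2}h^{q-1}a_q = e^{o(\log N)}$ so that $x_{q-1}=e^{-\Omega(\log N)}$ as Lemma~\ref{lower:lemma1} requires, and that $kN^{\alpha-1}\log^2 N \to 0$ so that $2\nu_{\max}j_{q-1}=o(b_{q-1})$. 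Once those magnitude checks are in place, the probability arithmetic at the end is pure bookkeeping, but it is exactly what pins down the form of $a_q$ and of the constants in the statement.
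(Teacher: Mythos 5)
Your proposal is correct and takes essentially the same route as the paper's proof: the Lyapunov function $V_{q-1}(\bm{s})=\sum_{i=1}^{q-1}s_i$, the arrival-drift bound from Lemma~\ref{lower:lemma1} at states with $V_{q-1}\ge (q-1)-a_qb_{q-1}$, the set $\mathcal{E}_{q-1}=\{s_1-s_q>a_qb_q\}$ with $\delta/\gamma=1/(h-1)$ fed into Lemma~\ref{lem:tail}, and the inductive hypothesis together with Lemma~\ref{lower:lemma0} to show $V_{q-1}$ exceeds the tail threshold with probability close to $p_{q-1}$. The only differences are bookkeeping: you tune the $S_1$-threshold $1-c$ so the Markov loss appears as the multiplicative factor $\frac{h-2}{h-1}$, whereas the paper uses the fixed threshold $1-b_{q-1}/(q-1)$ and carries an additive error $\frac{q-1}{u^{q-2}h^{q-1}}$ that is absorbed by the $\frac{1}{h}\left(\frac{h-2}{h}\right)^{q-2}$ slack at the end; both calibrations work.
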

\begin{proof}
The proof is close to that of Theorem \ref{main_theorem_2}. Recall that for each $1 \leq \ell \leq h$ and state $\bm{s} \in \mathcal{S}$, we define the Lyapunov function
$$
V_{\ell}(\bm{s}) = \sum_{i=1}^{\ell} s_i.
$$
For $q$ such that $3 \leq q \leq h$, by assumption, 
$$
\Pr\left\{S_1 - S_{q - 1} \leq a_{q-1}b_{q-1}\right\} \geq \left(\frac{h-2}{h}\right)^{q-2} - (q-2)N^{-\log N}.
$$
It holds
\begin{equation}\label{lower:lemma2:eq1}
\begin{aligned}
&\mspace{21mu}\Pr\left\{V_{q-1}(\bm{S}) < q - 1 - \left((q-2)a_{q-1}+1\right)b_{q-1}\right\} \\ 
&\leq \Pr\left\{V_{q-1}(\bm{S}) < q - 1 - \left((q-2)a_{q-1}+1\right)b_{q-1}, \right. \\
&\mspace{21mu}\left. S_1 - S_{q - 1} \leq a_{q-1}b_{q-1}\right\} \\
&\mspace{21mu}+\Pr\left\{S_1 - S_{q- 1} > a_{q - 1}b_{q-1}\right\} \\
&\leq \Pr\left\{(q-1)S_1 < q-1-b_{q-1}\right\} + 1 - \left(\frac{h-2}{h}\right)^{q-2}\\
&\mspace{21mu}+(q-2)N^{-\log N} \\
&\leq \frac{q-1}{u^{q-2}h^{q-1}}+ 1 - \left(\frac{h-2}{h}\right)^{q-2}+(q-2)N^{-\log N}.
\end{aligned}
\end{equation}
The last inequality uses Lemma \ref{lower:lemma0} and $b_{q - 1} = u^{q-2}h^{q-1}\beta N^{-\alpha}.$

Now let $B_{q - 1} = q - 1 - \left((q - 2)a_{q - 1} + 2\right)b_{q - 1}$. We can see that $B_{q - 1} = q - 1 - a_qb_{q-1}$. For a state $\bm{s}$ such that $V_{q-1}(\bm{s}) > B_{q-1}$, it holds
$$
\begin{aligned}
\Delta V_{q-1}(\bm{s}) &=  \sum_{\bm{s}':\bm{s} \to \bm{s}' \text{~due to an arrival}} r_{\bm{s} \to \bm{s}'}\left(V_{q-1}(\bm{s}') - V_{q-1}(\bm{s})\right)\\ &\mspace{21mu}+\sum_{\bm{s}':\bm{s} \to \bm{s}' \text{~due to a departure}} r_{\bm{s} \to \bm{s}'}\left(V_{q-1}(\bm{s}') - V_{q-1}(\bm{s})\right).
\end{aligned}
$$
Recall that we define $u = 2kd$ and $b_{q} = u^{q-1}h^{q}\beta N^{-\alpha}$. As $V_{q-1}(\bm{s}) > q - 1 - a_qb_{q - 1}$, by Lemma \ref{lower:lemma1}, it holds
$$
\begin{aligned}
\Delta V_{q-1}(\bm{s}) &\leq 2kda_qb_{q - 1} - (s_1 - s_q) \\
&=a_qu^{q-1}h^{q-1}\beta N^{-\alpha} - (s_1-s_q).
\end{aligned}
$$
Let $\Pr\left\{S_1-S_q \leq a_qb_q\right\}=p_q, \mathcal{E}_{q-1} = \left\{s \in \mathcal{S} \mid s_1 - s_q > a_qb_q\right\}$. Then $\Pr\left\{S \not \in \mathcal{E}_{q - 1}\right\} = p_q$. For a state $\bm{s}$, consider the following two cases.
\begin{itemize}
\item $\bm{s} \not \in \mathcal{E}_{q - 1}$, $\Delta V_{q-1}(\bm{s}) \leq a_qu^{q-1}h^{q-1}\beta N^{-\alpha}\eqqcolon\delta$.
\item $\bm{s} \in \mathcal{E}_{q - 1}$. Let $\gamma = -\Delta V_{q-1}(\bm{s})$. It holds $$\gamma \geq a_qu^{q-1}h^{q-1}\beta N^{-\alpha}(h-1).$$
\end{itemize}
We then utilize the tail bound, Lemma \ref{lem:tail}. Following the definition in Lemma \ref{lem:tail}, it is easy to verify that $\nu_{\mathrm{max}} \leq \frac{k}{N}, f_{\mathrm{max}} \leq 1$ for the Lyapunov function $V_{q-1}(\bm{s})$. Let 
$$j_{q - 1} = \left(1+\frac{N^{\alpha}}{a_qu^{q-1}h^{q-1}(h-1)\beta}\right)\log^2 N.$$
Using Lemma \ref{lem:tail},
$$
\begin{aligned}
&\mspace{21mu}\Pr\left\{V_{q-1}(\bm{S})>B_{q-1}+2\nu_{\mathrm{max}}j_{q-1}\right\} \\
&\leq  \left(\frac{f_{\mathrm{max}}}{f_{\mathrm{max}} + \gamma}\right)^{j_{q-1}}+\left(\frac{\delta}{\gamma}+1\right)\Pr\left\{\bm{S} \not \in \mathcal{E}_{q-1}\right\} \\
&\leq \left(\frac{f_{\mathrm{max}}}{f_{\mathrm{max}} + \gamma}\right)^{j_{q-1}} + \frac{h}{h-1}p_{q}.
\end{aligned}
$$
Note that when $N$ is sufficiently large, 
$$
\left(\frac{f_{\mathrm{max}}}{f_{\mathrm{max}} + \gamma}\right)^{j_{q-1}} \leq e^{-\log^2 N}.
$$
Besides, we assume that $0 < \alpha < 0.5, k = e^{o(\sqrt{\log N})}$ and $h = O(\log k)$. As a result, for a large $N$,
$$
\begin{aligned}
&\Pr\left\{V_{q-1}(\bm{S}) \geq q - 1 - ((q-2)a_{q-1}+1)b_{q-1}\right\} \\
&\leq \Pr\left\{V_{q-1}(\bm{S})>B+2\nu_{\mathrm{max}}j_{q-1}\right\} \\
&\leq e^{-\log^2 N} + \frac{h}{h-1}p_q.\\
\end{aligned}
$$
Together with Eq.(\ref{lower:lemma2:eq1}), we have
$$
\begin{aligned}
&\left(\frac{h-2}{h}\right)^{q-2}-\frac{q-1}{u^{q-2}h^{q-1}}-(q-2)N^{-\log N} \\
&\leq \Pr\left\{V_{q-1}(\bm{S}) > q - 1 - ((q-2)a_{q-1}+1)b_{q-1}\right\} \\
&\leq e^{-\log^2 N} + \frac{h}{h-1}p_q.
\end{aligned}
$$
We can conclude that for a large $N$,
$$
\Pr\left\{S_1 - S_q \leq a_qb_q\right\}=p_q \geq \left(\frac{h-2}{h}\right)^{q-1}-(q-1)N^{-\log N},
$$
which completes the proof.
\end{proof}

\subsection{Lemma~\ref{lower:lemma3}}
Lemma~\ref{lower:lemma3} complements the probability bound in Lemma~\ref{lem:filling}. Recall that $\mathrm{FILL}_h$ denotes the event that all the $k$ tasks of an incoming job are assigned to queueing positions below a threshold $h$.  Lemma~\ref{lower:lemma3} gives a condition on the total queue length for $\mathrm{FILL}_h$ to happen with low probability.

\begin{lemma}\label{lower:lemma3}
Suppose an incoming job sees a state $\bm{s}$ such that $\sum_{i=1}^h s_i > h - \frac{1}{3d}$. Then when $N$ is sufficiently large,
$$
\Pr\left\{\mathrm{FILL}_h\right\} = o(1).
$$
\end{lemma}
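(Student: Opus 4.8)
\textbf{Proof proposal for Lemma~\ref{lower:lemma3}.}

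The plan is to show that when the state $\bm{s}$ seen by an incoming job has $\sum_{i=1}^h s_i > h - \frac{1}{3d}$, then with high probability the $kd$ sampled queues simply do not contain enough spaces below threshold $h$ to hold all $k$ tasks, so $\mathrm{FILL}_h$ cannot happen. This is the mirror image of the argument in Lemma~\ref{lem:filling}: there, a small $\sum_{i=1}^h s_i$ forced $N_h \ge k$ with high probability; here, a large $\sum_{i=1}^h s_i$ forces $N_h < k$ with high probability. Recall $N_h$ denotes the total number of spaces below $h$ among the $kd$ sampled queues, and $\mathrm{FILL}_h$ occurs exactly when $N_h \ge k$, so it suffices to prove $\Pr\{N_h \ge k\} = o(1)$.

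First I would compute the relevant mean. As in the intuition paragraph after Lemma~\ref{lem:filling}, if a queue is drawn uniformly at random, its expected number of spaces below $h$ is $h - \sum_{i=1}^h s_i < \frac{1}{3d}$. Writing $N_h = X_1 + \cdots + X_{kd}$ where $X_j$ is the number of spaces below $h$ in the $j$-th sampled queue (sampling without replacement), linearity of expectation gives $\expect[N_h] = kd\,(h - \sum_{i=1}^h s_i) < \frac{kd}{3d} = \frac{k}{3}$, where a minor correction is needed to account for sampling without replacement rather than with replacement — but since we are sampling $kd = o(N)$ queues out of $N$, replacing the without-replacement mean by the with-replacement mean changes things by a lower-order factor, and in any case one can bound $\expect[N_h]$ above by $\frac{k}{3}(1+o(1))$, which stays below $\frac{k}{2}$ for large $N$. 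Then apply a concentration bound: again invoke Hoeffding's result on sampling without replacement \cite[Theorem~4]{Hoe_63} to pass to i.i.d.\ surrogates $Y_1,\dots,Y_{kd}$ distributed as $X_1$, bounded in $[0,h]$, and use a Chernoff/Hoeffding tail bound for $\Pr\{\sum Y_j \ge k\}$. Since $k - \expect[N_h] = \Omega(k)$ and each summand lies in a bounded range $[0,h]$ with $h = O(\log k)$, the tail bound yields $\Pr\{N_h \ge k\} \le \exp(-\Omega(k/h^2)) = \exp(-\Omega(k/\log^2 k))$. Under the assumptions carried over from Theorem~\ref{main_theorem_2} (namely $k = e^{o(\sqrt{\log N})}$ and $k = \omega(1)$), this is $o(1)$ as $N\to\infty$ — in fact it suffices that $k\to\infty$, since $k/\log^2 k\to\infty$.

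The one place requiring care — and the main obstacle — is making the without-replacement mean and tail estimates clean. The mean $\expect[N_h]$ under sampling without replacement is exactly $kd$ times the population mean $h - \sum_{i=1}^h s_i$ (by symmetry each draw is marginally uniform), so actually no correction is needed for the mean; the subtlety is only in the concentration step, where one must verify the hypotheses of Hoeffding's comparison theorem apply with the convex function $f(x) = e^{tx}$ for $t>0$ (we want an upper tail this time, in contrast to the lower tail in Lemma~\ref{lem:filling}). The rest is a routine Chernoff computation: with $\mu := \expect[N_h] < \tfrac{k}{3}$ and target $k$, optimizing over $t$ gives a bound of the form $\exp(-c\,k)$ for a constant $c>0$ depending only on the ratio $k/\mu \ge 3$ and on $h$ through the boundedness of the summands. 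I would present the inequality $\Pr\{N_h\ge k\}\le e^{tk}\prod_{j=1}^{kd}\expect[e^{-tY_j}]$ replaced by its upper-tail analogue $\Pr\{N_h\ge k\}\le e^{-tk}\prod_{j=1}^{kd}\expect[e^{tY_j}]$, bound $\expect[e^{tY_j}] \le e^{(e^{th}-1)\mu/(kd)}$ or use the Hoeffding bound for bounded variables directly, and choose $t$ of order $1/h$ to conclude. Finally, since $\mathrm{FILL}_h \subseteq \{N_h \ge k\}$, we get $\Pr\{\mathrm{FILL}_h\} = o(1)$, completing the proof.
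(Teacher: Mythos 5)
Your overall strategy is exactly the paper's: observe that $\mathrm{FILL}_h=\{N_h\ge k\}$, note that the per-queue expected number of spaces below $h$ is $h-\sum_{i=1}^h s_i<\frac{1}{3d}$ so $\expect[N_h]<k/3$, pass from sampling without replacement to i.i.d.\ surrogates via Hoeffding's comparison theorem with $f(x)=e^{tx}$, and finish with a Chernoff bound at $t\asymp 1/h$, using $k/h=\omega(1)$. However, both concrete ways you propose to execute the concentration step fail as stated. First, the ``Hoeffding bound for bounded variables'' applied to $kd$ summands in $[0,h]$ with deviation $k-\mu=\Omega(k)$ gives $\exp\bigl(-\Omega(k^2/(kd\,h^2))\bigr)=\exp\bigl(-\Omega(k/(d\,h^2))\bigr)$, not your claimed $\exp(-\Omega(k/h^2))$; since Theorem~\ref{main_theorem_2} allows $d=e^{o(\log N/\log k)}$, which can grow far faster than $k/h^2$ (e.g.\ $k=\log N$, $d=e^{\sqrt{\log N}}$), this bound need not tend to $0$. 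Second, your MGF estimate $\expect[e^{tY_j}]\le e^{(e^{th}-1)\mu/(kd)}$ is a valid but too weak inequality: plugging it in yields $\Pr\{N_h\ge k\}\le\exp\bigl(k(-t+\tfrac{e^{th}-1}{3})\bigr)$, and since $e^{th}-1\ge th$, the exponent is nonnegative for every $t>0$ once $h\ge 3$ (and in the application $h=12e^2\ln k$), so no choice of $t\asymp 1/h$ can close the argument.

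The missing ingredient is the factor $1/h$ in the exponent of the per-sample MGF bound. The chord/convexity inequality $e^{tx}\le 1+\frac{x}{h}(e^{th}-1)$ for $x\in[0,h]$ gives $\expect[e^{tY_j}]\le\exp\bigl(\frac{\expect[Y_j]}{h}(e^{th}-1)\bigr)$, whence $\Pr\{N_h\ge k\}\le\exp\bigl(k(-t+\tfrac{e^{th}-1}{3h})\bigr)$; the paper reaches the same expression by computing $\expect[e^{tY_j}]$ exactly and maximizing over admissible states, the worst case being $s_1=\cdots=s_h=1-\frac{1}{3dh}$. With $t=\frac{\ln 3}{h}$ this becomes $\exp\bigl(\frac{k}{3h}(2-3\ln 3)\bigr)=o(1)$ since $\frac{k}{h}=\omega(1)$. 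So your proof is repairable and follows the paper's route, but as written the decisive Chernoff step does not go through.
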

\begin{proof}
We use a similar argument as the proof of Lemma \ref{lem:filling}. Suppose that an arrival sees a state $\bm{s}$. By assumption, it holds
$$
\sum_{i=1}^h s_i \geq h - \frac{1}{3d}.
$$
Let $X_1,\cdots,X_{kd}$ be the numbers of places below $h$ in each sampled server. The goal is to show
$$
\Pr\left\{\mathrm{FILL}_h\right\} = \Pr\left\{\sum_{i=1}^{kd} X_i \geq k\right\} = o(1)
$$
when $N$ is large enough.

We could see that for each integer $x$ such that $1 \leq x \leq h$, $\Pr\{X_i = x\} = s_{h - x} - s_{h - x + 1}$, and $\Pr\{X_i = 0\} = s_h$. Since we are sampling without replacement, $X_1,\cdots,X_{kd}$ are not independent. But still, utilizing a result of Hoeffding \cite[Theorem~4]{Hoe_63}, we have $\expect\left[f\left(\sum_{i=1}^{kd} X_i\right)\right] \leq \expect\left[f\left(\sum_{i=1}^{kd} Y_i\right)\right]$ for any continuous and convex function $f(\cdot)$, where $Y_1,\cdots,Y_{kd}$ are i.i.d.\ and follow the same distribution as $X_1$. Take $f(\cdot)$ to be $f(x) = e^{tx}$ where $t$ is some positive value.   

It then holds
$$
\begin{aligned}
\Pr\left\{\mathrm{FILL}_h\right\} &=\Pr\left\{\sum_{i=1}^{kd} X_i \geq k\right\} \\
&= \Pr\left\{e^{t\sum_{i=1}^{kd} X_i} \geq e^{tk}\right\} \\
&\leq e^{-tk}\prod_{i=1}^{kd} \expect\left[e^{tY_i}\right] \\
&=e^{-tk}\prod_{i=1}^{kd} \left(1 + \sum_{j=1}^h \left(e^{t(h-j+1)-1}-1\right)\right).
\end{aligned}
$$
Since for all $x > 0$, $1 + x \leq e^x$, we can further have
\begin{equation}\label{lower:lemma3:eq1}
\Pr\left\{\mathrm{FILL}_h\right\}
\leq e^{-tk}\exp\left(kd\sum_{j=1}^{h} \left(e^{t(h-j+1)}-1\right)(s_{j-1}-s_j)\right).
\end{equation}
Rearranging the sum in (\ref{lower:lemma3:eq1}), we get
\begin{equation}\label{lower:lemma3:eq2}
\begin{aligned}
&\mspace{21mu}\sum_{j=1}^h \left(e^{t(h-j+1)}-1\right)(s_{j-1}-s_j) \\
&=e^{th}-\sum_{j=1}^h s_j\left(e^{t(h-j+1)}-e^{t(h-j)}\right) \\
&=e^{th}-(e^t-1)\sum_{j=1}^h s_je^{t(h-j)}.
\end{aligned}
\end{equation}
Recall that $\sum_{j=1}^h s_j \geq h - \frac{1}{3d}$, and $1 \geq s_1 \geq s_2 \geq \cdots \geq s_h \geq 0$. Eq. (\ref{lower:lemma3:eq2}) is maximized when $s_1 = s_2 = \cdots = s_h = 1 - \frac{1}{3dh}$ and thus,
$$
(\ref{lower:lemma3:eq2}) \leq (e^{th}-1)\frac{1}{3dh}.
$$
Plug it into Inequality (\ref{lower:lemma3:eq1}), 
$$
\Pr\left\{\mathrm{FILL}_h\right\} \leq \min_{t > 0} \exp\left(k\left(-t + \frac{e^{th}-1}{3h}\right)\right).
$$
Pick $t = \frac{\ln 3}{h}$. It holds
$$
\Pr\left\{\mathrm{FILL}_h\right\} 
\leq \exp\left(\frac{k}{3h}\left(-3\ln 3 + 2\right)\right).
$$
By the assumption that $\frac{k}{h} = \omega(1)$, we could conclude that
$$
\Pr\left\{\mathrm{FILL}_h\right\} = o(1)
$$
when $N$ is sufficiently large.
\end{proof}

\section{Proof of Theorem~\ref{thm:lower-zero-waiting}} \label{sec:proof-thm-zero-waiting}
\begin{proof}
Let $\mathcal{I}$ be the event that all the tasks of an incoming job are assigned to idle servers in steady state.  Then what we need to show is $\Pr\{\mathcal{I}\} \leq 0.5$.

From the stability of batch-filling \cite{b3} and the Little's law, it holds $\expect{S_1} = \lambda$. For a job arrival of $k$ tasks, in order to schedule every task to an idle server, batch-filling needs to find at least $k$ idle servers. Suppose batch-filling probes $kd$ servers with state $X_{1}, \cdots, X_{kd}$ where $X_i$ is a $0-1$ random variables indicating whether the sampled $i$th server is idle. Then 
\[
\Pr\{\mathcal{I}\} = \Pr\{X_{1}+\cdots+X_{kd} \geq k\}.
\]
Notice that $\expect\left[X_1+\cdots+X_{kd}\right] = kd(1-\lambda)$ by the linearity of expectations. If $d \leq \frac{1}{2(1-\lambda)}$, this expectation is upper bounded by $\frac{k}{2}$. Therefore,
\[
\Pr\{\mathcal{I}\} = \Pr\{X_{1}+\cdots+X_{kd} \geq k\} \leq \frac{\expect\left[X_1+\cdots+X_{kd}\right]}{k} \leq 0.5.
\]
\end{proof}

\section{More Details on Simulations}
\subsection{Probe Ratios} \label{sec:experiment}
In the simulations, we need to adjust the definition of probe ratio a little. Let $D_i = \lfloor \min(N,kd_i) \rfloor$ for $1 \leq i \leq 4$. Then $D_i$ is the true number of probes used in batch-filling for each job. When $N$ is small, $D_i$ may be equal to $N$. In this case, we adjust the value of $d_i$ as $\frac{D_i}{k}$, which is the true expected probe ratio of each task. The exact value of $d_i$ is shown in Table \ref{table:probe-ratio}.
\begin{table}[!hbp]
\begin{tabular}{|l|l|l|l|l|}
\hline
$N$   & $d_1$ & $d_2$ & $d_3$ & $d_4$ \\ \hline
32    & 2.7   & 2.1   & 2.7   & 2.7   \\ \hline
64    & 3.8   & 2.2   & 3.8   & 3.8   \\ \hline
128   & 5.6   & 2.2   & 4.7   & 5.6   \\ \hline
256   & 7.6   & 2.3   & 5.1   & 8.2   \\ \hline
512   & 9.3   & 2.4   & 5.5   & 9.4   \\ \hline
1024  & 11.3  & 2.5   & 5.9   & 10.8  \\ \hline
2048  & 13.7  & 2.6   & 6.3   & 12.4  \\ \hline
4096  & 16.6  & 2.7   & 6.8   & 14.4  \\ \hline
8192  & 20.2  & 2.8   & 7.3   & 16.7  \\ \hline
16384 & 24.5  & 3.0   & 7.9   & 19.4  \\ \hline
32768 & 29.9  & 3.1   & 8.5   & 22.6  \\ \hline
65536 & 36.5  & 3.2   & 9.3   & 26.5  \\ \hline
\end{tabular}
\caption{Probe Ratios for Different Scales of System}
\label{table:probe-ratio}
\end{table}

\subsection{Numerical Values for Figures~\ref{fig:different-d} and \ref{fig:different-dist}} \label{sec:exactval}
We give the numerical values and standard deviations for Figures~\ref{fig:different-d} and \ref{fig:different-dist} in Tables~\ref{table:diff-d} and \ref{table:diff-dist}, respectively.
\begin{table}[!hbp]
    \centering
\begin{tabular}{|l|l|l|l|l|}
\hline
$N$ & $d_1$                          & $d_2$                         & $d_3$                         & $d_4$                         \\ \hline
$32$    & $0.23(\pm 3.8\times 10^{-4})$  & $0.27(\pm 3.2\times 10^{-4})$ & $0.23(\pm 3.8\times 10^{-4})$ & $0.23(\pm 3.8\times 10^{-4})$ \\ \hline
$64$    & $0.20(\pm 4.5\times 10^{-4})$  & $0.30(\pm 3.0\times 10^{-4})$ & $0.20(\pm 4.5\times 10^{-4})$ & $0.20(\pm 4.5\times 10^{-4})$ \\ \hline
$128$   & $0.17(\pm 2.4 \times 10^{-4})$ & $0.33(\pm 9.9\times 10^{-4})$ & $0.18(\pm 3.6\times 10^{-4})$ & $0.17(\pm 2.4\times 10^{-4})$ \\ \hline
$256$   & $0.14(\pm 3.7\times 10^{-4})$  & $0.35(\pm 6.6\times 10^{-4})$ & $0.19(\pm 1.0\times 10^{-4})$ & $0.14(\pm 1.7\times 10^{-4})$ \\ \hline
$512$   & $0.13(\pm 6.2\times 10^{-4})$  & $0.37(\pm 7.8\times 10^{-4})$ & $0.19(\pm 3.6\times 10^{-4})$ & $0.13(\pm 1.1\times 10^{-4})$ \\ \hline
$1024$  & $0.12(\pm 1.5\times 10^{-4})$  & $0.40(\pm 1.8\times 10^{-4})$ & $0.20(\pm 3.8\times 10^{-4})$ & $0.12(\pm 2.3\times 10^{-5})$ \\ \hline
$2048$  & $0.10(\pm 2.1\times 10^{-4})$  & $0.41(\pm 3.0\times 10^{-4})$  & $0.21(\pm 4.3\times 10^{-4})$ & $0.11(\pm 1.1\times 10^{-4})$ \\ \hline
$4096$  & $0.09(\pm 6.4\times 10^{-4})$  & $0.43(\pm 1.6\times 10^{-3})$ & $0.24(\pm 4.8\times 10^{-4})$ & $0.10(\pm 2.9\times 10^{-4})$ \\ \hline
$8192$  & $0.08(\pm 2.0\times 10^{-4})$  & $0.45(\pm 8.3\times 10^{-4})$ & $0.26(\pm 1.9\times 10^{-4})$ & $0.10(\pm 4.7\times 10^{-4})$ \\ \hline
$16384$ & $0.07(\pm 2.7\times 10^{-4})$  & $0.48(\pm 6.7\times 10^{-4})$ & $0.28(\pm 5.8\times 10^{-4})$ & $0.09(\pm 4.1\times 10^{-4})$ \\ \hline
$32768$ & $0.05(\pm 2.5\times 10^{-4})$  & $0.51(\pm 1.1\times 10^{-3})$ & $0.30(\pm 2.1\times 10^{-4})$ & $0.08(\pm 2.8\times 10^{-4})$ \\ \hline
$65536$ & $0.05(\pm 2.2\times 10^{-4})$  & $0.54(\pm 5.3\times 10^{-4})$ & $0.31(\pm 2.2\times 10^{-4})$ & $0.10(\pm 4.7\times 10^{-4})$ \\ \hline
\end{tabular}
    \caption{Values of $\frac{\expect[T-T^*]}{\expect[T^*]}$ in Figure \ref{fig:different-d}}
    \label{table:diff-d}
\end{table}

\begin{table}[!hbp]
    \centering
\begin{tabular}{|l|l|l|l|l|}
\hline
$N$ & Exponential & Hyper-Exponential  & Bounded Pareto & S \& X \\ \hline
$32$    & $0.23(\pm 3.8\times 10^{-4})$  & $0.84(\pm 2.7\times 10^{-3})$ & $0.49(\pm 1.4\times 10^{-3})$ & $0.55(\pm 8.1\times 10^{-4})$ \\ \hline
$64$    & $0.20(\pm 4.5\times 10^{-4})$  & $0.73(\pm 1.8\times 10^{-3})$ & $0.46(\pm 1.6\times 10^{-3})$ & $0.53(\pm 1.2\times 10^{-3})$ \\ \hline
$128$   & $0.17(\pm 2.4 \times 10^{-4})$ & $0.61(\pm 8.3\times 10^{-4})$ & $0.41(\pm 6.0\times 10^{-4})$ & $0.48(\pm 8.5\times 10^{-4})$ \\ \hline
$256$   & $0.14(\pm 3.7\times 10^{-4})$  & $0.50(\pm 9.5\times 10^{-4})$ & $0.36(\pm 1.2\times 10^{-3})$ & $0.44(\pm 1.8\times 10^{-3})$ \\ \hline
$512$   & $0.13(\pm 6.2\times 10^{-4})$  & $0.42(\pm 4.1\times 10^{-4})$ & $0.31(\pm 2.7\times 10^{-4})$ & $0.40(\pm 4.3\times 10^{-4})$ \\ \hline
$1024$  & $0.12(\pm 1.5\times 10^{-4})$  & $0.34(\pm 6.3\times 10^{-4})$ & $0.27(\pm 2.7\times 10^{-4})$ & $0.34(\pm 5.3\times 10^{-4})$ \\ \hline
$2048$  & $0.10(\pm 2.1\times 10^{-4})$  & $0.27(\pm 4.3\times 10^{-4})$  & $0.23(\pm 4.0\times 10^{-4})$ & $0.28(\pm 9.3\times 10^{-4})$ \\ \hline
$4096$  & $0.09(\pm 6.4\times 10^{-4})$  & $0.19(\pm 9.2\times 10^{-3})$ & $0.19(\pm 2.7\times 10^{-4})$ & $0.23(\pm 8.6\times 10^{-4})$ \\ \hline
$8192$  & $0.08(\pm 2.0\times 10^{-4})$  & $0.12(\pm 4.7\times 10^{-4})$ & $0.13(\pm 4.6\times 10^{-4})$ & $0.18(\pm 7.1\times 10^{-4})$ \\ \hline
$16384$ & $0.07(\pm 2.7\times 10^{-4})$  & $0.07(\pm 9.0\times 10^{-4})$ & $0.10(\pm 4.0\times 10^{-4})$ & $0.12(\pm 5.2\times 10^{-4})$ \\ \hline
$32768$ & $0.05(\pm 2.5\times 10^{-4})$  & $0.03(\pm 4.0\times 10^{-3})$ & $0.03(\pm 2.0\times 10^{-4})$ & $0.08(\pm 3.0\times 10^{-4})$ \\ \hline
$65536$ & $0.05(\pm 2.2\times 10^{-4})$  & $0.01(\pm 2.3\times 10^{-4})$ & $0.02(\pm 1.4\times 10^{-4})$ & $0.06(\pm 1.0\times 10^{-3})$ \\ \hline
\end{tabular}
    \caption{Values of $\frac{\expect[T-T^*]}{\expect[T^*]}$ in Figure \ref{fig:different-dist}}
    \label{table:diff-dist}
\end{table}

\subsection{Delay Scaling when $k = \lfloor \sqrt{N}\rfloor$} \label{sec:newk}
{
In this section, we provide simulation results for the setting where $k = \lfloor \sqrt{N}\rfloor$.  The scalings of probe ratios are the same as in Section~\ref{subsec:scaling}. The results are demonstrated in Figure \ref{fig:newk}, and the numerical values and standard deviations are given in Table \ref{table:newk}.
\begin{figure}[!hbp]
\centering
\scalebox{0.5}{\input{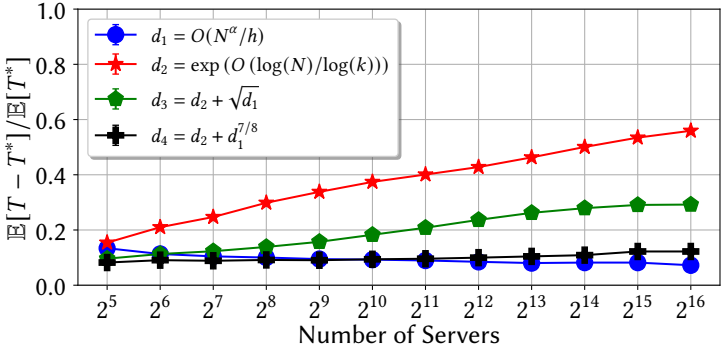}}
\vspace{-0.2in}
\caption{Queueing delays when $k =\lfloor \sqrt{N}\rfloor$ under different probe ratios: $d_1$ is sufficient for convergence to zero queueing delay; $d_1>d_4>d_3>d_2$.}
\label{fig:newk}
\end{figure}
}
\begin{table}[!hbp]
    \centering
\begin{tabular}{|l|l|l|l|l|}
\hline
$N$ & $d_1$                          & $d_2$                         & $d_3$                         & $d_4$                         \\ \hline
$32$    & $0.13(\pm 5.2\times 10^{-4})$  & $0.15(\pm 5.2\times 10^{-4})$ & $0.10(\pm 2.0\times 10^{-4})$ & $0.08(\pm 2.9\times 10^{-4})$ \\ \hline
$64$    & $0.11(\pm 4.2\times 10^{-4})$  & $0.21(\pm 8.1\times 10^{-4})$ & $0.11(\pm 4.1\times 10^{-4})$ & $0.09(\pm 4.1\times 10^{-4})$ \\ \hline
$128$   & $0.10(\pm 7.6 \times 10^{-4})$ & $0.25(\pm 6.5\times 10^{-4})$ & $0.12(\pm 4.8\times 10^{-4})$ & $0.09(\pm 7.2\times 10^{-4})$ \\ \hline
$256$   & $0.10(\pm 3.4\times 10^{-4})$  & $0.30(\pm 4.7\times 10^{-4})$ & $0.14(\pm 7.5\times 10^{-4})$ & $0.09(\pm 3.0\times 10^{-4})$ \\ \hline
$512$   & $0.09(\pm 3.8\times 10^{-4})$  & $0.34(\pm 2.6\times 10^{-4})$ & $0.16(\pm 2.7\times 10^{-4})$ & $0.09(\pm 2.7\times 10^{-4})$ \\ \hline
$1024$  & $0.09(\pm 1.4\times 10^{-4})$  & $0.37(\pm 8.2\times 10^{-4})$ & $0.18(\pm 4.9\times 10^{-4})$ & $0.09(\pm 9.5\times 10^{-5})$ \\ \hline
$2048$  & $0.09(\pm 2.9\times 10^{-4})$  & $0.4(\pm 2.1\times 10^{-4})$  & $0.21(\pm 4.0\times 10^{-4})$ & $0.10(\pm 1.9\times 10^{-4})$ \\ \hline
$4096$  & $0.08(\pm 2.6\times 10^{-4})$  & $0.43(\pm 9.8\times 10^{-4})$ & $0.24(\pm 5.0\times 10^{-4})$ & $0.10(\pm 2.2\times 10^{-4})$ \\ \hline
$8192$  & $0.08(\pm 3.5\times 10^{-4})$  & $0.46(\pm 2.2\times 10^{-4})$ & $0.26(\pm 1.3\times 10^{-4})$ & $0.10(\pm 3.9\times 10^{-4})$ \\ \hline
$16384$ & $0.08(\pm 5.3\times 10^{-4})$  & $0.50(\pm 7.0\times 10^{-4})$ & $0.28(\pm 1.7\times 10^{-4})$ & $0.11(\pm 3.1\times 10^{-4})$ \\ \hline
$32768$ & $0.08(\pm 4.9\times 10^{-4})$  & $0.53(\pm 1.7\times 10^{-4})$ & $0.29(\pm 1.5\times 10^{-4})$ & $0.12(\pm 1.9\times 10^{-4})$ \\ \hline
$65536$ & $0.07(\pm 3.1\times 10^{-4})$  & $0.56(\pm 9.8\times 10^{-4})$ & $0.29(\pm 1.6\times 10^{-4})$ & $0.12(\pm 3.8\times 10^{-4})$ \\ \hline
\end{tabular}
    \caption{Values of  $\frac{\expect[T-T^*]}{\expect[T^*]}$ in Figure \ref{fig:newk}}
    \label{table:newk}
\end{table}


\end{document}